\documentclass[onecolumn,draftclsnofoot,12pt]{IEEEtran}
\usepackage[pdftex]{graphicx}
\usepackage{amsfonts}
\usepackage{amsmath}
\usepackage{amssymb,amsthm}
\usepackage{bbm}
\usepackage{float}
\usepackage{stfloats}
\usepackage{caption}
\usepackage{graphicx}
\usepackage{cite}
\usepackage{tikz}
\usepackage{xcolor}
\usepackage{verbatim}
\graphicspath{{},{../}}
\usepackage[colorlinks=true,linkcolor=black,anchorcolor=black,citecolor=black,filecolor=black,menucolor=black,runcolor=black,urlcolor=black]{hyperref}
\usepackage{multirow}
\usepackage{framed} 
\usepackage{booktabs}
\usepackage{enumitem}
\usepackage{wasysym} 
\usepackage{bbm}
\allowdisplaybreaks
\definecolor{armygreen}{rgb}{0.29, 0.33, 0.13}

\newcommand{\expU}[1]{e^{#1}}

\newcommand{\SINR}{\mathsf{SINR}}

\newcommand{\SThres}{\tau}
\newcommand{\Ball}{\mathcal{B}}
\newcommand{\expects}[2]{\mathbb{E}_{#1}\left[#2\right] }

\newcommand{\fracS}[2]{#1/#2}
\newcommand\expect[1]{\mathbb{E}\left[#1\right]}
\newcommand\prob[1]{\mathbb{P}\left[#1\right]}

\newcommand{\prim}{\mathrm{p}}
\newcommand{\seco}{\mathrm{s}}
\newcommand{\bloc}{\mathrm{b}}
\newcommand{\los}{\mathrm{L}}
\newcommand{\nlos}{\mathrm{N}}
\newcommand{\omni}{\mathrm{omni}}
\newcommand{\noise}{\sigma^2}
\newcommand{\gst}[1]{g_\mathrm{st}\left(#1\right)}
\newcommand{\gsr}[1]{g_\mathrm{sr}\left(#1\right)}
\newcommand{\gpt}[1]{g_\mathrm{pt}\left(#1\right)}
\newcommand{\gpr}[1]{g_\mathrm{pr}\left(#1\right)}
\newcommand{\dd}{\mathrm{d}}
\newcommand{\X}{\mathbf{X}}
\newcommand{\x}{\mathbf{x}}
\newcommand{\Y}{\mathbf{Y}}

\renewcommand{\ast}{a_\mathrm{st}}
\newcommand{\asr}{a_\mathrm{sr}}
\newcommand{\apt}{a_\mathrm{pt}}
\newcommand{\apr}{a_\mathrm{pr}}
\newcommand{\bst}{b_\mathrm{st}}
\newcommand{\bsr}{b_\mathrm{sr}}

\newcommand{\bpr}{b_\mathrm{pr}}
\newcommand{\qst}{q_\mathrm{st}}

\newcommand{\qpr}{q_\mathrm{pr}}

\newcommand{\tbr}[1]{}
\newcommand{\tbrb}[1]{}
\newcommand{\indside}[1]{\mathbbm{1}\left(#1\right)}
\newcommand{\subsectionA}[1]{\textbf{\textit{#1}}:}

\newtheorem{theorem}{Theorem}
\newtheorem{corr}{Corollary}
\newtheorem{lemma}{Lemma}
\newtheorem{remark}{Remark}
\newtheorem{example}{Example}

\IEEEoverridecommandlockouts
\begin{document}
\pagenumbering{gobble}  
\title{Spatially-aware Secondary License Sharing in mmWave Networks}
\author{Shuchi Tripathi, Abhishek K. Gupta \vspace{-.2in}
\thanks{The authors are with the Department of Electrical Engineering, Indian Institute of Technology Kanpur, India. Email: \{shuchi, gkrabhi\}@iitk.ac.in. A part of the paper was presented in \cite{tripathi2025can}.
}}
\maketitle

\begin{abstract}
In this work, we consider a multi-operator mmWave network implementing secondary license sharing (SLS) where a primary license holder leases secondary licenses to secondary users, allowing them to access its licensed spectrum under some pre-defined transmission constraints. The highly directional nature of mmWaves, along with their sensitivity to blockages, naturally confines the interference to/from devices to narrow angular sectors within a certain range around themselves. This motivates us to consider a spatially-aware SLS that determines a secondary link's activity based on its distance/orientation relative to the primary link, as well as blockages around it. By leveraging the tools of stochastic geometry, we develop an analytical framework to design and study such spatially-aware SLS in mmWave networks. Our analysis quantifies the transmission opportunities available to secondary users and the resulting coverage probabilities for both primary and secondary links. We characterize the effect of directionality and blockage conditions, along with transmission restrictions and secondary users' density, on the performance of both operators. Via numerical investigation, we derive various insights. We show that blockage conditions can change the shape of coverage plots and thus affect key conclusions. Further, blockage and directionality can increase the transmission opportunities for secondary users, improving the feasibility and gains of SLS.
\end{abstract}
\section{Introduction}\label{Section:Blockage:SinglePrimary:Introduction}
Owing to high directionality and susceptibility to blockages, mmWave networks exhibit a reduced level of interference \cite{akdeniz2014millimeter, petrov2017interference,tripathi2021millimeter}. As a result, exclusive licensing where operators own exclusive spectrum licenses may lead to severe spectrum underutilization \cite{saleem2014primary, tehrani2016licensed}. Uncoordinated license sharing among operators can increase their coverage rates while reducing license costs without additional overhead, resulting in cheaper services for end users \cite{tehrani2016licensed, jurdi2018modeling, gupta2020spectrum}. This also minimizes entry barriers for new or small-scale operators, which aligns with recent efforts of spectrum regulatory authorities ({\em e.g.} the Federal Communications Commission (FCC) \cite{fccnoi14,fccnoi15}) towards making mmWave spectrum sharing more flexible. Particularly, this enables secondary license sharing (SLS), where primary operators lease secondary licenses for their unused or underutilized licensed spectrum to secondary users under certain restrictions that ensure minimal impact on their quality of service (QoS) \cite{gupta2016gains, gupta2016restricted, bhattacharjee2022cognitive}. It is crucial to understand how to design such restrictions to fully realize the potential of SLS, considering the propagation characteristics of mmWave bands.

\subsectionA{Related Work}\label{Section:Blockage:SinglePrimary:Related Work}
Utilizing stochastic geometry (SG) based models \cite{andrews2016modeling}, {\em e.g.}, the Boolean blockage models and Poisson point process (PPP)  deployments \cite{bai2014analysis,bai2014coverage}, prior works have analyzed the performance of mmWave networks, highlighting the role of directionality and blockages in reducing interference. Hence, spectrum sharing is seen as a natural feature of mmWave networks. In our prior work, we have shown the feasibility of a license-sharing framework in mmWave bands where operators, who own individual licenses to their spectrum, allow each other to share their spectrum without any coordination \cite{gupta2016feasibility}. While such an uncoordinated sharing mechanism can provide some gains in the overall sum rate, it can also degrade the rate of edge users due to a lack of any protection. To bridge this gap, secondary license sharing (SLS) models were proposed, where a licensed operator (termed primary operator) can lease secondary licenses to other operators, allowing them to access its spectrum under specific (pre-defined) constraints. SLS aims to provide reasonable channel access opportunities to secondary devices without compromising the QoS of primary users, thereby increasing effective spectrum utilization. It also provides sufficient flexibility to the spectrum license markets, which have been marked as essential by various regulatory authorities across the globe \cite{fccnoi14,fccnoi15}. To implement these restrictions, secondary operators may require various mechanisms, including cognitive channel sensing (CCS) and access to third-party databases. Hence, it is desirable to implement light and static constraints to avoid heavy overheads that can diminish the gain of such sharing. For example, secondary licenses can introduce an interference-threshold transmit constraint, requiring all secondary transmitters to ensure that their interference at the primary receiver does not exceed the pre-defined threshold.

Authors in \cite{nguyen12} have analyzed the coverage performance of a traditional cognitive wireless network scenario consisting of primary and secondary devices with omnidirectional communication in the absence of blockages. The work derived closed-form expressions for the primary and secondary coverage probabilities under an interference-threshold transmit constraint. However, the work did not include the effects of directional antennas and blockages, which are inherent to mmWave communication. In \cite{gupta2016gains,gupta2016restricted}, we presented an SLS model for mmWave networks under an interference-threshold transmit constraint. The work considered omnidirectional transmission to compute the interference level, resulting in a distance-based restriction which forms a radial protection zone around each primary user. Similarly, the work in \cite{wang2018underlay} studied the performance of cognitive mmWave networks with directional antennas, did not account for the impact of the relative orientation of primary and secondary links on secondary transmission opportunities while computing the secondary coverage. The work failed to capture the correlation between the transmission activity and the interference of a secondary link, caused by its relative orientation, and instead averaged over realized antenna gain, resulting in an effective distance-based restriction. The highly directional nature of mmWaves naturally confines the interference to/from devices to narrow angular sectors \cite{bhattarai2016overview}. In our previous work \cite{tripathi2021coverage, tripathi2024coverage}, we showed that SLS under the interference threshold restriction results in an asymmetric protection zone around the primary receiver. This highlights the importance of incorporating the orientation of secondary users in determining their transmit activities. However, the work did not consider the presence of blockages, an essential feature of mmWave communication systems.

The presence of blockages results in a probabilistic dual slop path-loss \cite{gupta2015sinr} that can significantly change the general system behaviour while introducing new trends {\em e.g.,} loss of SINR invariance, and non-monotonicity of coverage with respect to network density \cite{gupta2020fundamentals}. Due to differences in path-loss, the interference threshold's effect on a secondary transmitter with a line-of-sight (LOS) link to the primary receiver is significantly different from that with a non-LOS (NLOS) primary receiver, causing fundamentally different trade-offs between transmission activity and interference for the two types of secondary transmitters. This adds complexity to the design of the optimal transmit restriction in such systems. Moreover, the presence of blockages can further confine interference from/to a device around itself, thereby creating additional transmission opportunities for secondary users. Therefore, to achieve the full potential of SLS, it is essential to incorporate awareness of their neighbourhoods in determining the transmission opportunities of secondary users. This work aims to bridge this gap by introducing spatially-aware SLS, where a secondary user’s access depends on its interference at the primary receiver, which includes the impact of antenna directionality, the relative distance/orientation of the primary link, as well as the blockage states of corresponding links. Therefore, studying the performance of such spatially-aware SLS in the presence of blockages and directional communication is the main focus of this work.

\subsectionA{Contributions} \label{Section:Blockage:SinglePrimary:Contributions}
In the paper, we develop an SG-based analytical framework for a mmWave network of a primary operator implementing SLS, which allows it to lease its licensed spectrum to a secondary operator via secondary licenses that restrict the secondary operator to follow a pre-specified interference-threshold criterion. In particular, a secondary transmitter is allowed to transmit only if its interference at any of the primary receivers is below a certain threshold. Note that the secondary interference depends on the directivity gain and blockage state of the link. Hence, we aim to design a spatially-aware SLS that determines the secondary link's activity based on its antenna directionality, relative distance/orientation of the primary link, as well as the blockage states of their corresponding links to maximize the secondary transmission activity as much as possible without affecting primary QoS. The work aims to understand the benefits that spatially-aware SLS can provide and understand its impact on the coverage in the presence of directionality and blockages. The work has the following key contributions.  
\begin{itemize}[leftmargin=6pt]
\item We develop an analytical framework to study the impact of blockages and directionality on the performance of spatially-aware SLS in mmWave networks. We derive the medium access probability (MAP) of each secondary transmitter, followed by the activity factor (AF) of the secondary network, to quantify the transmission opportunities for secondary users. We have shown that blockages improve the secondary transmission opportunities. 	
\item We further derive the coverage probability of primary and secondary links, highlighting the roles of various system parameters. Further, we observe that the presence of moderate blockages can improve the coverage of both primary and secondary links, especially when directionality is present. We also provide simplified coverage expressions for a few special cases.
\item Using derived results, we present various design insights. We show that the presence of blockages can change the fundamental shape of coverage vs threshold curves, along with key conclusions, and hence alters how various system parameters affect the performance. In particular, when both LOS and NLOS links are present, the observed coverage curves are significantly different from the case when only one type of link is present. This may be due to the fact that the interference threshold's effect on a secondary transmitter with a LOS link to the primary receiver is significantly different than that with a NLOS primary receiver, causing fundamentally different trade-offs between transmission activity and interference for the two types of secondary transmitters. We observe that whether blockages improve or degrade the secondary performance depends on the level of interference restriction put on SLS and the availability of the LOS serving link. We show that the interference threshold is crucial for network performance, and hence it must be selected carefully while accounting for directionality and blockage parameters. 
\end{itemize}

\noindent\textbf{Notation}: \label{Section:Blockage:SinglePrimary:Notation}
For a location $\x$,  $x=||\x||$ and  $\angle \x$ denotes the angle of vector $\x$ with respect to the X-axis. $\Ball(\x,r)$ denotes a ball with radius $r$ and center $\x$. $x\angle \theta$ denotes the location with radial distance $x$ and angle $\theta$. 
$ \mathrm{p}, \,\mathrm{s}, \,\mathrm{t}$ and $ \mathrm{r} $ stand for primary, secondary, transmitter, and receiver, respectively. $\Gamma(a, b) = \int_{0}^{b} t^{a - 1} e^{-t}\, \mathrm{d}t $ denotes the lower incomplete Gamma functions.

\section{System Model} \label{Section:Blockage:SinglePrimary:SystemModel}
In this paper, we study a multi-operator mmWave network implementing SLS where the primary operator (the one owning a license for a particular band) leases this licensed band to secondary operators via secondary licenses. 

\subsectionA{Network model} \label{Section:Blockage:SinglePrimary:SystemModel:NetworkModel}
We consider a single primary link with $\X_{\prim\mathrm{0}}$ and $\Y_{\prim\mathrm{0}}$ denoting the locations of the primary transmitter and receiver, respectively. Without losing generality, we assume that the primary link, denoted by $\X_{\prim\mathrm{0}}-\Y_{\prim\mathrm{0}}$, with link-length $r_\prim$ is aligned with the X-axis such that the primary receiver location is fixed at the origin \textit{i.e.} $\Y_{\prim\mathrm{0}} = \mathbf{o}$ (see Fig. \ref{fig:Blockage:SinglePrimary:System-Model}(a)). The network consists of many secondary links with their transmitters distributed as a 2D-homogeneous PPP $\Phi_{\mathrm{s}}$ with density $\lambda_{\mathrm{s}}$ \cite{andrews2016modeling, andrews2023introduction}. Here, $\Phi_{\mathrm{s}} = \{ \X_{\mathrm{s}i} \}$, where $\X_{\mathrm{s}i} = x_{\seco i} \angle{\theta_{\seco i}}$ is the location of $i^\mathrm{th}$ secondary transmitter with its corresponding receiver is located at $\Y_{\seco i}$, having a link-length of $r_{\mathrm{s}i}$. Here, $\theta_{\mathrm{s}i}$ and $\omega_{\mathrm{s}i}$ represent the angular direction of the $i^\mathrm{th}$ secondary transmitter and receiver in the anti-clockwise direction from the X-axis, respectively (see Fig. \ref{fig:Blockage:SinglePrimary:System-Model}(a)). Depending on the specific application, $\omega_{\mathrm{s}i}$ and $r_{\mathrm{s}i}$ can either be fixed or treated as random variables (RV) with some distribution \textit{e.g.} uniform. Thus, the marked PPP $\left\{\X_{\mathrm{s}i}, \, \left(\omega_{\mathrm{s}i}, \, r_{\mathrm{s}i} \right) \right\}$ provides the complete information about the locations of secondary transmitter and receiver pairs. We further consider that each secondary receiver is oriented differently with the fixed distance from its assigned transmitters. Alternatively, $\omega_{\mathrm{s}i}$ is a RV uniformly distributed between 0 and $2\pi$, whereas the values of $r_{\mathrm{s}i}$ are fixed {\em i.e} $r_{\mathrm{s}i} = r_\seco$. However, the analysis that is being provided can be trivially extended to the general distribution of $r_{\mathrm{s}i}$ and $\omega_{\mathrm{s}i}$. Let $p_\prim$ and $p_\seco$ be the primary and secondary transmit power.
\begin{figure}[t!]
\centering
{\includegraphics[trim = 5 5 5 5, clip, scale=0.15]{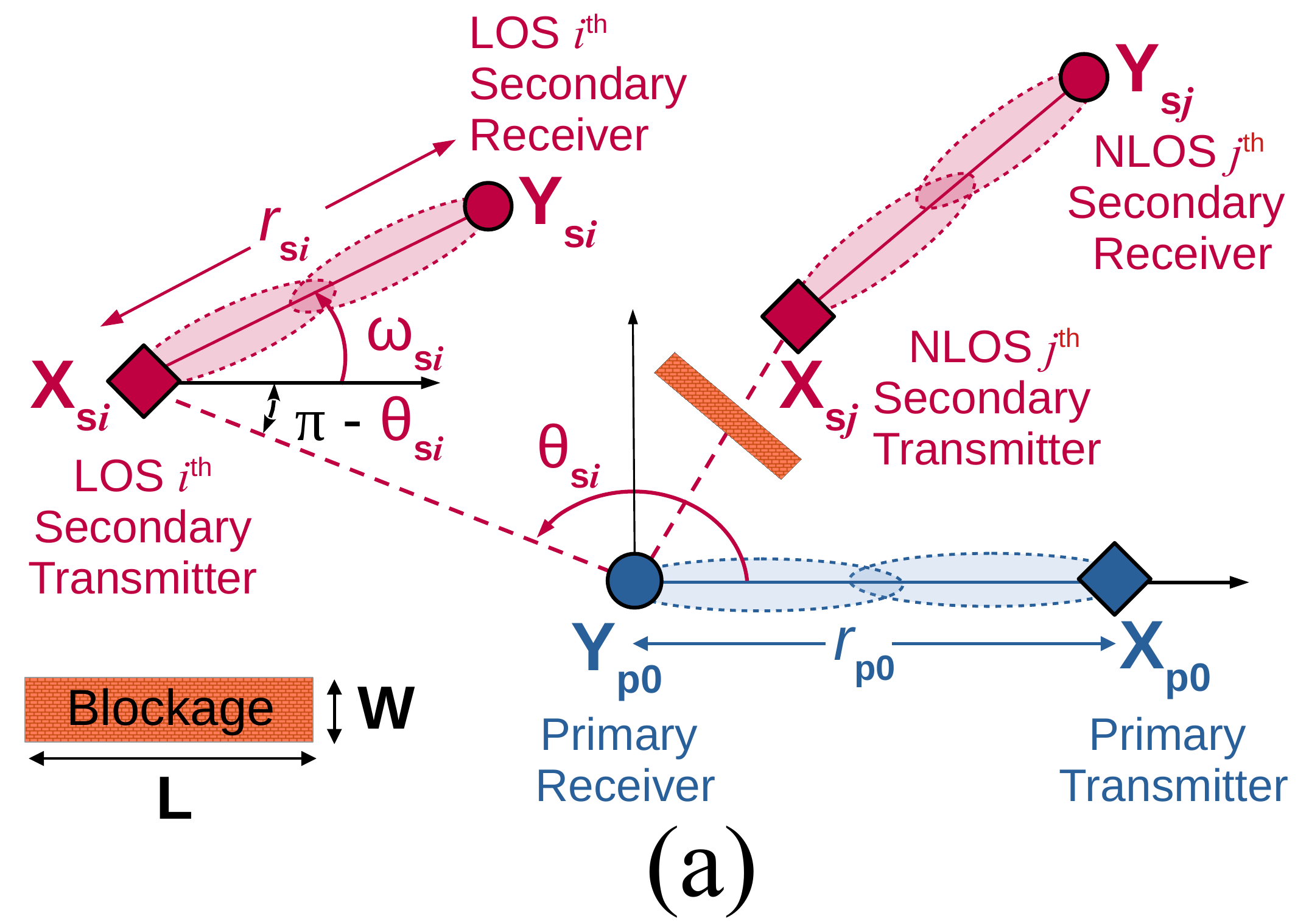} \label{fig:Blockage:SinglePrimary:SimpleCase} }
\hspace*{-0.1in}
{\includegraphics[trim = 5 5 5 5, clip, scale=0.15]{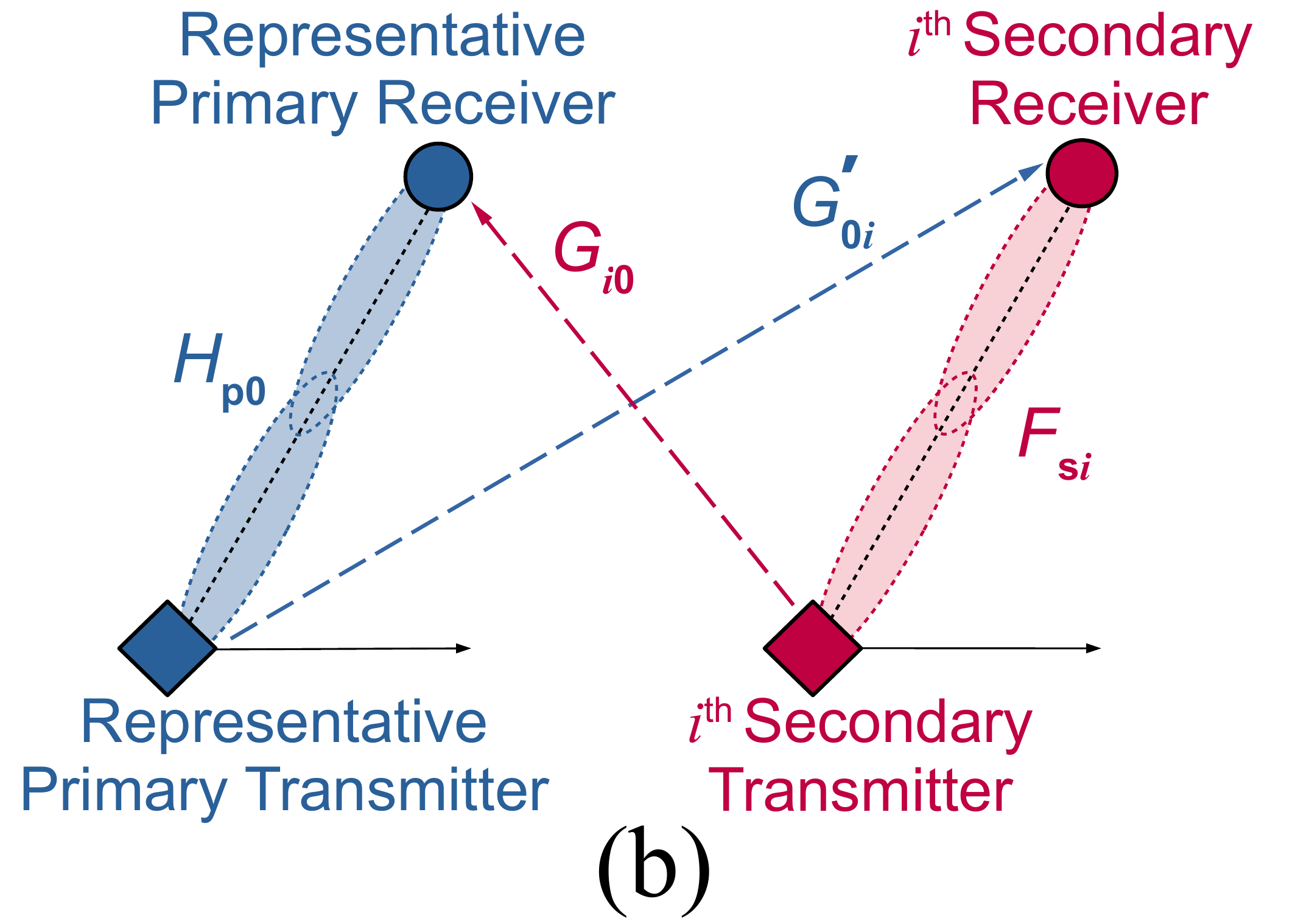}\label{fig:Blockage:SinglePrimary:Fading}}
\caption{An illustration of the system model for analysis of (a) the primary coverage network and (b) the channel fading model.}
\label{fig:Blockage:SinglePrimary:System-Model}
\vspace{-.1in}
\end{figure}

\subsectionA{Fading} \label{Section:Blockage:SinglePrimary:SystemModel:Fading}
The co-link fading coefficients for the primary link and the $i^{\mathrm{th}}$ secondary link are denoted by $H_{\prim\mathrm{0}}$ and $F_{\mathrm{s}i}$, respectively while the cross-link fading coefficients for the $i^{\mathrm{th}}$-secondary-transmitter-to-primary-receiver and the primary-transmitter-to-$i^{\mathrm{th}}$-secondary-receiver are represented by $G_{i\mathrm{0}}$ and $G^{'}_{\mathrm{0}i}$, respectively (see Fig. \ref{fig:Blockage:SinglePrimary:System-Model}(b)). $F^{'}_{i\mathrm{0}}$ denotes the cross-link fading coefficient for the $i^{\mathrm{th}}$-secondary-transmitter-to-$\mathrm{0}^{\mathrm{th}}$-secondary-receiver. We consider Rayleigh fading for all links, primarily to maintain analytical tractability, rather than obscure insights by increasing complexity. Past studies have shown that fading assumptions do not affect insights. Nonetheless, the results obtained in this study may be readily modified to take into account other fading distributions \cite{gupta2020does, ghatak2021deploy}.

\subsectionA{Directional communication} \label{Section:Blockage:SinglePrimary:SystemModel:DirectionalCommunication}
The antenna beam pattern of the device $v$ of type $u$ is denoted by $g_{uv}(\theta)$ with each device equipped with $M_{uv}$ antenna elements. Here, $u \in \{ \mathrm{p}, \mathrm{s} \} $, $ v \in \{ \mathrm{t}, \mathrm{r} \} $, and $\theta \in [-\pi, \pi)$ represents the angle with respect to the beam orientation. Thus, all devices have directional antennas. Further, we will occasionally consider the following two special beam patterns to simplify the results.
\subsubsection{Sectorized beam pattern} 
\label{Section:Blockage:SinglePrimary:SystemModel:DirectionalCommunication:SectorBeam}
The beam pattern of a directional antenna system can be well approximated by a sectorized beam pattern \cite{andrews2016modeling, andrews2023introduction}, defined as 
\begin{align}
g_{uv}(\theta) = \begin{cases}
a_{uv} \qquad \text{if}\ \lvert \theta \rvert \leq \phi_{uv}/2\\
b_{uv} \qquad \text{if}\ \lvert \theta \rvert > \phi_{uv}/2,
\end{cases} 
\label{eq:GainApproximation}
\end{align} 
where $\phi_{uv} $ is the beamwidth, $a_{uv}$ is the main lobe gain and $b_{uv}$ is the side lobe gain. Note that $\phi_{uv}$ is a reciprocal function of $M_{uv}$. For an uniform linear array (ULA), $a_{uv} = M_{uv}$ and $\phi_{uv} = \kappa/M_{uv}$ for some constant $\kappa$. The probability of having $a_{uv}$ gain at a random direction is $q_{uv} = {\phi_{uv}}/{2\pi} = \kappa/(2\pi M_{uv})$ with $a_{uv}q_{uv} + b_{uv}(1 - q_{uv}) = 1$ which ensure that the radiated power is normalized. Let $\kappa'= \kappa/2\pi$ then $b_{uv} = (1 - \kappa')/(1 - \kappa'/M_{uv})$.
\subsubsection{Ideal beam pattern} 
\label{Section:Blockage:SinglePrimary:SystemModel:DirectionalCommunication:IdealBeam}
A special case of the sectorized beam pattern with $ b_{uv} = 0 $. We will use this approximation occasionally to draw direct insights about the advantages of spatially-aware SLS over omni-SLS (defined in the special cases of Section \ref{Section:Blockage:SinglePrimary:SystemModel}). 

\subsectionA{Blockage model} \label{Section:Blockage:SinglePrimary:SystemModel:BlockageModel}
Similar to \cite{gupta2016feasibility,gupta2016gains}, we assume a blocking model where the central points of all rectangular blockages having an average length $L$ and an average width $W$ are distributed according to a homogeneous PPP $\Phi_\bloc$ with density $\lambda_\bloc$. Thus, the number of blockages crossing the $\mathrm{path}:\X_{ui}\to\Y_{\tilde{u}j}$ is a Poisson RV  with mean $\mu z_{ij} + p$ where $z_{ij} = \|\X_{ui} - \Y_{\tilde{u}j}\|$, $\mu = 2 \lambda_\bloc (\mathbb{E}[W] + \mathbb{E}[L]) / \pi$ and $p = \lambda_\bloc \mathbb{E}[W] \mathbb{E}[L]$ \cite{bai2014analysis,bai2014coverage}. Note that $\mu$ and $p$ are blockage parameters  \cite{bai2014analysis,bai2014coverage}. For simplicity, we assume that any device of type $u,\tilde{u} \in \{\mathrm{p}, \, \mathrm{s}\}$, the $\mathrm{path}:\X_{ui}\to\Y_{\tilde{u}j}$ is blocked independently. The LOS probability of the $\mathrm{path}:\X_{ui}\to\Y_{\tilde{u}j}$ is given as $p_{\los} (z) = e^{-(\mu z + p)}$, where $z$ is link distance. Let us define the blockage state $T_{ui \to \tilde{u}j}=\indside{\mathrm{path}:\X_{ui}\to\Y_{\tilde{u}j} \text{ is }  \mathrm{LOS}}$ where $\indside{\cdot}$ is an indicator function. Clearly, $T_{ui \to \tilde{u}j}$ is a Bernoulli RV which takes two values $1$ and $0$ with probability $p_{\los} (z)$ and $1 - p_{\los} (z)$, respectively. For a path of length $z$ and blockage state $T_{ui \to \tilde{u}j} $, its path-loss is given as
\begin{align}
\ell_{T_{ui \to \tilde{u}j} } (z) = \begin{cases}
\ell_1 (z) = C_\los z^{-\alpha_\los}  &\text{when } T_{ui \to \tilde{u}j}  = 1 \\
\ell_0 (z) = C_\nlos z^{-\alpha_\nlos} &\text{when } T_{ui \to \tilde{u}j}  = 0
\end{cases},
\label{eq:Blockage:SinglePrimary:PathLoss}
\end{align}
where $\alpha_T$ and $C_T$ with $T = \{\los,\nlos\}$ denote the path-loss exponent and near-field constant, respectively.

\subsectionA{Spatially-aware SLS} \label{Section:Blockage:SinglePrimary:SystemModel:CognitiveCommunication}
The SLS imposes a spatially-aware restriction allowing a secondary device to transmit only if its interference at the primary receiver, given its directional gain and blockage state, is less than $\rho$. $\rho$ can be seen as the primary-transmit-protection-threshold and can usually be pre-specified in the secondary licenses. The received power at the primary receiver due to the $i^{\mathrm{th}}$ secondary transmitter is
\begin{align}
P_{T_{\seco i \to \prim0}} &= p_\seco \, \kappa_{i\mathrm{0}} \, G_{i\mathrm{0}} \, \ell_{T_{\seco i \to \prim\mathrm{0}}} (x_{\seco i}),
\label{eq:Blockage:SinglePrimary:RxPower}
\end{align}
where $\kappa_{i\mathrm{0}} = g_{\mathrm{pr}} ( \theta_{\mathrm{s}i}) g_{\mathrm{st}} ( \theta_{\mathrm{s}i} - \pi - \omega_{\mathrm{s}i})$. Here, $g_{\mathrm{pr}}$ and $ g_{\mathrm{st}}$ are the antenna patterns of the primary receiver and secondary transmitter, respectively. Let $P^\los_{\seco i \to \prim0}$ and $P^\nlos_{\seco i \to \prim0}$ denote the received power of $P_{T_{\seco i \to \prim0}}$ when $T_{\seco i \to \prim0}$ is 1 and 0, respectively. 
Now, the $i^{\mathrm{th}}$ secondary transmitter at location $\X_{\mathrm{s}i}$ will transmit only if $P_{T_{\seco i \to \prim0}}$ is less than $\rho$. Let us denote this event by $E_{i}$ \textit{i.e.} $E_{i} = \left \{ P_{T_{\seco i \to \prim0}} < \rho \right \} = \left \{p_\seco \kappa_{i\mathrm{0}} G_{i\mathrm{0}} \ell_{T_{\seco i \to \prim\mathrm{0}}} (x_{\seco i})  < \rho \right \}$. If $U_{i}$ be an indicator for the occurrence of event $E_{i}$, then
\begin{align}
&U_{i} = \mathbbm{1} \left (p_\seco \, \kappa_{i\mathrm{0}} \, G_{i\mathrm{0}} \, \ell_{T_{\seco i \to \prim\mathrm{0}}} (x_{\seco i})  < \rho \right ). 
\label{eq:Blockage:SinglePrimary:Indicator}
\end{align}

The $\mathrm{path}:\X_{\seco i}\to\Y_{\prim0}$ can either be LOS or NLOS, therefore the indicator $U_{i}$ in \eqref{eq:Blockage:SinglePrimary:Indicator} can also be written as 
\begin{align*}
U_{i} = \begin{cases}
U^\los_{i} = \mathbbm{1} \big(P^\los_{\seco i \to \prim\mathrm{0}} < \rho \big) &\text{when } T_{\seco i \to \prim0} = 1 \\
U^\nlos_{i} = \mathbbm{1} \big(P^\nlos_{\seco i \to \prim\mathrm{0}} < \rho \big) &\text{when } T_{\seco i \to \prim0} = 0
\end{cases}. 
\end{align*}

\subsectionA{Special cases} 
\label{Section:Blockage:SinglePrimary:SystemModel:BlockageModel-SpecialCases}
We will occasionally consider the following special cases to simplify results.
\begin{enumerate}[leftmargin=10pt]
\item{\em The zero blockage loss (ZBL) scenario} 
\label{Section:Blockage:SinglePrimary:SystemModel:BlockageModel:ZBL}
refers to an absence of any blockage resulting a LOS-only scenario with $p_{\los} (z) = 1$ and $P_{T_{\mathrm{s}i \to \mathrm{p0}}} \! = \! P^\los_{\mathrm{s}i \to \mathrm{p0}}$ for all links .
\item{\em The high blockage loss (HBL) scenario} 
\label{Section:Blockage:SinglePrimary:SystemModel:BlockageModel:HBL}
refers to the case where all links are blocked resulting a NLOS-only scenario with $p_{\los} (z) = 0$ and $P_{T_{\mathrm{s}i \to \mathrm{p0}}} \! = \! P^\nlos_{\mathrm{s}i \to \mathrm{p0}}$ for all links.
\item{\em The NLOS in outage (NOL) scenario} 
\label{Section:Blockage:SinglePrimary:SystemModel:BlockageModel:NOL}
refers to case where NLOS links are in complete outage {\em i.e.,} $C_\nlos = 0$. Hence, $U_{i} = U^\los_{i}$ when $T_{\seco i \to \prim0} = 1$ and $U_{i} = 1$ when $T_{\seco i \to \prim0} = 0$. 
\item {\em Omni-SLS} 
refers to the use of omni-directional antennas with $g_{uv}(\theta) = 1,\,\forall \,\theta$.  Here, $P_{T_{\seco i \to \prim\mathrm{0}}}$ and $U_{i}$ are given as \eqref{eq:Blockage:SinglePrimary:RxPower} and \eqref{eq:Blockage:SinglePrimary:Indicator}, respectively with $\kappa_{i\mathrm{0}} = 1$. 
\end{enumerate}

\section{Analysis} \label{Section:Blockage:SinglePrimary:Analysis}
In this section, we analyze the effect of directionality and blockages on the transmission opportunities for secondary users and the coverage of primary/secondary links to understand benefits that spatially-aware SLS can provide.

\subsection{Medium access probability (MAP)} \label{Section:Blockage:SinglePrimary:Analysis:MAP}
For the $i^\mathrm{th}$ secondary transmitter at location $\X_{\mathrm{s}i}$, its MAP is defined as the probability that it can access the channel {\em i.e.} $p_{\mathrm{m}i} = \mathbb{P} (U_i = 1)$. Using \eqref{eq:Blockage:SinglePrimary:Indicator}, we can write $p_{\mathrm{m}i} = \mathbb{P} (G_{i\mathrm{0}} < (\rho/p_\seco)/$ $(\kappa_{i\mathrm{0}} \ell_{T_{\seco i \to \prim\mathrm{0}}} (x_{\seco i})))$. Using $G_{i\mathrm{0}} \sim \exp(1)$ and substituting $\ell_{T_{\seco i \to \prim\mathrm{0}}}$ from \eqref{eq:Blockage:SinglePrimary:PathLoss} will give its MAP as 
\begin{align*}
p_{\mathrm{m}i} = \begin{cases}
1 - p_\los (x_{\seco i}) e^{ - \left({\kappa_\los}/{\kappa_{i\mathrm{0}}}\right) x_{\seco i}^{\alpha_\los}} \!\! &\text{when } T_{\seco i \to \prim0} = 1 \\
1 - p_\nlos (x_{\seco i}) e^{ - \left({\kappa_\los}/{\kappa_{i\mathrm{0}}}\right) x_{\seco i}^{\alpha_\los}} \!\! &\text{when } T_{\seco i \to \prim0} = 0
\end{cases}. 
\end{align*}

Averaging over the blockages gives the MAP as stated in the following lemma.

\begin{lemma} \label{Lemma:Blockage:SinglePrimary:MAP}
The MAP of the $i^{\mathrm{th}}$ secondary transmitter at location $\X_{\mathrm{s}i} = x_{\mathrm{s}i} \angle{ \theta_{\mathrm{s}i} }$ is given as
\begin{align}
p_{\mathrm{m}i} = 1 - \sum\nolimits_{T = \{\los,\nlos\}} p_T (x_{\seco i}) \exp \left( - \left({\kappa_T}/{\kappa_{i\mathrm{0}}}\right) x_{\seco i}^{\alpha_T} \right), 
\label{eq:Blockage:SinglePrimary:MAP}
\end{align}
where $p_\los (x_{\seco i}) = \exp(-\mu x_{\seco i} - p)$, $p_\nlos (x_{\seco i}) = 1 - p_\los (x_{\seco i})$, $\kappa_T = (\rho/p_\seco)/(C_T)$ and $\kappa_{i\mathrm{0}} = g_{\mathrm{pr}} ( \theta_{\mathrm{s}i}) g_{\mathrm{st}} ( \theta_{\mathrm{s}i} - \pi - \omega_{\mathrm{s}i})$. 
\end{lemma}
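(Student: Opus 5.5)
The plan is to condition on the blockage state $T_{\seco i\to\prim0}$ of the path from the $i^{\mathrm{th}}$ secondary transmitter to the primary receiver, compute the conditional MAP from the Rayleigh fading, and then uncondition using the LOS probability. We treat $\x_{\seco i}$ and $\omega_{\seco i}$ as fixed, so $\kappa_{i\mathrm{0}}=g_{\mathrm{pr}}(\theta_{\seco i})g_{\mathrm{st}}(\theta_{\seco i}-\pi-\omega_{\seco i})$ is deterministic. By the model assumptions, $G_{i\mathrm{0}}\sim\exp(1)$ is independent of $T_{\seco i\to\prim0}$, and $T_{\seco i\to\prim0}$ is Bernoulli with $\mathbb{P}(T_{\seco i\to\prim0}=1)=p_\los(x_{\seco i})=e^{-\mu x_{\seco i}-p}$.

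\emph{Step 1 (conditional MAP).} Fix $T\in\{\los,\nlos\}$, with $T=\los$ corresponding to $T_{\seco i\to\prim0}=1$ and path-loss $\ell_1$, and $T=\nlos$ to $T_{\seco i\to\prim0}=0$ and path-loss $\ell_0$. From \eqref{eq:Blockage:SinglePrimary:Indicator} and \eqref{eq:Blockage:SinglePrimary:PathLoss},
\begin{align*}
\mathbb{P}(U_i=1\mid T)=\mathbb{P}\left(G_{i\mathrm{0}}<\frac{\rho/p_\seco}{\kappa_{i\mathrm{0}}C_T x_{\seco i}^{-\alpha_T}}\right)=1-\exp\left(-\frac{\kappa_T}{\kappa_{i\mathrm{0}}}x_{\seco i}^{\alpha_T}\right),
\end{align*}
with $\kappa_T=(\rho/p_\seco)/C_T$. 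This follows directly from the exponential CDF.

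\emph{Step 2 (unconditioning).} By the law of total probability,
\begin{align*}
p_{\mathrm{m}i}=p_\los(x_{\seco i})\,\mathbb{P}(U_i=1\mid\los)+p_\nlos(x_{\seco i})\,\mathbb{P}(U_i=1\mid\nlos),
\end{align*}
where $p_\nlos=1-p_\los$. Since $p_\los+p_\nlos=1$, the constant terms collect into a leading $1$, and the remainder is $-\sum_T p_T(x_{\seco i})\exp(-(\kappa_T/\kappa_{i\mathrm{0}})x_{\seco i}^{\alpha_T})$. This is \eqref{eq:Blockage:SinglePrimary:MAP}.

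The only point needing care is an edge case rather than a genuine obstacle: when $\kappa_{i\mathrm{0}}=0$, for instance in the ideal beam pattern with $b_{uv}=0$, the exponent is interpreted as $-\infty$. The transmitter then never violates the threshold and $p_{\mathrm{m}i}=1$, which agrees with the formula under the convention $e^{-\infty}=0$. The NOL case is similar: $C_\nlos=0$ gives $\kappa_\nlos=\infty$, so the NLOS term vanishes, consistent with $U_i=1$ under NLOS. I would also note that the casewise display preceding the lemma appears to contain typos, since its NLOS branch should carry $\kappa_\nlos$ and $\alpha_\nlos$. The proof works directly from the conditional probabilities in Step 1 rather than from that display.
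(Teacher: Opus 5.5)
Your proof is correct and follows the paper's own argument: condition on the blockage state of the $\X_{\seco i}\to\Y_{\prim0}$ path, use the exponential CDF of $G_{i\mathrm{0}}$, and then average over the Bernoulli LOS state. Your Step~1 conditional probability is the right one. The casewise display preceding the lemma is indeed garbled, and it also carries spurious $p_T$ factors. Your edge-case remarks for $\kappa_{i\mathrm{0}}=0$ and $C_\nlos=0$ agree with the paper's ideal-beam corollary and NOL remark.
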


\begin{remark} \label{Remark:Blockage:SinglePrimary:MAP:ZBL:HBL:NOL}
For the ZBL, HBL and NOL scenarios, MAP of the $i^{\mathrm{th}}$ secondary transmitter in \eqref{eq:Blockage:SinglePrimary:MAP} can be simplified as $p_{\mathrm{m}i,\mathrm{ZBL}} =$ $1 - \exp( - (\kappa_\los/\kappa_{i\mathrm{0}}) x_{\seco i}^{\alpha_\los})$, $p_{\mathrm{m}i,\mathrm{HBL}} =$ $1 - \exp( - ({\kappa_\nlos}/{\kappa_{i\mathrm{0}}}) x_{\seco i}^{\alpha_\nlos})$ and $p_{\mathrm{m}i,\mathrm{NOL}} = 1 - p_\los (x_{\seco i}) \exp(- ({\kappa_\los}/{\kappa_{i\mathrm{0}}}) x_{\seco i}^{\alpha_\los}) $, respectively.
\end{remark}

\begin{remark} \label{Remark:Blockage:SinglePrimary:MAP-Omni}
Note that MAP of the $i^{\mathrm{th}}$ secondary transmitter under omni SLS is given by \eqref{eq:Blockage:SinglePrimary:MAP} with $\kappa_{i\mathrm{0}} = 1$.
\end{remark} 

\begin{corr} \label{Corollary:Blockage:SinglePrimary:MAP-ideal}
The MAP in \eqref{eq:Blockage:SinglePrimary:MAP} under ideal beam pattern with zero side-lobe gain assumption is simplified as $p_{\mathrm{m}i,\mathrm{ideal}}$
\begin{align*}
= \begin{cases}
1 \! - 
\!\!
\sum\nolimits_{T} 
p_T (x_{\seco i}) \, e^{- \frac{\kappa_T x_{\seco i}^{\alpha_T}}{\apr\ast}}  &\text{if } \,
\begin{aligned} 
&\lvert \theta_{\mathrm{s}i} \rvert \leq \phi_{\mathrm{pr}}/2, \\ 
&\lvert \theta_{\mathrm{s}i} - \pi - \omega_{\mathrm{s}i} \rvert \leq \phi_{\mathrm{st}}/2 
\end{aligned} \\
1 \ &\text{otherwise}
\end{cases}.
\end{align*}
\end{corr}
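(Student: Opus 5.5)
The corollary follows by specializing Lemma~\ref{Lemma:Blockage:SinglePrimary:MAP} to the ideal beam pattern, i.e.\ the sectorized pattern \eqref{eq:GainApproximation} with $b_{uv}=0$. Under this pattern the combined gain $\kappa_{i\mathrm{0}}=g_{\mathrm{pr}}(\theta_{\mathrm{s}i})\,g_{\mathrm{st}}(\theta_{\mathrm{s}i}-\pi-\omega_{\mathrm{s}i})$ takes only two values. It equals $\apr\ast$ when both $\lvert\theta_{\mathrm{s}i}\rvert\le\phi_{\mathrm{pr}}/2$ and $\lvert\theta_{\mathrm{s}i}-\pi-\omega_{\mathrm{s}i}\rvert\le\phi_{\mathrm{st}}/2$ hold; the second angle is taken modulo $2\pi$ into $[-\pi,\pi)$, as in the definition of the beam pattern. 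In every other case at least one factor is a side-lobe gain equal to zero, so $\kappa_{i\mathrm{0}}=0$. The proof is therefore a split into these two cases.

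\emph{Main-lobe case.} Substitute $\kappa_{i\mathrm{0}}=\apr\ast$ directly into \eqref{eq:Blockage:SinglePrimary:MAP}. Each exponent becomes $-\kappa_T x_{\seco i}^{\alpha_T}/(\apr\ast)$, which is exactly the first branch of the claimed expression.

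\emph{Side-lobe case.} The formula \eqref{eq:Blockage:SinglePrimary:MAP} contains $\kappa_T/\kappa_{i\mathrm{0}}$, so I would not substitute $\kappa_{i\mathrm{0}}=0$ into it. I would instead return to the definition $p_{\mathrm{m}i}=\mathbb{P}(U_i=1)$ with $U_i$ given by \eqref{eq:Blockage:SinglePrimary:Indicator}. When $\kappa_{i\mathrm{0}}=0$, the received power $P_{T_{\seco i\to\prim0}}=p_\seco\kappa_{i\mathrm{0}}G_{i\mathrm{0}}\ell_T(x_{\seco i})$ is $0$ almost surely, whatever the fading and the blockage state. Since $\rho>0$, we get $0<\rho$, so $U_i=1$ almost surely and $p_{\mathrm{m}i}=1$. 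As a consistency check, the same value appears as the limit $\kappa_{i\mathrm{0}}\to0^+$ of \eqref{eq:Blockage:SinglePrimary:MAP}: every exponential tends to zero, and the weights $p_\los+p_\nlos$ sum to one.

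The only real obstacle is this degenerate $\kappa_{i\mathrm{0}}=0$ case. Going back to the indicator avoids the division by zero, and it also handles both blockage states together, so no averaging over blockages is needed in that branch.
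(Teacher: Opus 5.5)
Your argument is correct and follows the paper's proof. Both set $b_{\mathrm{pr}}=b_{\mathrm{st}}=0$, so that $\kappa_{i\mathrm{0}}=a_{\mathrm{pr}}a_{\mathrm{st}}$ when both main-lobe conditions hold and $\kappa_{i\mathrm{0}}=0$ otherwise, and then read off the two branches from Lemma~\ref{Lemma:Blockage:SinglePrimary:MAP}. The one difference is in the $\kappa_{i\mathrm{0}}=0$ branch: the paper substitutes zero into $\kappa_T/\kappa_{i\mathrm{0}}$, whereas you return to the indicator $U_i$, which justifies a step the paper leaves implicit.
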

\begin{IEEEproof}
For ideal beam pattern, $b_{\mathrm{pr}} = b_{\mathrm{st}} = 0 $. Therefore, $\kappa_{i\mathrm{0}} = a_{\mathrm{pr}} a_{\mathrm{st}}$ when $\lvert \theta_{\mathrm{s}i} \rvert \leq \phi_{\mathrm{pr}}/2, \lvert \theta_{\mathrm{s}i} - \pi - \omega_{\mathrm{s}i} \rvert \leq \phi_{\mathrm{st}}/2$ and $0$ otherwise. Substituting $\kappa_{i\mathrm{0}}$ in \eqref{eq:Blockage:SinglePrimary:MAP} gives the desired result.
\end{IEEEproof}

\subsection{Activity factor (AF)} \label{Section:Blockage:SinglePrimary:Analysis:AF}
To quantify the transmission opportunity of the secondary network within a certain distance $R$ around the primary, we define a metric AF as the ratio of the active to the total number of secondary transmitters inside a region $\mathcal{B} (0, R)$ {\em i.e.} 
\begin{align}
\eta_\seco = \frac{1}{\pi \lambda_\seco R^2} \, \mathbb{E} \Big[ \sum\nolimits_{\X_{\mathrm{s}i} \in \Phi_\seco \cap \mathcal{B} (0, R)} U_i \Big],
\label{eq:Blockage:SinglePrimary:AF:Definition}
\end{align} 
where $\mathcal{B} (0, R)$ denotes the region of interest. AF variation with $R$ reflects the effect of primary operator on the secondary activity as the distance from the primary receiver increases. It is given in the following theorem (see Appendix \ref{thrm:proof:Blockage:AF} for proof). 

\begin{theorem}\label{Theorem:Blockage:SinglePrimary:AF}
The AF of the secondary network inside the region of interest $\mathcal{B}(0,R)$ is given as 
\begin{align}
&\!
\eta_\seco = 1 - \!\frac{1}{\pi R^2} 
\! 
\int_0^{2\pi}
\!\!
\mathbb{E}_{\omega_\seco} \bigg[ \int_0^{R} 
\!\! 
\sum_T
p_T (x_\seco) \, e^{ - \left({\kappa_T}/{\kappa_\mathrm{0}}\right) x_\seco^{\alpha_T}} x_\seco \dd x_\seco \bigg] \dd \theta_\seco \nonumber \\
&= 1 - \!\frac{1}{\pi R^2} 
\!
\int_0^{2\pi} 
\!\! 
\mathbb{E}_{\omega_\seco} \left[ \, e^{-p} \sum_{n = 0}^\infty 
\!
\frac{(-\mu)^n}{n!} 
\sum_T
\alpha_T^{-1} \, \psi_T \bigg( \! n + 2, \frac{\kappa_T}{\kappa_\mathrm{0}} \! \bigg) \! 
- \frac{1}{\alpha_\nlos} \psi_\nlos \bigg(\! 2, \frac{\kappa_\nlos}{\kappa_\mathrm{0}} \! \bigg) \right] \dd \theta_\seco,
\label{eq:Blockage:SinglePrimary:AF}
\end{align}
where $T = \{\los,\nlos\}$, $p_\los (x_{\seco}) = \exp(-\mu x_{\seco} - p)$, $p_\nlos (x_{\seco}) = 1 - p_\los (x_{\seco})$, $\kappa_\mathrm{0} = g_{\mathrm{pr}} (\theta_\seco) g_{\mathrm{st}} (\theta_\seco - \pi - \omega_\seco)$ and 
\begin{align}
\psi_T (m, u) = 
\begin{cases}
\psi_\los (m, u) =  u^{-\frac{m}{\alpha_\los}} \, \Gamma (m/\alpha_\los, uR^{\alpha_\los}) \!\! &\mathrm{when} \ T = \los \\
\psi_\nlos (m, u) = - u^{-\frac{m}{\alpha_\nlos}} \, \Gamma (m/\alpha_\nlos, uR^{\alpha_\nlos}) \!\! &\mathrm{when} \ T = \nlos \\
\end{cases}.
\label{eq:Blockage:SinglePrimary:AF:PsiFunction}
\end{align} 
\end{theorem}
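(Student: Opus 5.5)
The plan is to start from the definition \eqref{eq:Blockage:SinglePrimary:AF:Definition} and apply Campbell's theorem to the marked PPP $\{\X_{\seco i},(\omega_{\seco i},G_{i0},T_{\seco i\to\prim0})\}$. The marks (orientation $\omega_{\seco i}$, fading $G_{i0}$, blockage state) are i.i.d. given the location, and blockage is independent across links. Hence $\mathbb{E}\big[\sum_{\X_{\seco i}\in\Phi_\seco\cap\mathcal{B}(0,R)} U_i\big] = \lambda_\seco\int_{\mathcal{B}(0,R)} \mathbb{E}_{\omega_\seco}[p_{\mathrm{m}}(\x)]\,\dd\x$, where $p_{\mathrm{m}}(\x)$ is the MAP of Lemma~\ref{Lemma:Blockage:SinglePrimary:MAP} evaluated at $\x = x_\seco\angle\theta_\seco$. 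This MAP already averages over fading and blockage state.

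Next I pass to polar coordinates, $\dd\x = x_\seco\,\dd x_\seco\,\dd\theta_\seco$. The constant term $1$ of \eqref{eq:Blockage:SinglePrimary:MAP} integrates to $\pi R^2$, which cancels the normalisation $\pi\lambda_\seco R^2$ and produces the leading $1$. The remaining term is $\sum_T p_T(x_\seco)e^{-(\kappa_T/\kappa_0)x_\seco^{\alpha_T}}$; since it is bounded, Fubini lets me move $\mathbb{E}_{\omega_\seco}$ outside the radial integral. This gives the first line of \eqref{eq:Blockage:SinglePrimary:AF} directly.

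For the second line I write $u=\kappa_T/\kappa_0$ and reduce each radial integral to the basic integral
\[
\int_0^R x^{m-1}e^{-u x^{\alpha}}\dd x = \alpha^{-1}u^{-m/\alpha}\Gamma(m/\alpha,uR^{\alpha}),
\]
which follows from the substitution $t=ux^{\alpha}$.
\begin{itemize}
\item \textbf{LOS term.} I expand $p_\los(x)=e^{-p}\sum_{n\ge0}(-\mu)^n x^n/n!$. Each summand then becomes an instance of the basic integral with $m=n+2$, yielding $e^{-p}\sum_n\frac{(-\mu)^n}{n!}\alpha_\los^{-1}\psi_\los(n+2,u)$.
\item \textbf{NLOS term.} I write $p_\nlos=1-e^{-p}e^{-\mu x}$. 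The "$1$" part gives $\alpha_\nlos^{-1}u^{-2/\alpha_\nlos}\Gamma(2/\alpha_\nlos,uR^{\alpha_\nlos}) = -\alpha_\nlos^{-1}\psi_\nlos(2,u)$. The series part gives $-e^{-p}\sum_n\frac{(-\mu)^n}{n!}\alpha_\nlos^{-1}u^{-(n+2)/\alpha_\nlos}\Gamma(\cdot) = e^{-p}\sum_n\frac{(-\mu)^n}{n!}\alpha_\nlos^{-1}\psi_\nlos(n+2,u)$.
\end{itemize}
The negative sign built into the definition \eqref{eq:Blockage:SinglePrimary:AF:PsiFunction} of $\psi_\nlos$ is exactly what lets both $T$ share the common sum $e^{-p}\sum_n\frac{(-\mu)^n}{n!}\sum_T\alpha_T^{-1}\psi_T(n+2,\cdot)$, with a single leftover term $-\alpha_\nlos^{-1}\psi_\nlos(2,\cdot)$.

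The only step requiring care is exchanging the infinite sum with the radial integral and the expectation over $\omega_\seco$. I would justify it by dominated convergence: on $[0,R]$, $\sum_n\frac{\mu^n x^n}{n!}x\,e^{-ux^\alpha}\le x e^{\mu R}$, which is integrable and uniform in $\omega_\seco$. The remaining technical point is the degenerate case $\kappa_0=0$, which can occur under the ideal pattern. There $u=\infty$, the exponential terms vanish, and $\psi_T\to0$ consistently with Corollary~\ref{Corollary:Blockage:SinglePrimary:MAP-ideal}; I would note this as a limiting convention. The rest is bookkeeping of signs.
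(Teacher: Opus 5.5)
Your proposal is correct and takes the same route as the paper's proof (Appendix~A): Campbell's theorem applied to the definition of $\eta_\seco$ together with the MAP of Lemma~\ref{Lemma:Blockage:SinglePrimary:MAP}, then the Taylor expansion of $e^{-\mu x_\seco}$ and the substitution $t=(\kappa_T/\kappa_0)x_\seco^{\alpha_T}$ to reach $\psi_T$. Your explicit split $p_\nlos = 1-e^{-p}e^{-\mu x_\seco}$ is what produces the leftover $-\alpha_\nlos^{-1}\psi_\nlos(2,\cdot)$ through the sign built into $\psi_\nlos$, and your dominated-convergence justification of the interchange covers details the paper's sketch leaves implicit; both are correct.
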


\begin{remark} \label{Remark:AF-Omni}
Note that AF of the secondary network under omni SLS $\eta_{\seco, \omni}$ is given by \eqref{eq:Blockage:SinglePrimary:AF} with $\kappa_\mathrm{0} = 1$.
\end{remark}

We can simplify \eqref{eq:Blockage:SinglePrimary:AF} for sectorized/ideal beam pattern to get the following results (see Appendix \ref{thrm:proof:Blockage:AF-sector} for proof).

\begin{corr}\label{Corollary:Blockage:SinglePrimary:AF-sector}
Under sectorized beam pattern, AF is given as 
\begin{align}
&\eta_{\seco, \mathrm{sec}} = 1  - \frac{2}{R^2} \int_0^{R} 
\!
\sum\limits_{T} 
p_T (x_\seco) \sum\limits_{i = 1:4}
\!\!
\mathcal{Q}_i \, e^{- ({\kappa_T}/{ \mathcal{G}_i}) x_\seco^{\alpha_T}} x_\seco \dd x_\seco \nonumber \\
&= 1 \! - \! \frac{2}{R^2} \sum_{i = 1:4} \! \mathcal{Q}_i \left[ \sum_{n = 0}^\infty \frac{(-\mu)^n}{n!} \! \sum_{T} \frac{e^{-p}}{\alpha_T} \ \psi_T \bigg(n + 2, \frac{\kappa_T}{\mathcal{G}_i} \bigg) 
- \frac{1}{\alpha_\nlos} \ \psi_\nlos \Big(2, \frac{\kappa_\nlos}{\mathcal{G}_i}\Big) \right],
\label{eq:Blockage:SinglePrimary:AF:Sector}
\end{align} 
where the values of $\mathcal{Q}_i$ and $\mathcal{G}_i$ are given as\\ \\
{\hspace*{5cm}
\resizebox{0.490\textwidth}{!}{
\begin{tabular}{|c|c|c|c|c|}
\hline
$i$ & $1$ & $2$ & $3$ & $4$ \\ \hline
$\mathcal{Q}_i$ & $\qpr\qst$ & $\qpr(1 - \qst)$ & $(1-\qpr)\qst$ & $(1-\qpr)(1-\qst)$ \\ \hline
$\mathcal{G}_i$ & $\apr\ast$ & $\apr\bst$ & $ \apr\qst$ & $\bpr\bst$ \\ \hline
\end{tabular}}}
\end{corr}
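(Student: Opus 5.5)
The plan is to start from the first line of Theorem~\ref{Theorem:Blockage:SinglePrimary:AF} and specialize the inner expectation over $\omega_\seco$ together with the angular integral over $\theta_\seco$ to the sectorized pattern \eqref{eq:GainApproximation}.

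\textbf{Step 1: reduce $\kappa_\mathrm{0}$ to four gain values.} Under the sectorized pattern, $\kappa_\mathrm{0} = g_{\mathrm{pr}}(\theta_\seco)\, g_{\mathrm{st}}(\theta_\seco - \pi - \omega_\seco)$ takes only four values:
\begin{itemize}
\item $\apr\ast$,
\item $\apr\bst$,
\item $\bpr\ast$ (listed as $\mathcal{G}_3$ in the table; I would read the entry $\apr\qst$ there as a typo),
\item $\bpr\bst$.
\end{itemize}
The integrand depends on $(\theta_\seco,\omega_\seco)$ only through $\kappa_\mathrm{0}$. So I would rewrite
\[
\frac{1}{2\pi}\int_0^{2\pi}\mathbb{E}_{\omega_\seco}\big[f(\kappa_\mathrm{0})\big]\dd\theta_\seco
\]
as $\mathbb{E}[f(\kappa_\mathrm{0})]$, with $\theta_\seco$ uniform on $[0,2\pi)$ and $\omega_\seco$ uniform and independent of it. This also accounts for the prefactor change from $1/(\pi R^2)$ to $2/R^2$, since $2\pi/(\pi R^2) = 2/R^2$.

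\textbf{Step 2: compute the probabilities $\mathcal{Q}_i$.} The event $|\theta_\seco|\le\phi_{\mathrm{pr}}/2$ has probability $\qpr$. Conditioned on $\theta_\seco$, the quantity $\theta_\seco-\pi-\omega_\seco$ taken mod $2\pi$ is again uniform, because $\omega_\seco$ is uniform and independent. Hence the event $|\theta_\seco-\pi-\omega_\seco|\le\phi_{\mathrm{st}}/2$ has probability $\qst$ and is independent of the first event. The four joint events therefore have the product probabilities $\mathcal{Q}_i$ in the table. This independence argument via the uniform shift is the only step needing care: one must note that the angle is reduced modulo $2\pi$ to $[-\pi,\pi)$.

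\textbf{Step 3: write the first form.} Substituting the four cases gives the first expression in \eqref{eq:Blockage:SinglePrimary:AF:Sector}, after interchanging the finite sum over $i$ with the radial integral and the sum over $T$.

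\textbf{Step 4: derive the series form.} For each $i$, I would reuse the radial computation behind the second line of Theorem~\ref{Theorem:Blockage:SinglePrimary:AF} with $\kappa_\mathrm{0}$ replaced by $\mathcal{G}_i$.
\begin{itemize}
\item Write $p_\nlos = 1 - p_\los$.
\item Expand $e^{-\mu x}=\sum_n (-\mu x)^n/n!$.
\item Evaluate the integrals using the substitution $t=u x^{\alpha_T}$:
\[
\int_0^R x^{n+1}e^{-u x^{\alpha_T}}\dd x=\alpha_T^{-1}u^{-(n+2)/\alpha_T}\,\Gamma\big((n+2)/\alpha_T,uR^{\alpha_T}\big).
\]
\end{itemize}
This yields the $\psi_T$ terms, including the sign convention for $\psi_\nlos$ and the extra term $-\alpha_\nlos^{-1}\psi_\nlos(2,\kappa_\nlos/\mathcal{G}_i)$ coming from the constant $1$ in $p_\nlos$. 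Exchanging the sum and the integral is justified by absolute convergence, since $\sum_n \mu^n x^n/n!=e^{\mu x}$ is bounded on $[0,R]$.

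I expect no real obstacle beyond bookkeeping of signs and the table entries. The main conceptual point is the independence in Step 2.
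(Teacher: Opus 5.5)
Your proposal is correct and follows essentially the paper's route. The paper splits the $\theta_\seco$-integral at $\pm\phi_{\mathrm{pr}}/2$, which gives the $q_{\mathrm{pr}}$ weights, and uses the uniformity of $\omega_\seco$ so that $g_{\mathrm{st}}(\theta_\seco-\pi-\omega_\seco)$ equals $a_{\mathrm{st}}$ or $b_{\mathrm{st}}$ with probabilities $q_{\mathrm{st}}$ and $1-q_{\mathrm{st}}$; the only cosmetic difference is that it plugs the resulting average $2\pi\sum_i \mathcal{Q}_i\psi_T(n+2,\kappa_T/\mathcal{G}_i)$ into the series form of Theorem~\ref{Theorem:Blockage:SinglePrimary:AF}, whereas you use the integral form and redo the radial expansion. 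You are also right that the table entry $\mathcal{G}_3$ should read $b_{\mathrm{pr}}a_{\mathrm{st}}$, consistent with $\mathcal{Q}_3=(1-q_{\mathrm{pr}})q_{\mathrm{st}}$.
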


\begin{remark}\label{Remark:Blockage:SinglePrimary:AF-ideal}
Under ideal beam pattern assumption ($\bpr = \bst$ $= 0$), the AF in \eqref{eq:Blockage:SinglePrimary:AF:Sector} can be simplified as 
\begin{align}
&\eta_{\seco, \mathrm{ideal}} = 
1 \! - \! \frac{2 \mathcal{Q}_1}{R^2} 
\bigg[ \sum_{n = 0}^\infty \! \frac{(-\mu)^n}{n!} \! \sum_{T} \frac{e^{-p}}{\alpha_T} \, \psi_T \bigg( \! n + 2, \frac{\kappa_T}{\mathcal{G}_1} \! \bigg) \! - \! \frac{1}{\alpha_\nlos} \, \psi_\nlos \bigg(\! 2, \frac{\kappa_\nlos}{\mathcal{G}_1} \! \bigg) \bigg]\!. \!\!
\label{eq:Blockage:SinglePrimary:AF:Ideal}
\end{align}
\end{remark}

We can further simplify $\eta_\seco$, $\eta_{\seco, \mathrm{sec}}$, $\eta_{\seco, \mathrm{ideal}}$, and $\eta_{\seco, \mathrm{omni}}$ for ZBL, HBL, and NOL scenarios as given in Table \ref{Table::Blockage:SinglePrimary:AF:Omni:SectorBeam:IdealBeam:ZBL:HBL:NOL}. 
\begin{table*}[ht!]
\centering
\caption{AF $\eta_\seco$, $\eta_{\seco, \mathrm{sec}}$, $\eta_{\seco, \mathrm{ideal}}$ and $\eta_{\seco, \mathrm{omni}}$ under ZBL, HBL and NOL scenarios.}
\label{Table::Blockage:SinglePrimary:AF:Omni:SectorBeam:IdealBeam:ZBL:HBL:NOL}
\resizebox{0.95\textwidth}{!}{
\begin{tabular}{|l|c|c|c|c|}
\hline
Blockage scenarios/beam  pattern 	& $\eta_\seco$ & $\eta_{\seco, \mathrm{sec}}$ &  $\eta_{\seco, \mathrm{ideal}}$ &  $\eta_{\seco, \mathrm{omni}}$ \\ \hline
ZBL: $1 - ({1}/{R^2}) \, ({2}/{\alpha_\los}) \times$	& $\frac{1}{2\pi} \int_0^{2\pi} \expects{\omega_\seco}{\psi_\los \left(2, {\kappa_\los}/{\kappa_0}\right)} \dd \theta_\seco$	& $\sum\nolimits_{i = 1:4} \! \mathcal{Q}_i \, \psi_\los \left(2, \, {\kappa_\los}/{\mathcal{G}_i}\right)$ 	& $\mathcal{Q}_1 \, \psi_\los \left(2, \, {\kappa_\los}/{\mathcal{G}_1}\right)$ & $\psi_\los \left(2, \, {\kappa_\los} \right)$ \\ \hline
HBL: $1 + ({1}/{R^2}) \, ({2}/{\alpha_\nlos}) \times$	& $\frac{1}{2\pi} \int_0^{2\pi} \expects{\omega_\seco}{\psi_\nlos \left(2, {\kappa_\nlos}/{\kappa_0}\right)} \dd \theta_\seco$ & $\sum\nolimits_{i = 1:4} \! \mathcal{Q}_i \, \psi_\nlos \left(2, \, {\kappa_\nlos}/{\mathcal{G}_i}\right)$ 	& $\mathcal{Q}_1 \, \psi_\nlos \left(2, \, {\kappa_\nlos}/{\mathcal{G}_1}\right)$ & $ \psi_\nlos \left(2, \, {\kappa_\nlos}\right)$ \\ \hline
NOL: $1 - \frac{1}{R^2} \frac{2}{\alpha_\los} \, e^{-p} \, \sum\nolimits_{n = 0}^\infty \frac{(-\mu)^n}{n!} \times$	& $\frac{1}{2\pi} \int_0^{2\pi} \expects{\omega_\seco}{\psi_\los \left(n+ 2, {\kappa_\los}/{\kappa_0}\right)} \dd \theta_\seco$ & $ \sum\nolimits_{i = 1:4} \, \mathcal{Q}_i \, \psi_\los \left( n + 2, \, {\kappa_\los}/{\mathcal{G}_i} \right)$ & $ \mathcal{Q}_1 \, \psi_\los \left( n + 2, \, {\kappa_\los}/{\mathcal{G}_1} \right)$ & $\psi_\los \left( n + 2, \, \kappa_\los \right)$ \\ \hline
\end{tabular}}\vspace{-.1in}
\end{table*}
In  ZBL or HBL, there is only one type of link (either all LOS or all NLOS), resulting in AF equal to the one obtained without taking blockages into account. Hence, utilizing results in \cite{tripathi2024coverage}, we can show that antenna directionality improves AF due to the dominating effect of a decrease in the antenna beamwidth over an increase in the main-lobe gain. Similarly, for the NOL scenario under ideal ULA beam pattern, from Table \ref{Table::Blockage:SinglePrimary:AF:Omni:SectorBeam:IdealBeam:ZBL:HBL:NOL}, we have 
\begin{align*}
&\frac{1-{\eta}_{\seco, \mathrm{ideal, NOL}}}{1-{\eta}_{\seco, \mathrm{omni, NOL}}} =(\kappa/(2\pi))^{\frac{4}{\alpha_\los}} \, \mathcal{Q}_1^{1 - \frac{2}{\alpha_\los}} \, \mathcal{S}
\\
&\text{with }
\mathcal{S}= \frac{\int_0^{\fracS{\kappa_\los \, R^{\alpha_\los}}{\mathcal{G}_1}} t^{\frac{2}{\alpha_\los} - 1} e^{-t -\mu \mathcal{G}_1^{\fracS{1}{\alpha_\los}} t^{\fracS{1}{\alpha_\los}}} \dd t}{ \int_0^{\kappa_\los \, R^{\alpha_\los}} t^{\fracS{2}{\alpha_\los} - 1} e^{-t -\mu t^{\fracS{1}{\alpha_\los}}} \dd t} \nonumber.
\end{align*} 
Note that in the expression of $\mathcal{S}$, the upper limit $\fracS{\kappa_\los \, R^{\alpha_\los}}{\mathcal{G}_1}<{\kappa_\los \, R^{\alpha_\los}}{}$ and the numerator integrand is less than the denominator integrand. Hence, $\mathcal{S}<1$, which shows that antenna directionality improves the AF of the secondary networks. 

Further from Theorem \ref{Theorem:Blockage:SinglePrimary:AF}, we can see that blockages ({\em i.e.} higher $\mu$) improve AF by increasing the number of active secondary transmitters. However, it also decreases their individual interference at the primary receiver, which results in a trade-off as explained in the following example.

\begin{example}\label{Example:Blockage:SinglePrimary:MAP}
If we ignore fading and directionality, the SLS restriction  $C_Tu^{-\alpha_T} \le \rho/p_\seco$ leads to all active secondary transmitters to lie outside the ball of radius $ u = \left({C_T p_\seco}/{\rho}\right)^{1/\alpha_T}$. In LOS only case, the mean interference power due to all active secondary links is $\mathbb{E} [I_\los] = \mathbb{E} [ \, \sum\nolimits_{ \X_{\mathrm{s}i} \in \Phi_\seco} p_\seco$ $C_\los x_{\seco i}^{- \alpha_\los}] = ({2 \pi \lambda_\seco p_\seco C_\los} \, u^{2 - \alpha_\los } )/({\alpha_\los - 2})$. Similarly, for the NLOS only case, the mean interference is $\mathbb{E} [ I_\nlos ] = (2 \pi \lambda_\seco p_\seco C_\nlos$ $\tilde{u}^{2 - \alpha_\nlos })/({\alpha_\nlos - 2})$ with $\tilde{u} = \left({C_\nlos p_\seco}/{\rho}\right)^{1/\alpha_\nlos}$. We can show that $\tilde{u} < u$, confirming that blockages increase the number of active secondary transmitters. To approximately compare the mean interference for the two cases, let us define $\zeta=C_\los/C_\nlos>1$ and $\alpha_\los = \alpha_\nlos = \alpha$. Then, we have $\tilde{u}/{u} = \zeta^{- 1/\alpha}$ and $\expect{I_\los}/\expect{I_\nlos} = \zeta^{^{2 - 2/\alpha}} > 1$ for $\alpha > 2$. This means that even after relaxed restrictions, the NLOS-only scenario gives a lower interference. 
\end{example}

The presence of other factors such as noise, fading, and directionality can add complexity to this comparison, especially when $0 < L_\mu < \infty$. For example, the presence of noise can reduce the effect of the interference on the primary performance. Similarly, fading can reduce the difference between $I_\los$ and $I_\nlos$, and the corresponding MAPs. 
 
\subsection{Coverage of primary link} \label{Section:Blockage:SinglePrimary:Analysis:PrimaryCov}
The instantaneous SINR at the primary receiver is 
\begin{align}
\SINR_{\prim\mathrm{0}} &= \frac{ H_{\mathrm{p0}} p_\prim \gpt{0}\gpr{0} \ell_{T_{\prim\mathrm{0}}} (r_\prim) }{ \sigma^2 \ + \ I_{ \mathrm{s} }  \left( \Phi_{\mathrm{s}}  \right) },
\label{eq:Blockage:SinglePrimary:SINR:PrimaryLink}
\end{align}
where $\noise$ is the noise power and $\ell_{T_{\prim\mathrm{0}}} (r_\prim) = \ell_{T_{\prim\mathrm{0} \to \prim\mathrm{0}}} (\|\X_{\prim\mathrm{0}} -$ $\Y_{\prim\mathrm{0}}\|)$ is the path loss (see \eqref{eq:Blockage:SinglePrimary:PathLoss}) of the primary link with link-length $r_\prim$. Here, $I_{ \mathrm{s} }  \left( \Phi_{\mathrm{s}} \right)$ is the total interference due to all active secondary transmitter at the primary receiver, given by
\begin{align}
I_{ \mathrm{s} }  \left( \Phi_{\mathrm{s}} \right) &= \sum\nolimits_{\X_{\mathrm{s}i} \in \Phi_{\seco}} p_\seco \kappa_{i\mathrm{0}} U_{i} G_{i0} \ell_{T_{\seco i \to \prim\mathrm{0}}} (x_{\seco i}).
\label{eq:Blockage:SinglePrimary:SINR:PrimaryLink:SecondaryInterference}
\end{align}

The performance of the primary link can be measured by the complementary cumulative density function (CCDF) of $\SINR_{\prim\mathrm{0}}$, also known as coverage probability, as given in the following theorem (See Appendix \ref{thrm:proof:Blockage:PrimaryCov} for proof).

\begin{theorem}\label{Theorem:Blockage:SinglePrimary:PrimaryCov}	
The coverage probability of the primary link under spatially aware SLS with transmit-restriction threshold $\rho$ is 
\begin{align}
&p_\mathrm{cp}(\SThres, \mu, \rho) =\prob{\SINR_{\prim\mathrm{0}} > \tau \lvert \, \rho, \mu}= \!\!\!\! 
\sum\limits_{T \in \{\los,\nlos\}} 
\!\!\!\!
p_T (r_\prim)
\exp 
\!
\left(
\!
- s_T \sigma^2 
\!
- \lambda_\seco 
\!\!
\int_0^{2\pi} 
\!\!\!\!
\int_0^\infty 
\!\!
\left( 
\!
1 
\! -  \!\!\!\!\!\!
\sum_{T_1 \in \{\los,\nlos\}} 
\!\!\!\!\!
p_{T_1} (x_\seco) \ \times
\right.\right.
\nonumber \\
&\
\left.\left.
\mathbb{E}_{\omega_\seco}  
\!\!
\left[
\!
\frac{1 
\! - \!
e^{- \frac{(\rho/p_\prim/C_T) r_\prim^{\alpha_T}}{\gpt{0}\gpr{0}} 
\!
\left(
\!\!
\tau +
\!
\frac{p_\prim}{p_\seco} 
\!
\frac{C_T}{C_{T_1}}
\!
\frac{x_\seco^{\alpha_{T_1}}}{r_\prim^{\alpha_T}} 
\!
\frac{\gpt{0}\gpr{0}}{ g_{\mathrm{pr}} \left(\! \theta_\seco \!\right) g_{\mathrm{st}} \left(\! \theta_\seco \! - \! \pi \! - \! \omega_\seco \!\right)} 
\!
\right)}}{1 + \tau \frac{p_\seco }{p_\prim } \frac{C_{T_1}}{C_T} \left(\frac{r_\prim^{\alpha_T}}{x_\seco^{\alpha_{T_1}}}\right)  \frac{g_{\mathrm{pr}} \left( \theta_\seco \right) g_{\mathrm{st}} \left( \theta_\seco - \pi - \omega_\seco \right)}{\gpt{0}\gpr{0}}} \ +
e^{- \frac{ (\rho/p_\seco/C_{T_1}) x_\seco^{\alpha_{T_1}}}{g_{\mathrm{pr}} \left( \theta_\seco \right) g_{\mathrm{st}} \left( \theta_\seco - \pi - \omega_\seco \right)}}  
\right] 
\right) 
x_\seco \mathrm{d}x_\seco \mathrm{d}\theta_\seco 
\right), 
\label{eq:Blockage:SinglePrimary:PrimaryCov} \\
&= \sum_{T\in \{\los,\nlos\}} p_T (r_\prim) \exp 
\!
\left( \! - s_T \sigma^2 \! - \! \lambda_\seco 
\!
\left[ \mathcal{N} \! \left(\frac{2}{\alpha_\nlos}, s_T \rho  \right)
\!
\mathsf{n}_\mathrm{3} 
\!
\left(
\frac{2}{\alpha_\nlos}
\right)
\! 
\! 
+
e^{-p} \sum\nolimits_{n = 0}^\infty
\!
\frac{(- \mu)^{n}}{n!}
\!
\left[
\mathcal{N} \! \left( \frac{(n \! + \! 2)}{\alpha_\los}, s_T \rho \right) 
\times
\right.\right.\right.
\nonumber\\
&\qquad\qquad\qquad\qquad
\left.\left.\left.
\mathsf{n}_\mathrm{3} 
\!
\left(
\frac{(n \! + \! 2)}{\alpha_\los}
\right)
\! - \mathcal{N}  \left(
\frac{(n \! + \! 2)}{\alpha_\nlos}, s_T \rho \right) \mathsf{n}_\mathrm{3} 
\left(
\frac{(n \! + \! 2)}{\alpha_\nlos}
\right)
\right]  
\right] \right) \!, \!\!\!
\label{eq:Blockage:SinglePrimary:PrimaryCov:Simplified}
\end{align}
where $s_T = \tau r_\prim^{\alpha_T} /(C_T p_\prim g_{\mathrm{pt}} (0) g_{\mathrm{pr}} (0))$,  
\begin{align*}
\mathcal{N} ( {m}/{\alpha_{T_1}}, s) = \frac{1}{\alpha_{T_1}} \kappa_{T_1}^{\!\! - {m}/{\alpha_{T_1}}}
\!
\left[ 
s^{\frac{m}{\alpha_{T_1}}}
\!
\mathsf{n}_\mathrm{1} 
\!
\left(
\!
\frac{m}{\alpha_{T_1}}
\!
\right) 
\! - \!
\Gamma 
\!
\left( \frac{m}{\alpha_{T_1}} \right) 
\!\! + \!
\mathsf{n}_\mathrm{2} 
\!
\left(
\!
\frac{m}{\alpha_{T_1}}, {s} 
\right) 
\!
\right].
\end{align*}
Here, $\mathsf{n}_\mathrm{1} (k) = \pi \csc (\pi k)$, $\mathsf{n}_\mathrm{2} (k, \nu) = \int_{\nu}^\infty e^{- u} u^{-1} (u - \nu)^{k} \dd u$,  $\mathsf{n}_\mathrm{3} (k) =  \mathbb{E}_{\omega_\seco}[ \int_0^{2\pi} ( g_{\mathrm{pr}} ( \theta_\seco )g_{\mathrm{st}} ( \theta_\seco - \pi - \omega_\seco ) )^{k} \mathrm{d}\theta_\seco ]$. 
\end{theorem}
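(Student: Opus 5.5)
We will condition on the blockage state $T=T_{\prim0\to\prim0}$ of the primary link, which equals $\los$ with probability $p_\los(r_\prim)$ and $\nlos$ otherwise. Given $T$, the event $\SINR_{\prim0}>\tau$ is equivalent to $H_{\prim0} > s_T(\sigma^2+I_\seco(\Phi_\seco))$ with $s_T=\tau r_\prim^{\alpha_T}/(C_T p_\prim \gpt{0}\gpr{0})$. Since $H_{\prim0}\sim\exp(1)$, the conditional coverage is $e^{-s_T\sigma^2}\mathcal{L}_{I_\seco}(s_T)$. The whole task thus reduces to the Laplace transform of the secondary interference at the origin, which the PGFL will give.

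The PPP $\Phi_\seco$ carries independent marks: $\omega_\seco$, the blockage state $T_1$ of the path to the origin (LOS with probability $p_\los(x_\seco)$), and the fading $G$. The activity indicator $U_i$ is a deterministic function of the transmitter's own marks, so $I_\seco$ is a shot noise of an independently marked PPP. The PGFL in polar coordinates then gives
\begin{align*}
\mathcal{L}_{I_\seco}(s_T)=\exp\Big(-\lambda_\seco\int_0^{2\pi}\!\!\int_0^\infty\Big(1-\sum_{T_1}p_{T_1}(x_\seco)\,\mathbb{E}_{\omega_\seco,G}\big[e^{-s_T p_\seco\kappa_0 G\,\ell_{T_1}(x_\seco)\mathbbm{1}(p_\seco\kappa_0 G\ell_{T_1}(x_\seco)<\rho)}\big]\Big)x_\seco\,\dd x_\seco\,\dd\theta_\seco\Big).
\end{align*}
The inner fading expectation splits into two parts. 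The part where the transmitter is silent contributes $\mathbb{P}(G\ge g^*)=e^{-g^*}$, where $g^*=\rho/(p_\seco\kappa_0 C_{T_1}x_\seco^{-\alpha_{T_1}})$ is exactly the MAP complement of Lemma \ref{Lemma:Blockage:SinglePrimary:MAP}. The part where it is active is $\int_0^{g^*}e^{-g(1+c)}\dd g=(1-e^{-(1+c)g^*})/(1+c)$, with $c=s_Tp_\seco\kappa_0C_{T_1}x_\seco^{-\alpha_{T_1}}$. Writing $(1+c)g^*$ out in the paper's variables reproduces the exponent and denominator in \eqref{eq:Blockage:SinglePrimary:PrimaryCov}. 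Summing over $T$ yields the first form.

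For the simplified form \eqref{eq:Blockage:SinglePrimary:PrimaryCov:Simplified}, first write $1-\sum_{T_1}p_{T_1}(\cdot)=\sum_{T_1}p_{T_1}(\cdot)(1-\cdot)$. Then use $p_\nlos=1-p_\los$ and expand $p_\los(x)=e^{-p}\sum_n(-\mu x)^n/n!$. This leaves radial integrals of the form $\int_0^\infty x^{n+1}f_{T_1}(x)\,\dd x$ with weights $+1$ (the pure NLOS term, $n=0$), $+e^{-p}$ for the LOS terms, and $-e^{-p}$ for the NLOS terms. This explains the three groups in the statement.

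In each radial integral, substitute $t=\kappa_{T_1}x^{\alpha_{T_1}}/\kappa_0$. The factor $\kappa_0$ then appears only as $\kappa_0^{(n+2)/\alpha_{T_1}}$, which after the angular integration and $\mathbb{E}_{\omega_\seco}$ produces $\mathsf{n}_3((n+2)/\alpha_{T_1})$. The remaining one-dimensional integral depends on $t$ and $\nu=s_T\rho$ alone: it is $\int_0^\infty t^{k-1}\big[1-e^{-t}-\frac{1-e^{-t-\nu}}{1+\nu/t}\big]\dd t$, scaled by $\alpha_{T_1}^{-1}\kappa_{T_1}^{-k}$, where $k=m/\alpha_{T_1}$. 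This must be identified with $s^{k}\mathsf{n}_1(k)-\Gamma(k)+\mathsf{n}_2(k,s)$.

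\textbf{Main obstacle.} The hardest step is this identification. I would split the integrand as $(1-\frac{t}{t+\nu})-e^{-t}+\frac{t}{t+\nu}e^{-t-\nu}$. Individually divergent pieces need care, so I would treat convergence for $0<k<1$ first and extend by analytic continuation. The first piece gives $\nu^k\pi\csc(\pi k)$ via the Beta integral $\int_0^\infty t^{k-1}\nu/(t+\nu)\,\dd t$. The second gives $-\Gamma(k)$. For the third, substitute $u=t+\nu$ to get $\int_\nu^\infty e^{-u}u^{-1}(u-\nu)^k\,\dd u=\mathsf{n}_2(k,\nu)$. Interchanging the series in $\mu$ with the integrals requires absolute convergence, which I would justify by dominating with $e^{\mu x}$, since the integrand decays like $e^{-x^{\alpha}}$.
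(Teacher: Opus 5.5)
Your derivation of the first form \eqref{eq:Blockage:SinglePrimary:PrimaryCov} follows the paper's appendix. You condition on the primary blockage state, use $H_{\prim0}\sim\exp(1)$, apply the PGFL, split over the Bernoulli variables $T_1$ and $U$, and evaluate the split fading integral. Your substitution $t=\kappa_{T_1}x^{\alpha_{T_1}}/\kappa_0$ with the Beta/Gamma/shift identification is equivalent to the paper's change of variables $A+Cx^{\alpha_{T_1}}=Ay$, and it is valid for $0<k<1$.

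The gap is in passing to \eqref{eq:Blockage:SinglePrimary:PrimaryCov:Simplified}. Your domination argument assumes the radial integrand decays like $e^{-x^{\alpha}}$, and this is false. Only the silent-transmitter pieces $e^{-t}$ and $e^{-t-\nu}$ decay that fast. The active piece $1-\frac{t}{t+\nu}=\frac{\nu}{t+\nu}$ decays only like $\nu/t$, so $1-\mathbb{E}_G[\cdot]\approx s_T p_\seco \kappa_0 C_{T_1} x_\seco^{-\alpha_{T_1}}$: far secondaries are almost always active and produce ordinary power-law interference. Hence $\int_0^\infty e^{\mu x}x^{1-\alpha_{T_1}}\dd x=\infty$. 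Worse, every term with $k=(n+2)/\alpha_{T_1}\ge 1$ is itself a divergent integral, already at $n=1$ when $\alpha_\los=2.4$. Your analytic-continuation step does not help. Nothing cancels the $\nu/t$ tail, so for $k\ge 1$ there is no convergent integral to continue. You would only be assigning the continued value $s^k\mathsf{n}_1(k)-\Gamma(k)+\mathsf{n}_2(k,s)$ to a divergent term.

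Nor can this step be repaired, because for $\mu>0$ the series is not an exact identity. Consider the exact LOS radial integral $\int_0^\infty e^{-\mu x}x\,(1-\mathbb{E}_G[\cdot])\,\dd x$ with $T_1=\los$. Its small-$\mu$ expansion contains, besides the termwise residues, the non-analytic contribution $s_T p_\seco C_\los\,\Gamma(2-\alpha_\los)\,\mathsf{n}_3(1)\,\mu^{\alpha_\los-2}$ after the angular average (for non-integer $\alpha_\los$). More generally it contains terms in $\mu^{j\alpha_{T_1}-2}$ from the poles of the continued Mellin transform at $k=j\in\mathbb{N}$. No power series in $\mu$ can reproduce these. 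Moreover, $\mathsf{n}_1(k)=\pi\csc(\pi k)$ is infinite whenever $(n+2)/\alpha_{T_1}$ is an integer, for example $n=10$ when $\alpha_\los=2.4$.

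The paper's appendix does the same term-by-term Taylor expansion of $e^{-\mu x_\seco}$ without justification, so your plan reproduces its formal route. What you can actually establish is the first form. The series holds only as a formal expansion, or after adding the missing Mellin--Barnes pole terms, and not by dominated convergence as you claim.
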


We can observe that the effect of antenna directionality is separable via $s_T$ and $\mathsf{n}_\mathrm{3}$ terms in the expression of $p_{\mathrm{cp}}$. Therefore, to better understand the impact of antenna directionality, we are now going to further simplify $p_{\mathrm{cp}}$ under the sectorized/ideal beam pattern approximation to get the following results (see Appendix \ref{thrm:proof:Blockage:PrimaryCov-sector} for proof).

\begin{corr} \label{Corollary:Blockage:SinglePrimary:PrimaryCov-Sector}
Under sectorized beam approximation, the primary coverage $p_\mathrm{cp, sec}(\tau, \mu, \rho)$
 is given by \eqref{eq:Blockage:SinglePrimary:PrimaryCov:Simplified} with $s_T =$ $\tau r_\prim^{\alpha_T}/(C_T p_\prim \apt \apr)$ and $\mathsf{n}_\mathrm{3} (k) = 2\pi \sum\nolimits_{i = 1:4} \mathcal{Q}_i $ $ \mathcal{G}_i^{k}$.
\end{corr}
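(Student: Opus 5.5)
The plan is to specialize Theorem~\ref{Theorem:Blockage:SinglePrimary:PrimaryCov} rather than redo the stochastic-geometry computation. In the simplified form \eqref{eq:Blockage:SinglePrimary:PrimaryCov:Simplified}, the beam pattern enters only in two places: the constant $s_T$, through $\gpt{0}\gpr{0}$, and the angular moment $\mathsf{n}_\mathrm{3}(k)$. Everything else ($\mathcal{N}$, $\mathsf{n}_\mathrm{1}$, $\mathsf{n}_\mathrm{2}$, the blockage sums) depends only on $\kappa_{T_1}$, $\rho$, $s_T$ and the path-loss parameters. It therefore suffices to evaluate these two quantities under the sectorized pattern \eqref{eq:GainApproximation}.

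\textbf{Step 1 (the constant $s_T$).} The primary transmitter and receiver are aligned with each other, so both evaluate their patterns at angle $0$. Since $|0|\le\phi_{uv}/2$, we have $\gpt{0}=\apt$ and $\gpr{0}=\apr$. This gives $s_T=\tau r_\prim^{\alpha_T}/(C_T p_\prim \apt\apr)$ directly.

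\textbf{Step 2 (the moment $\mathsf{n}_\mathrm{3}(k)$).} I write
\[
\mathsf{n}_\mathrm{3}(k)=\int_0^{2\pi}\mathbb{E}_{\omega_\seco}\big[(\gpr{\theta_\seco}\gst{\theta_\seco-\pi-\omega_\seco})^k\big]\,\dd\theta_\seco ,
\]
exchanging expectation and integral by Tonelli, since the integrand is nonnegative. For fixed $\theta_\seco$, the factor $\gpr{\theta_\seco}$ is deterministic. Because $\omega_\seco$ is uniform on $[0,2\pi)$, the angle $\theta_\seco-\pi-\omega_\seco$ taken mod $2\pi$ is also uniform. Hence $\gst{\cdot}$ equals $\ast$ with probability $\qst$ and $\bst$ with probability $1-\qst$, which gives
\[
\mathbb{E}_{\omega_\seco}[\cdot]=\gpr{\theta_\seco}^k\big(\qst\ast^k+(1-\qst)\bst^k\big).
\]
Integrating over $\theta_\seco$, the set $|\theta_\seco|\le\phi_{\mathrm{pr}}/2$ has measure $2\pi\qpr$, so
\[
\int_0^{2\pi}\gpr{\theta_\seco}^k\,\dd\theta_\seco=2\pi\big(\qpr\apr^k+(1-\qpr)\bpr^k\big).
\]
Expanding the product yields the four terms $2\pi\sum_{i=1:4}\mathcal{Q}_i\mathcal{G}_i^k$, with the $\mathcal{Q}_i,\mathcal{G}_i$ of Corollary~\ref{Corollary:Blockage:SinglePrimary:AF-sector}. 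I read $\mathcal{G}_3$ as $\bpr\ast$, the intended product of gains.

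\textbf{Step 3.} Substituting these two quantities into \eqref{eq:Blockage:SinglePrimary:PrimaryCov:Simplified} completes the argument.

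The main point needing care is confirming that, in the derivation of \eqref{eq:Blockage:SinglePrimary:PrimaryCov:Simplified}, the dependence on $\kappa_0=\gpr{\theta_\seco}\gst{\theta_\seco-\pi-\omega_\seco}$ really factors out as $\kappa_0^{m/\alpha_{T_1}}$. This comes from the substitution $x_\seco^{\alpha_{T_1}}\mapsto \kappa_0 t$ in the radial integral of \eqref{eq:Blockage:SinglePrimary:PrimaryCov}. It should also be checked that the gain-independent exponent $s_T\rho$ inside $\mathcal{N}$ is consistent with this. Once that separability is granted, as the paper states after Theorem~\ref{Theorem:Blockage:SinglePrimary:PrimaryCov}, the corollary is just the moment computation above.
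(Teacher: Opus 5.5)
Your proposal is correct and follows essentially the same route as the paper's proof in Appendix~\ref{thrm:proof:Blockage:PrimaryCov-sector}: $s_T$ follows from $\gpt{0}\gpr{0}=\apt\apr$, and $\mathsf{n}_\mathrm{3}$ is evaluated by splitting the $\theta_\seco$-integral into the main- and side-lobe regions of $g_{\mathrm{pr}}$ (measures $2\pi\qpr$ and $2\pi(1-\qpr)$) and averaging $g_{\mathrm{st}}$ over the uniform $\omega_\seco$ to get $\qst\ast^k+(1-\qst)\bst^k$. Reading $\mathcal{G}_3$ as $\bpr\ast$ is the right fix for the table's typo. Your check that $\kappa_0$ factors out as $\kappa_0^{m/\alpha_{T_1}}$, via $t=(\kappa_{T_1}/\kappa_0)x_\seco^{\alpha_{T_1}}$ with the exponent $s_T\rho$ free of $\kappa_0$, is what justifies the separability the paper asserts.
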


\begin{corr} \label{Corollary:Blockage:SinglePrimary:PrimaryCov-Ideal}
Under ideal beam pattern assumption, 
the primary coverage $p_{\mathrm{cp}, \mathrm{ideal}}(\tau, \mu, \rho)$ is given by \eqref{eq:Blockage:SinglePrimary:PrimaryCov:Simplified} with $s_T = \tau r_\prim^{\alpha_T}/(C_T p_\prim \apt \apr)$ and $\mathsf{n}_\mathrm{3} \left(k\right) = 2\pi \mathcal{Q}_1 \mathcal{G}_1^{k}$.
\end{corr}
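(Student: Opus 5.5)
The plan is to specialize Theorem \ref{Theorem:Blockage:SinglePrimary:PrimaryCov}, i.e., the simplified expression \eqref{eq:Blockage:SinglePrimary:PrimaryCov:Simplified}. The only places where the beam pattern enters are $s_T$, through $\gpt{0}\gpr{0}$, and the angular moment $\mathsf{n}_\mathrm{3}(k)$. The functions $\mathcal{N}$, $\mathsf{n}_\mathrm{1}$ and $\mathsf{n}_\mathrm{2}$ depend only on $\kappa_{T_1}$, $s_T\rho$ and the exponents. So it suffices to evaluate these two quantities under the ideal pattern, i.e., the sectorized pattern \eqref{eq:GainApproximation} with $\bpr=\bst=0$.

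First, the primary link is beam-aligned, so both ends see boresight ($\theta=0$) and get main-lobe gain. This gives $\gpt{0}\gpr{0}=\apt\apr$, and hence $s_T=\tau r_\prim^{\alpha_T}/(C_T p_\prim\apt\apr)$, exactly as in Corollary \ref{Corollary:Blockage:SinglePrimary:PrimaryCov-Sector}.

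Second, I evaluate
\[
\mathsf{n}_\mathrm{3}(k)=\mathbb{E}_{\omega_\seco}\Big[\int_0^{2\pi}\big(g_{\mathrm{pr}}(\theta_\seco)g_{\mathrm{st}}(\theta_\seco-\pi-\omega_\seco)\big)^k\dd\theta_\seco\Big].
\]
The product of gains takes only the four values $\mathcal{G}_i$. Since $\omega_\seco$ is uniform on $[0,2\pi)$, for each fixed $\theta_\seco$ the angle $\theta_\seco-\pi-\omega_\seco$ is uniform modulo $2\pi$. The event that the secondary transmitter points its main lobe at the primary receiver therefore has probability $\qst$, independently of $\theta_\seco$. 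The event $|\theta_\seco|\le\phi_{\mathrm{pr}}/2$ occupies a fraction $\qpr$ of $[0,2\pi)$. Exchanging the expectation and the integral (Fubini) gives $\mathsf{n}_\mathrm{3}(k)=2\pi\sum_{i=1:4}\mathcal{Q}_i\mathcal{G}_i^k$, which is Corollary \ref{Corollary:Blockage:SinglePrimary:PrimaryCov-Sector}.

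Setting $\bpr=\bst=0$ makes $\mathcal{G}_2=\mathcal{G}_3=\mathcal{G}_4=0$. (The table lists $\mathcal{G}_3=\apr\qst$, which looks like a typo for $\bpr\ast$; either way the intended value vanishes.) Since $k=m/\alpha_{T_1}>0$, we have $0^k=0$, and only the $i=1$ term survives: $\mathsf{n}_\mathrm{3}(k)=2\pi\mathcal{Q}_1\mathcal{G}_1^k$.

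The main obstacle is the legitimacy of using $\kappa_{i0}=0$ inside the unsimplified integrand \eqref{eq:Blockage:SinglePrimary:PrimaryCov}, where the gain appears in denominators such as $\kappa_{T_1}x^{\alpha_{T_1}}/\kappa_0$. I would handle this by a limiting argument. When $\kappa_0\to0$, the term $e^{-(\cdot)/\kappa_0}$ tends to $0$. The fraction also tends to $0$: its numerator is bounded by $1$ and its denominator tends to $1$, but the exponent in the numerator blows up, so the numerator tends to $1-0=1$. Hence the bracket becomes $1$, and the integrand $1-\sum_{T_1}p_{T_1}\cdot 1=0$. This is consistent with the MAP being $1$ (Corollary \ref{Corollary:Blockage:SinglePrimary:MAP-ideal}) while the interference from such a node is zero.

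So zero-gain directions contribute nothing to the Laplace functional, which matches $0^k=0$ in $\mathsf{n}_\mathrm{3}$. Because the derivation of \eqref{eq:Blockage:SinglePrimary:PrimaryCov:Simplified} factors the $\theta_\seco$, $\omega_\seco$ dependence into $\mathsf{n}_\mathrm{3}$ after the substitution $t\propto x^{\alpha_{T_1}}/\kappa_0$, the restriction to the $\mathcal{G}_1$ region is exactly what that factorization produces. This yields the claimed $p_{\mathrm{cp},\mathrm{ideal}}$.
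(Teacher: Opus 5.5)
Your proposal is correct and follows essentially the paper's route: specialize \eqref{eq:Blockage:SinglePrimary:PrimaryCov:Simplified} by using the boresight gains $\apt\apr$ in $s_T$, compute $\mathsf{n}_\mathrm{3}(k)=2\pi\sum_{i}\mathcal{Q}_i\mathcal{G}_i^{k}$ from the uniform $\omega_\seco$ as in Appendix~\ref{thrm:proof:Blockage:PrimaryCov-sector}, and then set $\bpr=\bst=0$ so that only the $i=1$ term survives because $k>0$. Your extra check that zero-gain directions contribute nothing is a sound supplement that the paper leaves implicit, and you are right that $\mathcal{G}_3$ should read $\bpr\ast$; one slip to fix is that the fraction in that limit tends to $1$, not $0$, as your own numerator and denominator limits show.
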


\begin{remark} \label{Remark:Blockage:SinglePrimary:PrimaryCov-Omni}
The primary coverage $p_\mathrm{cp, omni}(\SThres, \mu, \rho)$ for a network under omni-SLS is given by \eqref{eq:Blockage:SinglePrimary:PrimaryCov:Simplified} with $\mathrm{n_3} (\cdot) = 2\pi$ and $s_T = {\tau r_\prim^{\alpha_T}}/{C_T p_\prim}$.
\end{remark}

We can further simplify $p_{\mathrm{cp}}$, $p_{\mathrm{cp, sec}}$, $p_{\mathrm{cp,ideal}}$, and $p_{\mathrm{cp},\omni}$ for ZBL, HBL and NOL scenarios to get Table \ref{Table::Blockage:SinglePrimary:PrimaryCov:Omni:SectorBeam:IdealBeam:ZBL:HBL:NOL} where $\mathcal{I} \! = \! \sum_{n = 0}^\infty [{(\! - \mu)^{n}}/({n!})] \mathcal{N} ({(n \! + \! 2)}/{\alpha_\los}, s_\los \rho)$ $\mathsf{n}_\mathrm{3} ({(n \! + \! 2)}/{\alpha_\los})$. Note the slight changes in the values of $s_T$ and $\mathsf{n}_\mathrm{3}$.
\begin{table*}[!t]
\centering
\caption{Primary coverage under sectorized, ideal and omnidirectional beam pattern for ZBL, HBL and NOL scenarios.}
\label{Table::Blockage:SinglePrimary:PrimaryCov:Omni:SectorBeam:IdealBeam:ZBL:HBL:NOL}
\resizebox{0.98\textwidth}{!}{
\begin{tabular}{|l|c|c|c|c|}
\hline
Scenarios 	& $p_{\mathrm{cp}}(\tau, \mu, \rho)$ & $p_{\mathrm{cp}, \mathrm{sec}}(\tau, \mu, \rho)$ &  $p_{\mathrm{cp}, \mathrm{ideal}}(\tau, \mu, \rho)$ &  $p_{\mathrm{cp}, \mathrm{sec}}(\tau, \mu, \rho)$ \\ \hline
ZBL: $p_{\mathrm{cp}}(\SThres, 0, \rho)$ & $e^{\!- s_\los \sigma^2 \! - \! \lambda_\seco \, \mathcal{N} ( {2}/{\alpha_\los}, s_\los \rho) \, \mathsf{n}_\mathrm{3} \left({2}/{\alpha_\los} \right)}$ & $s_\los = \frac{\tau r_\prim^{\alpha_\los}}{C_\los p_\prim \apt \apr}$, $\mathsf{n}_\mathrm{3} \left(\frac{2}{\alpha_\los}\right) = 2\pi \sum_{i = 1:4} \mathcal{Q}_i \mathcal{G}_i^{2/\alpha_\los}$ & $s_\los = \frac{\tau r_\prim^{\alpha_\los}}{C_\los p_\prim \apt \apr}$, $\mathsf{n}_\mathrm{3} \left(\frac{2}{\alpha_\los}\right) = 2\pi \mathcal{Q}_1 \mathcal{G}_1^{2/\alpha_\los}$ & $s_\los = \frac{\tau r_\prim^{\alpha_\los}}{C_\los p_\prim}$, $\mathsf{n}_\mathrm{3} \left(\cdot\right) = 2\pi$ \\ \hline
HBL: $p_{\mathrm{cp}}(\SThres, \infty, \rho)$ & $e^{\!- s_\nlos \sigma^2 \! - \! \lambda_\seco \, \mathcal{N} ( {2}/{\alpha_\nlos}, s_\nlos \rho) \, \mathsf{n}_\mathrm{3} \left({2}/{\alpha_\nlos} \right)}$ & $s_\nlos = \frac{\tau r_\prim^{\alpha_\nlos}}{C_\nlos p_\prim \apt \apr}$, $\mathsf{n}_\mathrm{3} \left(\frac{2}{\alpha_\nlos}\right) = 2\pi \sum_{i = 1:4} \mathcal{Q}_i \mathcal{G}_i^{2/\alpha_\nlos}$ & $s_\nlos = \frac{\tau r_\prim^{\alpha_\nlos}}{C_\nlos p_\prim \apt \apr}$, $\mathsf{n}_\mathrm{3} \left(\frac{2}{\alpha_\nlos}\right) = 2 \pi \mathcal{Q}_1 \mathcal{G}_1^{2/\alpha_\nlos}$ & $s_\nlos = \frac{\tau r_\prim^{\alpha_\nlos}}{C_\nlos p_\prim}$, $\mathsf{n}_\mathrm{3} \left(\cdot\right) = 2\pi$ \\ \hline
NOL: $p_{\mathrm{cp}}(\SThres, \mu, \rho) \lvert_{C_\nlos = 0}$ & $p_\los (r_\prim) e^{- s_\los \sigma^2 - \lambda_\seco e^{-p} \left(({2\pi}/{\mu^2}) - \mathcal{I}\right)}$ & $s_\los = \frac{\tau r_\prim^{\alpha_\los}}{C_\los p_\prim \apt \apr}$, $\mathsf{n}_\mathrm{3} \left({m}/{\alpha_\los}\right) = 2\pi \sum_{i = 1:4} \mathcal{Q}_i \mathcal{G}_i^{m/\alpha_\los}$ & $s_\los = \frac{\tau r_\prim^{\alpha_\los}}{C_\los p_\prim \apt \apr}$, $\mathsf{n}_\mathrm{3} \left(\frac{m}{\alpha_\los}\right) = 2\pi \mathcal{Q}_1 \mathcal{G}_1^{m/\alpha_\los}$ & $s_\los = \frac{\tau r_\prim^{\alpha_\los}}{C_\los p_\prim}$, $\mathsf{n}_\mathrm{3} \left(\cdot\right) = 2\pi$ \\ \hline
\end{tabular}}
\vspace{-.1in}
\end{table*}

\subsection{Coverage of secondary links} \label{Section:Blockage:SinglePrimary:Analysis:SecondaryCov}
To evaluate the performance of secondary network, a typical secondary link is picked randomly out of all secondary links.
\begin{figure}[b!]
\centering
{\includegraphics[trim = 5 45 5 5, clip, scale=0.15]{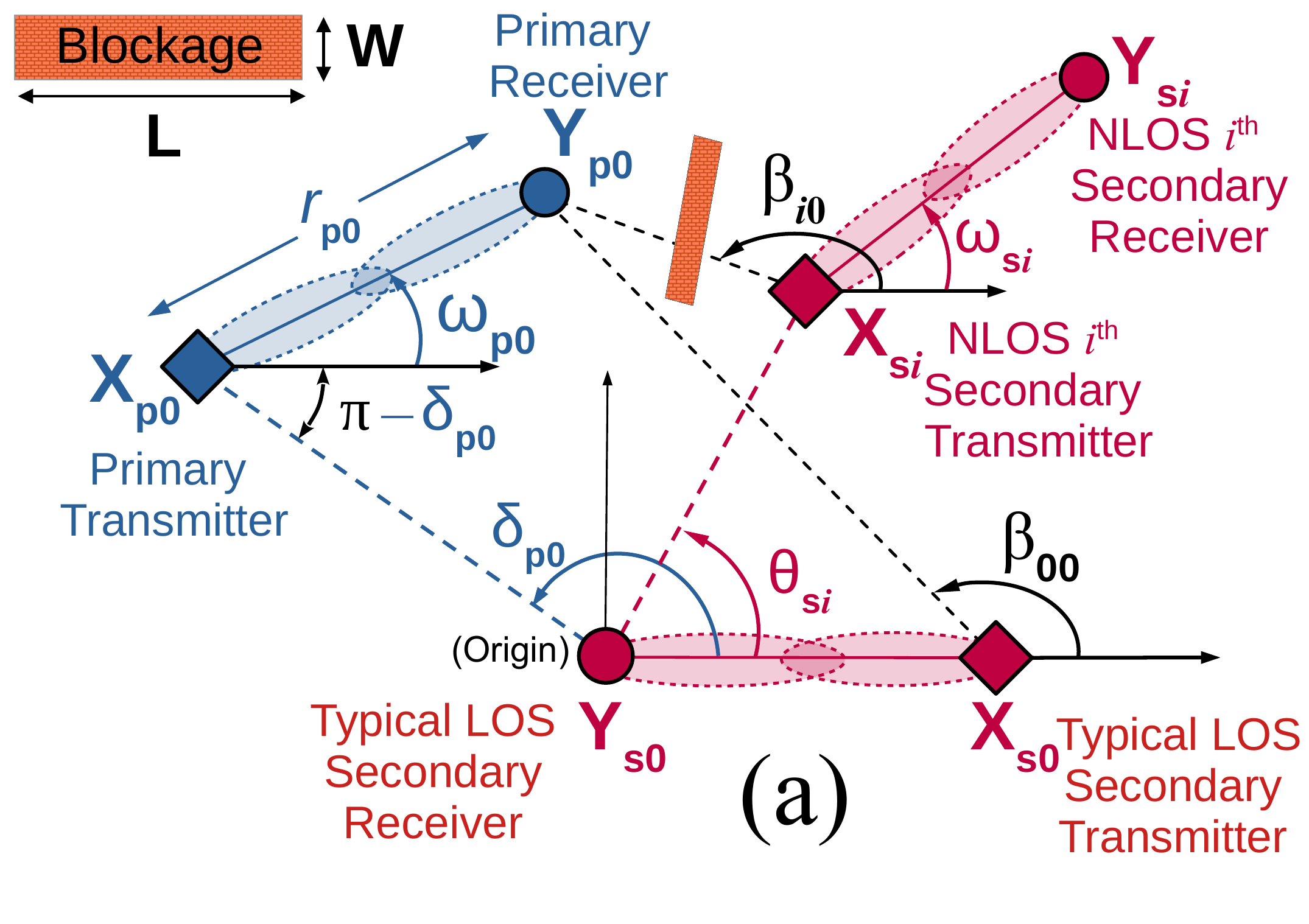} \label{fig:Blockage:SinglePrimary:SimpleCase-secondary} }
{\includegraphics[trim = 2 5 460 320, clip, scale=0.295]{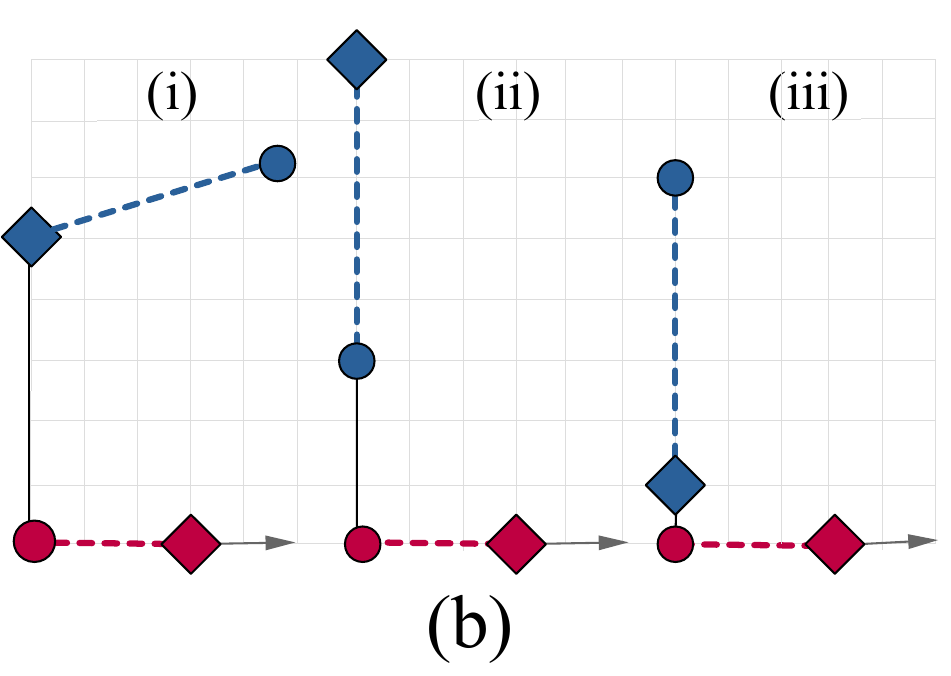} \label{fig:Blockage:SinglePrimary:SetUps}}
\caption{An illustration of (a) the new coordinate reference for the secondary links' coverage analysis and (b)the three set-ups for the primary link's location with parameters $\{\angle \delta_{\prim\mathrm{0}}, x_{\prim\mathrm{0}}, \angle{\omega_{\prim\mathrm{0}}}\}$ taken as (i) $\{\pi/2, 50 \,\text{m}, \pi/12\}$, (ii) $\{\pi/2, 80 \, \text{m}, -\pi/2\}$ and (iii) $\{\pi/2, 10 \,\text{m}, \pi/2\}$.}
\label{fig:Blockage:SinglePrimary:SystemModel-Secondary}
\vspace*{-0.36cm}
\end{figure}
\subsubsection{Secondary coverage at a given location}
We first consider a secondary link $\X_{\seco0}-\Y_{\seco0}$ at a given location/orientation relative to the primary link. For computational simplicity, we transform the coordinate axes such that this link $\X_{\mathrm{s0}}-\Y_{\mathrm{s0}}$ lies on the X-axis (see Fig. \ref{fig:Blockage:SinglePrimary:SystemModel-Secondary}(a)), with its receiver at the origin {\em i.e.} $\Y_{\mathrm{s0}} = \mathbf{o}$. Note that coordinate transformation maintains the relative locations between the primary and all secondary links; hence, it will not affect the performance. In the new coordinate reference, the expression for the received power at the primary receiver due to the $i^{\mathrm{th}}$ secondary transmitter is $P^{'}_{\mathrm{s}i \to \prim\mathrm{0}} = p_\seco G_{i\mathrm{0}} g_{\mathrm{pr}} (\omega_{\prim\mathrm{0}} - \beta_{i\mathrm{0}}) g_{\mathrm{st}}( \beta_{i\mathrm{0}} - \omega_{\mathrm{s}i})  \ell_{T_{\seco i \to \prim\mathrm{0}}} (z_{i\mathrm{0}})$, where $z_{i\mathrm{0}} = \|x_{\seco i}\angle \theta_{\seco i} - \mathbf{Y}_{\prim \mathrm{0}}\|$ and $\beta_{i\mathrm{0}}$ represents the orientation of primary receiver at $\Y_{\prim\mathrm{0}}$ with respect to the $i^{\mathrm{th}}$ secondary transmitter and is given as $\beta_{i\mathrm{0}} = \angle (x_{\seco i} \angle \theta_{\seco i} - \mathbf{Y}_{\prim \mathrm{0}}) =  \theta_{\seco i}$ $- \sin^{-1} ( ({y_{\prim \mathrm{0}}}/{z_{i\mathrm{0}}}) \sin (\theta_{\seco i} - \delta_{\prim \mathrm{0}} + \sin^{-1} ( ({r_\prim}/{y_{\prim \mathrm{0}}}) \sin (\delta_{\prim \mathrm{0}} - \omega_{\prim \mathrm{0}} ))))$. Note that the received power at the primary receiver due to the considered secondary transmitter is $P^{'}_{\mathrm{s0} \to \prim\mathrm{0}} = p_\seco G_\mathrm{00} g_{\mathrm{pr}}  (\omega_{\prim\mathrm{0}} - \beta_\mathrm{00} ) g_{\mathrm{st}} ( \beta_\mathrm{00} - \pi ) \ell_{T_{\seco\mathrm{0} \to \prim\mathrm{0}}} (z_\mathrm{00})$, where $z_\mathrm{00} = \| r_\seco$ $\angle 0 - \Y_{\prim\mathrm{0}} \|$. The indicator representing the transmission activity of the $ i^{\mathrm{th}} $ secondary transmitter, is  $U^{'}_{i} = \mathbbm{1} (p_\seco G_{i\mathrm{0}} g_{\mathrm{pr}} (\omega_{\prim\mathrm{0}} - \beta_{i\mathrm{0}}) g_{\mathrm{st}} ( \beta_{i\mathrm{0}} - \omega_{\mathrm{s}i} ) \ell_{T_{\seco i \to \prim\mathrm{0}}} (z_{i\mathrm{0}}) < \rho )$. Due to the new coordinate reference, the expression of MAP in \eqref{eq:Blockage:SinglePrimary:MAP} is slightly modified for the considered and $i^{\mathrm{th}}$ secondary transmitter (see the following lemmas). However, their proofs are similar to the Lemma \ref{Lemma:Blockage:SinglePrimary:MAP}.

\begin{lemma} 
The MAP of the $i^{\mathrm{th}}$ secondary transmitter at location $\X_{\mathrm{s}i} = x_{\mathrm{s}i} \angle \theta_{\seco i}$ in the presence of primary link is $p^{'}_{\mathrm{m}i} = 1 - \sum_{T = \{\los,\nlos\}} p_T (z_{i\mathrm{0}}) \exp (- (\rho z_{i\mathrm{0}}^{ \alpha_T})/(C_T p_\seco g_{\mathrm{st}} (\beta_{i\mathrm{0}} - \omega_{\mathrm{s}i}) g_{\mathrm{pr}} (\omega_{\prim\mathrm{0}} - \beta_{\mathrm{0}i})))$. 
\end{lemma}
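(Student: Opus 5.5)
The plan is to repeat the argument of Lemma \ref{Lemma:Blockage:SinglePrimary:MAP} in the new coordinate frame. Conditioned on the location $\X_{\seco i}$ and the mark $\omega_{\seco i}$, the only remaining randomness in $U^{'}_i$ is the fading $G_{i\mathrm{0}}$ and the blockage state $T_{\seco i\to\prim0}$. These are independent of each other. The geometry enters only through the distance $z_{i\mathrm{0}}=\|\X_{\seco i}-\Y_{\prim0}\|$ and the effective gain
\begin{align*}
\kappa^{'}_{i\mathrm{0}} = g_{\mathrm{pr}}(\omega_{\prim\mathrm{0}}-\beta_{i\mathrm{0}})\, g_{\mathrm{st}}(\beta_{i\mathrm{0}}-\omega_{\seco i}),
\end{align*}
which is a deterministic function of the conditioned quantities. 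The angle $\beta_{i\mathrm{0}}$ is the direction from the primary receiver to the secondary transmitter, as defined above. The first step is therefore to write $p^{'}_{\mathrm{m}i}=\mathbb{P}(U^{'}_i=1)=\mathbb{E}_{T}\big[\mathbb{P}(G_{i\mathrm{0}}<\rho/(p_\seco\kappa^{'}_{i\mathrm{0}}\ell_T(z_{i\mathrm{0}})))\big]$.

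Second, for a fixed blockage state $T$, I use $G_{i\mathrm{0}}\sim\exp(1)$. This gives
\begin{align*}
\mathbb{P}(U^{'}_i=1\mid T)=1-\exp\!\big(-\rho z_{i\mathrm{0}}^{\alpha_T}/(C_T p_\seco \kappa^{'}_{i\mathrm{0}})\big),
\end{align*}
after substituting $\ell_T(z)=C_T z^{-\alpha_T}$ from \eqref{eq:Blockage:SinglePrimary:PathLoss}.

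Third, I average over $T$ using the Bernoulli law of the blockage state, with $\mathbb{P}(T=\los)=p_\los(z_{i\mathrm{0}})=e^{-\mu z_{i\mathrm{0}}-p}$ and $\mathbb{P}(T=\nlos)=1-p_\los(z_{i\mathrm{0}})$. Because $p_\los+p_\nlos=1$, the leading constants combine into a single $1$, and the stated sum $1-\sum_T p_T(z_{i\mathrm{0}})\exp(\cdot)$ follows. This is exactly the same algebra as in \eqref{eq:Blockage:SinglePrimary:MAP}, with $x_{\seco i}$ replaced by $z_{i\mathrm{0}}$ and $\kappa_{i\mathrm{0}}$ replaced by $\kappa^{'}_{i\mathrm{0}}$.

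The only delicate step is the geometric one: checking that the blockage state of $\mathrm{path}:\X_{\seco i}\to\Y_{\prim0}$ depends on the length $z_{i\mathrm{0}}$ rather than $x_{\seco i}$, and that the gain arguments are correct. To confirm that the primary receiver's gain is evaluated at $\omega_{\prim\mathrm{0}}-\beta_{i\mathrm{0}}$ and the secondary transmitter's gain at $\beta_{i\mathrm{0}}-\omega_{\seco i}$, I would note two facts. The primary receiver points toward $\X_{\prim0}$ along angle $\omega_{\prim\mathrm{0}}$, and the secondary transmitter points along $\omega_{\seco i}$ toward its receiver, so its angle toward $\Y_{\prim0}$ is $\beta_{i\mathrm{0}}\pm\pi$. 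The sign and $\pi$ conventions only shift arguments of the beam patterns. I would fix them to match the expression for $P^{'}_{\seco i\to\prim0}$ given above; the term written $\beta_{\mathrm{0}i}$ in the statement is read as this same primary-receiver angle. This is the step I expect to need the most care. It is bookkeeping rather than a genuine obstacle, since independence of the blockages across paths makes the averaging identical to Lemma \ref{Lemma:Blockage:SinglePrimary:MAP}.
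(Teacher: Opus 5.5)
Your proposal is correct and follows the paper's own route. The paper says this lemma is proved exactly as Lemma~\ref{Lemma:Blockage:SinglePrimary:MAP}: apply the $\exp(1)$ CDF of $G_{i\mathrm{0}}$ for a fixed blockage state, then average over the Bernoulli LOS state of the path $\X_{\seco i}\to\Y_{\prim0}$, now with length $z_{i\mathrm{0}}$ and gain $g_{\mathrm{pr}}(\omega_{\prim\mathrm{0}}-\beta_{i\mathrm{0}})\,g_{\mathrm{st}}(\beta_{i\mathrm{0}}-\omega_{\seco i})$ in place of $x_{\seco i}$ and $\kappa_{i\mathrm{0}}$. Reading $\beta_{\mathrm{0}i}$ as $\beta_{i\mathrm{0}}$ is consistent with the paper's definition of $U^{'}_i$.
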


\begin{lemma}
The MAP of the considered secondary transmitter $\X_{\seco0}$ due to primary link is $p^{'}_{\mathrm{m0}} = 1 - \sum_{T = \{\los,\nlos\}}p_T (z_\mathrm{00}) \exp$ $(- (\rho z_\mathrm{00}^{ \alpha_T})/(C_T p_\seco g_{\mathrm{st}} (\beta_\mathrm{00} - \pi ) g_{\mathrm{pr}} (\omega_{\prim\mathrm{0}} - \beta_\mathrm{00})))$.
\end{lemma}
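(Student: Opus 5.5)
The MAP $p^{'}_{\mathrm{m0}}$ of the considered secondary transmitter $\X_{\seco0}$ can be obtained by repeating the argument of Lemma \ref{Lemma:Blockage:SinglePrimary:MAP} in the new coordinate frame. First I would write $p^{'}_{\mathrm{m0}}=\mathbb{P}(U^{'}_{0}=1)$ with
\begin{align*}
U^{'}_{0}=\mathbbm{1}\left(p_\seco G_{\mathrm{00}}\, g_{\mathrm{pr}}(\omega_{\prim0}-\beta_{\mathrm{00}})\, g_{\mathrm{st}}(\beta_{\mathrm{00}}-\pi)\, \ell_{T_{\seco0\to\prim0}}(z_{\mathrm{00}})<\rho\right),
\end{align*}
which is the indicator $U^{'}_i$ specialised to $i=0$. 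In the new frame the considered secondary transmitter sits at $r_\seco\angle 0$, and its receiver is at the origin. Hence the direction from the transmitter to its own receiver is $\pi$, and $\omega_{\seco0}$ is replaced by $\pi$ in the transmit gain. The distance to the primary receiver is $z_{\mathrm{00}}=\|r_\seco\angle0-\Y_{\prim0}\|$, and the angle $\beta_{\mathrm{00}}$ is given by the same geometric formula as $\beta_{i0}$ with $x_{\seco i}=r_\seco$ and $\theta_{\seco i}=0$. All of these quantities are deterministic once the relative geometry of the two links is fixed.

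Second, I would condition on the blockage state $T_{\seco0\to\prim0}$. Since the location is given, this state is Bernoulli with LOS probability $p_\los(z_{\mathrm{00}})=e^{-\mu z_{\mathrm{00}}-p}$. Given $T$, the path loss is $C_T z_{\mathrm{00}}^{-\alpha_T}$ from \eqref{eq:Blockage:SinglePrimary:PathLoss}. The event then reduces to
\begin{align*}
G_{\mathrm{00}}<\frac{\rho z_{\mathrm{00}}^{\alpha_T}}{C_T p_\seco\, g_{\mathrm{st}}(\beta_{\mathrm{00}}-\pi)\, g_{\mathrm{pr}}(\omega_{\prim0}-\beta_{\mathrm{00}})}.
\end{align*}
Because $G_{\mathrm{00}}\sim\exp(1)$ and is independent of the blockage process, this conditional probability equals $1-\exp(-\cdot)$ of the right-hand side.

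Third, I would uncondition by the law of total probability. This gives $\sum_T p_T(z_{\mathrm{00}})\bigl(1-e^{-(\cdot)}\bigr)$. Using $\sum_T p_T=1$, the sum collapses to the stated form $1-\sum_T p_T(z_{\mathrm{00}})\exp(\cdot)$.

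There is no real obstacle; the proof is a specialisation of Lemma \ref{Lemma:Blockage:SinglePrimary:MAP}. The only care needed is the geometry. One must check that the gain argument for $g_{\mathrm{st}}$ is $\beta_{\mathrm{00}}-\pi$, namely the angle toward the primary receiver measured relative to a beam that points along direction $\pi$. One must also confirm that the coordinate rotation leaves distances and relative angles unchanged, so the blockage probabilities and gains are preserved.
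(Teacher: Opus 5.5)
Your proposal is correct and follows the paper's argument, which simply reuses the proof of Lemma~\ref{Lemma:Blockage:SinglePrimary:MAP}. It treats the gain product and $z_{\mathrm{00}}$ as deterministic, conditions on the blockage state of $\X_{\seco0}\to\Y_{\prim0}$, applies the $\exp(1)$ CDF of $G_{\mathrm{00}}$, and averages with $p_T(z_{\mathrm{00}})$ using $\sum_T p_T=1$. The geometric check you flag is only an angle-convention matter and does not affect the derivation: $g_{\mathrm{st}}(\beta_{\mathrm{00}}-\pi)$ is the correct off-boresight gain exactly when $\beta_{\mathrm{00}}$ is the bearing from $\X_{\seco0}$ toward $\Y_{\prim0}$, as in your description, whereas under the literal definition $\beta_{\mathrm{00}}=\angle(r_\seco\angle 0-\Y_{\prim0})$ the argument would shift by $\pi$.
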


\begin{remark}
Note that the indicator for a network under omni-SLS is $U^{'}_{i, \omni} = \mathbbm{1} (p_\seco G_{i\mathrm{0}} \ell_{T_{\seco i \to \prim\mathrm{0}}} (z_{i\mathrm{0}}) <  \rho )$ with  
MAP $p^{'}_{\mathrm{m}i, \omni} = 1 - \sum_{T = \{\los,\nlos\}} p_T(z_{i\mathrm{0}})\exp (-(\rho z_{i\mathrm{0}}^{ \alpha_T})/(C_T p_\seco))$. 
\end{remark} 

The instantaneous SINR at the considered secondary receiver at the origin $\mathbf{o}$ is given as
\begin{align}
\SINR_{\seco0} &= \frac{ F_{\seco0} U^{'}_0 p_\seco \gst{0}\gsr{0} \ell_{T_{\seco\mathrm{0}}} (r_\seco)}{ \sigma^2 + I_\prim +  I_\seco \left(\Phi_\seco \right)}. 
\label{eq:Blockage:SinglePrimary:SINR:SecondaryLink}
\end{align}
where $\noise$ is the  noise power and $\ell_{T_{\seco\mathrm{0}}} (r_\seco) = \ell_{T_{\seco\mathrm{0} \to \seco\mathrm{0}}}$ is the path loss of typical secondary link with link-length $r_\seco$. $I_\prim$ and $I_\seco (\Phi_\seco)$ are the interferences at the typical secondary receiver due to the primary link and the rest of the active secondary transmitters, respectively and are given as
\begin{align*}
I_\prim &= p_\prim \bar{\kappa}_\mathrm{00} G^{'}_\mathrm{00} \ell_{T_{\prim\mathrm{0} \to \seco\mathrm{0}}} (x_{\prim\mathrm{0}})\\
I_\seco \left(\Phi_\seco \right) &= \sum\nolimits_{ \X_{\mathrm{s}i} \in \Phi_{\seco}/ \{\X_{\mathrm{s}0}\} } p_\seco \kappa_{i\mathrm{0}} U^{'}_{i} F^{'}_{i\mathrm{0}} \ell_{T_{\seco i \to \seco\mathrm{0}}} (x_{\seco i}),
\end{align*}
where $\bar{\kappa}_\mathrm{00} = \gsr{\delta_{\prim\mathrm{0}}}  \gpt{\delta_{\prim\mathrm{0}} - \pi - \omega_{\prim\mathrm{0}}}$ and $\kappa_{i\mathrm{0}} = \gpr{\theta_{\mathrm{s}i}} \gst{\theta_{\mathrm{s}i} - \pi - \omega_{\mathrm{s}i}}$. The coverage probability of the typical secondary link ($p_\mathrm{cs}(\SThres, \mu, \rho) = \mathbb{P} [ \SINR_{\seco\mathrm{0}} > \tau \lvert \, \rho, \mu] $) is given in the following theorem (See Appendix \ref{thrm:proof:Blockage:SecondaryCov} for proof). 

\begin{theorem}\label{Theorem:Blockage:SinglePrimary:SecondaryCov}
The coverage probability of a secondary link in the presence of other secondary links, while ensuring interference at the primary link below $\rho$, is given as
\begin{align}
&p_\mathrm{cs} (\tau, \mu, \rho) = 
\!\!\!\!\!\!\!
\sum_{T = \{\los,\nlos\}} 
\!\!\!\!
p_T (r_\seco) \, e^{\!\! - \frac{\tau \sigma^2}{A^T_\mathrm{0}}} \,
\big[1 -
\!\!
\sum_{T_1} 
p_{T_1} \! (z_\mathrm{00}) \, e^{- {1}/{A^{T_1}_\mathrm{10}}} \big] 
\bigg(\sum_{T_2}
\frac{p_{T_2} (x_{\prim\mathrm{0}})}{1 + \tau {A^{T_2}_\mathrm{20}}/{A^T_\mathrm{0}}}\bigg) 
\times \nonumber \\
&\quad
\exp \! \left( 
\!\!
- \lambda_\seco 
\!\!
\int_0^{2\pi} 
\!\!\!
\int_0^\infty 
\!
\sum\limits_{T_3}
p_{T_3} (x_\seco) \times
\mathbb{E}_{\omega_\seco}
\! 
\bigg[ \frac{1 \! - \!
\sum_{T_4} 
p_{T_4} \! (z_\mathrm{0}) \, e^{\!\! - A^{T_4}_\mathrm{s0} z_\mathrm{0}^{ \alpha_{T_4}}} }{1 + C^{T_3}_\mathrm{s} x_\seco^{\alpha_{T_3}} } \bigg] x_\seco \dd x_\seco \dd \theta_\seco 
\!
\right),
\label{eq:Blockage:SinglePrimary:SecondaryCov}
\end{align}
where $A^T_\mathrm{0} \! = \! p_\seco g_\mathrm{st}(0) g_\mathrm{sr}(0) C_T r_\seco^{-\alpha_T}$ denotes the secondary link serving power, $A^{T_1}_\mathrm{10} \! = \! p_\seco g_\mathrm{st} (\beta_\mathrm{00} \! - \! \pi) g_\mathrm{pr} (\omega_{\prim\mathrm{0}} \! - \! \beta_\mathrm{00}) C_{T_1} z_\mathrm{00}^{-\alpha_{T_1}}\!/\rho $ denotes the considered secondary-to-primary-link interference relative to $\rho$, $A^{T_2}_\mathrm{20} \! = \! p_\prim g_\mathrm{pt} (\delta_{\prim\mathrm{0}} \! - \! \pi \! - \! \omega_{\prim\mathrm{0}}) g_\mathrm{sr} (\delta_{\prim\mathrm{0}}) C_{T_2} x_{\prim\mathrm{0}}^{-\alpha_{T_2}}$ denotes the primary-to-considered-secondary-link interference, $A^{T_4}_\mathrm{s0} (\omega_\seco, \beta_\mathrm{0} (x_\seco,\theta_\seco)) \! = \! {\rho}/(p_\seco C_{T_4} g_\mathrm{st} (\beta_\mathrm{0}(x_\seco,\theta_\seco) \! - \! \omega_\seco) g_\mathrm{pr} (\omega_{\prim\mathrm{0}} \! - \! \beta_\mathrm{0} (x_\seco,\theta_\seco)))$ denotes the ratio between $\rho$ and the directional gain from the interfering secondary transmitter to the primary receiver and $C^{T_3}_\mathrm{s} (\omega_\seco,\theta_\seco) = {A^{T}_\mathrm{0}}/\SThres p_\seco C_{T_3} g_\mathrm{st} (\theta_\seco - \omega_\seco - \pi) g_\mathrm{sr} (\theta_\seco)$ with $T_n \! = \! \{\los, \nlos\}$ for $n = \{1\!-\!4\}$. $x_{\prim\mathrm{0}}$ and $\delta_{\prim\mathrm{0}}$ are the distance and the angle of primary transmitter with respect to $\Y_{\seco0}$. $z_\mathrm{00} \! = \! \| r_\seco \angle 0 \! - \! \Y_{\prim\mathrm{0}} \|$, $\beta_\mathrm{00} \! = \! \angle (r_{\seco}\angle 0 \! - \! \Y_{\prim\mathrm{0}})$,  $z_\mathrm{0} \! = \! \| x_\seco \angle \theta_\seco \! - \! \Y_{\prim\mathrm{0}} \|$, and $\beta_\mathrm{0} (x_\seco,\theta_\seco) \! = \! \angle (x_\seco\angle \theta_\seco \! - \! \Y_{\prim\mathrm{0}})$.
\end{theorem}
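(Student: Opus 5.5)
We condition on the blockage state $T$ of the serving link $\X_{\seco0}\to\Y_{\seco0}$, which occurs with probability $p_T(r_\seco)$. Coverage $\SINR_{\seco0}>\tau$ requires $U'_0=1$, since otherwise the numerator in \eqref{eq:Blockage:SinglePrimary:SINR:SecondaryLink} vanishes. Given $U'_0=1$, it then requires $F_{\seco0}>\tau(\sigma^2+I_\prim+I_\seco)/A^T_0$.

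The first step uses the independence of the fading variables. The indicator $U'_0$ depends only on $G_{00}$ and $T_{\seco0\to\prim0}$. The serving fade $F_{\seco0}$, the primary cross fade $G'_{00}$, and the cross fades $F'_{i0}$ and $G_{i0}$ of the other links are all independent of it, as are the blockage states, which are independent across paths. Hence the coverage probability factorizes into
\[
\mathbb{P}(U'_0=1)\,e^{-\tau\sigma^2/A^T_0}\,\mathbb{E}\big[e^{-\tau I_\prim/A^T_0}\big]\,\mathbb{E}\big[e^{-\tau I_\seco/A^T_0}\big],
\]
using $F_{\seco0}\sim\exp(1)$. The second lemma above gives $\mathbb{P}(U'_0=1)=1-\sum_{T_1}p_{T_1}(z_{00})e^{-1/A^{T_1}_{10}}$, since $(\rho z_{00}^{\alpha_{T_1}})/(C_{T_1}p_\seco g_\mathrm{st}g_\mathrm{pr})=1/A^{T_1}_{10}$. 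For the primary interference term, we average over $T_2$, which has probability $p_{T_2}(x_{\prim0})$, and over $G'_{00}\sim\exp(1)$. This gives $\sum_{T_2}p_{T_2}(x_{\prim0})/(1+\tau A^{T_2}_{20}/A^T_0)$, with the gain $\bar\kappa_{00}$ absorbed into $A^{T_2}_{20}$.

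The secondary interference term is the main step. We treat $\Phi_\seco$ as a PPP marked independently by $(\omega_{\seco i},G_{i0},F'_{i0},T_{\seco i\to\prim0},T_{\seco i\to\seco0})$. The reduced Palm distribution of a PPP is the PPP itself (Slivnyak), so conditioning on the typical link does not change the law of the others. The probability generating functional then gives
\[
\exp\Big(-\lambda_\seco\int\big(1-\mathbb{E}_{\text{marks}}\big[e^{-\tau p_\seco\kappa U'F'\ell_{T_3}(x_\seco)/A^T_0}\big]\big)x_\seco\,\dd x_\seco\,\dd\theta_\seco\Big).
\]
Inside the integral we condition on $\omega_\seco$ and $T_3$. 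The key observation is that $U'$ depends on $(G,T_4)$, whereas $F'$ and $T_3$ refer to the distinct path to $\Y_{\seco0}$. These are independent by the independent-blockage assumption, so the expectation equals
\[
1-\mathbb{P}(U'=1)\big(1-\mathbb{E}[e^{-sF'}]\big)
\]
with $s=\tau p_\seco\kappa\ell_{T_3}/A^T_0$. It follows that
\[
1-\mathbb{E}[\cdot]=\mathbb{P}(U'=1)\,\frac{s}{1+s}=\frac{\mathbb{P}(U'=1)}{1+C^{T_3}_\seco x_\seco^{\alpha_{T_3}}}.
\]
Inserting the MAP from the first lemma above, $1-\sum_{T_4}p_{T_4}(z_0)e^{-A^{T_4}_{\seco0}z_0^{\alpha_{T_4}}}$, and averaging over $T_3$ and $\omega_\seco$ yields the exponent in \eqref{eq:Blockage:SinglePrimary:SecondaryCov}. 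Summing over $T$ completes the argument.

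The main obstacle is keeping the gains and geometry consistent in the shifted coordinates. The gain toward the primary receiver uses $\beta_0(x_\seco,\theta_\seco)$ and $z_0$, while the gain toward $\Y_{\seco0}$ uses $\theta_\seco$, and the two must be kept distinct. We also need to justify that the correlation between $U'_i$ and the interference at $\Y_{\seco0}$ arises only through the common mark $\omega_\seco$, which is why we condition on $\omega_\seco$ before separating the two factors.
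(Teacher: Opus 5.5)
Your proposal is correct and follows essentially the same route as the paper's proof: condition on the serving-link state $T$, factor out the MAP $p'_{\mathrm{m0}}$ of the considered transmitter, use $F_{\mathrm{s0}}\sim\exp(1)$ to reduce to the noise term times the Laplace transforms of $I_\mathrm{p}$ and $I_\mathrm{s}$, and evaluate the latter via Slivnyak's theorem and the PGFL, with $U'$ independent of $(F',T_3)$ once the location and $\omega_\mathrm{s}$ are fixed. Your explicit independence arguments (for $U'_0$ versus the interference terms, and for $U'$ versus $F'$) are somewhat more careful than the paper's terse steps, but the argument is the same.
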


Note that the primary-protection-zone is no longer symmetric around the secondary receiver, unlike the case of primary coverage as computed in Theorem \ref{Theorem:Blockage:SinglePrimary:PrimaryCov}, and the presence of five LOS probability functions in \eqref{eq:Blockage:SinglePrimary:SecondaryCov} further hinders the simplification of expressions via symmetry arguments. However, we can derive the insights from \eqref{eq:Blockage:SinglePrimary:SecondaryCov} by dividing it into four terms as listed below: 
\begin{enumerate}[leftmargin=9pt]
\item[a)] $\mathsf{Term-1}: \sum_{T} p_T (r_\seco) \exp ( -{\tau \sigma^2}/{A^T_\mathrm{0}} )$ arises due to the noise ($\sigma^2$) and depends on the level of $\tau$ in comparison to the signal-to-noise ratio ($A^T_\mathrm{0}/\sigma^2$) of $\X_{\seco0}-\Y_{\seco0}$. Here, $A^T_\mathrm{0}$ with $T = \{\los,\nlos\}$ represents the signal strength of the LOS/NLOS $\X_{\seco0}-\Y_{\seco0}$ link, respectively.
\item[b)] $\mathsf{Term-2}: 1-\sum_{T_1} p_{T_1} (z_\mathrm{00}) (1 - e^{- 1/{A^{T_1}_\mathrm{10} } })$ with $T_1 = \{\los,\nlos\}$ represents the MAP of $\X_{\seco0}$ and depends on the strength of the received signal from a LOS/NLOS $\X_{\seco0}$ to $\Y_{\prim0}$ in terms of $A^{T_1}_\mathrm{10}$. 
\item[c)] $\mathsf{Term-3}: \sum_{T_2}( p_T (x_{\prim0}))/{(1 + \tau ({A^{T_2}_\mathrm{20}}/{A^T_\mathrm{0}}))}$ depends on the primary interference's level relative to the serving power of the considered secondary link. Here, $A^{T_2}_\mathrm{20}$ with $T_2 = \{\los,\nlos\}$ represents strength of received signal from $\X_{\prim0}$ to the LOS/NLOS $\Y_{\seco0}$.
\item[d)] $\mathsf{Term-4}:$ The integral term $I = \int_0^{2\pi} \mathbb{E}_{\omega_\seco} [\int_0^\infty \sum_{T_3} p_{T_3} (x_\seco)$ $[(\, 1 - \sum_{T_4} p_{T_4} (z_\mathrm{0}) e^{- A^{T_4}_\mathrm{s0} z_\mathrm{0}^{\alpha_{T_4}}})/(\, 1 + C^{T_3}_\mathrm{s} x_\seco^{\alpha_{T_3}})] x_\seco \dd x_\seco \, ] \dd \theta_\seco$ in which numerator and denominator, respectively, represent the MAP of $\X_{\seco0}$ and the interference power of the LOS/NLOS secondary transmitter located at $\X_\seco$.
\end{enumerate}

A mismatch occurs in signals corresponding to the cross-links $\X_{\seco0}-\Y_{\prim0}$ ($\mathsf{Term-2}$) and $\X_{\prim0}-\Y_{\seco0}$ ($\mathsf{Term-3}$) which dictate MAP $p_{\mathrm{m}0}$ of $\X_{\seco0}$ and the primary interference $I_\prim$ at $\Y_{\seco0}$, respectively. Even in the absence of blockages, the two links may face different directional gains; hence, the criterion of having received power of $\X_{\seco0}-\Y_{\prim0}$ being less than $\rho$ 
does not guarantee a low $I_\prim$. Specially, in the presence of blockages, this mismatch becomes more prominent due to the cases (a) when the first cross-link is NLOS while second one is  LOS resulting in a high MAP $p_{\mathrm{m}0}$ even when $I_\prim$ at $\Y_{\seco0}$ is high and (b) when the first cross-link is LOS while second one is blocked resulting in a low $p_{\mathrm{m}0}$ even when $I_\prim$ is low. If there is no mismatch, both cross-links will see the same gain {\em i.e.} $\rho A^{T_1}_\mathrm{10} = A^{T_2}_\mathrm{20}$, resulting in an appropriate control in $I_\prim$ by $p_{\mathrm{m}0}$. A similar mismatch occurs in $\mathsf{Term-4}$ due to the differences between the variables $C^{T_3}_\mathrm{s}$ and $A^{T_4}_\mathrm{s0}$, and their respective multipliers $x_\seco$ and $z_\mathrm{0}$. Blockages further intensify the mismatch in $\mathsf{Term-4}$ when (a) $\X_{\seco}-\Y_{\seco0}$ is blocked while $\X_{\seco}-\Y_{\prim0}$ is LOS resulting in a low secondary interference at $\Y_{\seco0}$ and a low MAP at $\X_\seco$, and (b) $\X_{\seco}-\Y_{\seco0}$ is LOS while the $\X_{\seco}-\Y_{\prim0}$ is NLOS resulting in a high secondary interference at $Y_{\seco0}$ and a high MAP for $\X_\seco$. 

We further simplified \eqref{eq:Blockage:SinglePrimary:SecondaryCov} for the ZBL, HBL, and NOL scenarios, as given in the following remark.

\begin{remark} \label{Remark:Blockage:SinglePrimary:SecondaryCov-ZBL-HBL-NOL}
Under the ZBL/HBL/NOL scenario, the secondary coverage is respectively given as 
\begin{align*}
&p_\mathrm{cs, ZBL}  = p_\mathrm{cs} (\tau, 0, \rho) \! = \! 
e^{\!\! - \frac{\tau \sigma^2}{A^\los_\mathrm{0}}} \bigg(\frac{1 \! - \! e^{- {1}/{A^\los_\mathrm{10}}}}{1 + \tau {A^\los_\mathrm{20}}/{A^\los_\mathrm{0}}}\bigg) 
\exp \! \bigg( 
\!\!
- \lambda_\seco 
\!\!
\int_0^{2\pi} 
\!\!\!
\int_0^\infty 
\!\!
\mathbb{E}_{\omega_\seco}
\! 
\bigg[ 
\!
\frac{1 \! - \! e^{- A^\los_\mathrm{s0} z_\mathrm{0}^{ \alpha_\los}}}{1 + C^\los_\mathrm{s} x_\seco^{\alpha_\los} } \bigg]  
\!
x_\seco \dd x_\seco \dd \theta_\seco 
\!
\bigg).\\
&p_\mathrm{cs, HBL} = p_\mathrm{cs} (\tau, \infty, \rho) \! = \! 
e^{\!\! - \frac{\tau \sigma^2}{A^\nlos_\mathrm{0}}} \bigg(\frac{1 \! - \! e^{- {1}/{A^\nlos_\mathrm{10}}}}{1 + \tau {A^\nlos_\mathrm{20}}/{A^\nlos_\mathrm{0}}}\bigg) 
\exp \! \bigg( 
\!\!
- \lambda_\seco 
\!\!
\int_0^{2\pi} 
\!\!\!
\int_0^\infty 
\!\!
\mathbb{E}_{\omega_\seco}
\! 
\bigg[ 
\!
\frac{1 \! - \! e^{- A^\nlos_\mathrm{s0} z_\mathrm{0}^{ \alpha_\nlos}}}{1 + C^\nlos_\mathrm{s} x_\seco^{\alpha_\nlos} } \bigg]  
\!
x_\seco \dd x_\seco \dd \theta_\seco 
\!
\bigg),\\
&p_\mathrm{cs, NOL} = \big[ \, p_\mathrm{cs} (\tau, \mu, \rho) \lvert C_\nlos = 0 \big] = p_\los (r_\seco) \, e^{ \! \!- \frac{\tau \sigma^2}{A^\los_\mathrm{0}}}  
\left( \! 1  \! -  \! p_\los (z_\mathrm{00}) \, e^{ \! \!- {1}/{A^\los_\mathrm{10}}}  \! \right)  
 \! \bigg(  \! \frac{ 1  \! +  \! \tau  \! \left({A^\los_\mathrm{20}}/{A^\los_\mathrm{0}}\right) p_\nlos (x_{\prim\mathrm{0}})}{1 + \tau \left({A^\los_\mathrm{20}}/{A^\los_\mathrm{0}}\right)} 
\!
\bigg) \times \nonumber \\
&\qquad\qquad\qquad
\exp \! \bigg( \!\!
- \lambda_\seco 
\!\!
\int_0^{2\pi} 
\!\!\!\!
\int_0^\infty 
\!\!\!\!\!
p_\los (x_\seco) \, \mathbb{E}_{\omega_\seco}
\! 
\bigg[ \frac{1 \! - \! p_\los (z_\mathrm{0}) e^{\!- A^\los_\mathrm{s0} z_\mathrm{0}^{ \alpha_\los}}}{1 + C^\los_\mathrm{s} x_\seco^{\alpha_\los} } \bigg]  
\!
x_\seco \dd x_\seco \dd \theta_\seco 
\!
\bigg)\!.
\end{align*}
\end{remark}

Since $z_\mathrm{0}$ and $x_\seco$ are not the same (although closely related), the simplification of the integral in \eqref{eq:Blockage:SinglePrimary:SecondaryCov} becomes difficult in the presence of blockages. The following two examples simplify $\mathsf{Term-4}$ under the NOL scenario by approximating its numerator to bring symmetry between $C^{T_3}_\mathrm{s}$ and $A^{T_4}_\mathrm{s0}$.

\begin{example}	
Under the assumption that considered secondary transmitter is very close to the primary receiver compared to the mean contact distance of the secondary network ( $= 1/(2\sqrt{\lambda_\seco})$), $z_0 \approx x_\seco$ and $g_\mathrm{st} (\beta_0 (x_\seco,\theta_\seco) - \omega_\seco) \approx g_\mathrm{st} (\theta_\seco - \pi - \omega_\seco)$ as all other secondary transmitters will be far away from the primary link resulting similar gain experience at the considered secondary and primary receivers from all other secondary transmitters. Then, ${A^\los_\mathrm{s0}}/{C^\los_\seco}$ is not a function of $\omega_\seco$ and $\mathsf{Term-4}$ in $p_\mathrm{cs, NOL}$ can be simplified as
\begin{align*}
&I_\mathrm{NOL, close} \! = \! \frac{e^{-p}}{\alpha_\los} 
\!
\sum_{n = 0}^\infty 
\!
\frac{(- \mu)^n}{n!} 
\!
\int_0^{2\pi} 
\!\!
\left(
\!
\pi \csc \! \bigg( \! \frac{n \! + \! 2}{\alpha_\los} \! \bigg)
\! + \!
\frac{2^n}{e^{p}} \, \Gamma \! 
\bigg(\! \frac{n \! + \! 2}{\alpha_\los} \! \bigg)
\times
\, U \! 
\left(\! \frac{n + 2}{\alpha_\los}, \frac{n + 2}{\alpha_\los}, \frac{A^\los_\mathrm{s0}}{C^\los_\seco} \! \right)
\!
\right) \times
\nonumber \\
&\qquad\qquad\qquad\qquad
\mathbb{E}_{\omega_\seco}
\! 
\left[ \! \left(C^\los_\seco\right)^{\!\! -\frac{(n + 2)}{\alpha_\los}} \! \right]  
\!
\dd \theta_\seco,
\end{align*}
where $U(a,b,z)$ is the Tricomi confluent hypergeometric function \cite{weisstein}.
\end{example} 

\begin{example}	
Under the assumption that considered secondary transmitter is far away from the primary receiver compared to the mean contact distance of the secondary network ( $= 1/(2\sqrt{\lambda_\seco})$), $y_{\prim0} \gg x_\seco$ where $y_{\prim0} = \lvert\lvert \Y_{\prim0}\rvert\rvert$. Thus $z_\mathrm{0} \approx y_{\prim0}$ and $\mathsf{Term-4}$ in $p_\mathrm{cs, NOL}$ can be simplified as 
\begin{align*}
&I_\mathrm{NOL, far} \! = \! 
{\textstyle\frac{\pi e^{-p}}{\alpha_\los} }
\!
\sum\nolimits_{n = 0}^\infty 
\!
{\textstyle \frac{(- \mu)^n}{n!}} \csc \! \left( \! \frac{(n+2)}{\alpha_\los} \! \right) 
\int_0^{2\pi} 
\!\!
\mathbb{E}_{\omega_\seco}
\! 
\Big[ 
\!
\big(1 \! - \! p_\los (y_{\prim\mathrm{0}}) \, e^{- A^\los_\mathrm{s0} y_{\prim\mathrm{0}}^{ \alpha_\los}}\big) 
\!
\left(C^\los_\seco\right)^{\!\! - \frac{(n + 2)}{\alpha_\los}}
\!
\Big] \dd \theta_\seco.
\end{align*}
\end{example} 

We simplify \eqref{eq:Blockage:SinglePrimary:SecondaryCov} under sectorized beam approximation to get Cor. \ref{corr:term4sec} (for proof, see the supplementary \cite{TripGupTheoremFile2025}). 

\begin{corr}\label{corr:term4sec}
Under the sectorized beam approximation, the $\mathsf{Term-4}$ in \eqref{eq:Blockage:SinglePrimary:SecondaryCov} can be simplified as
\begin{align*}
I_\mathrm{sec} \! &= \!\!\! 
\int_0^\infty 
\!\!
\sum_{k = 1}^{4} 
\!
\int_{\!\mathcal{F}_k} 
\!
\sum_{i = 1}^{4} q_i (\delta_\seco(x_\seco,\theta_\seco)) \,
\mathbb{E}_{\omega_\seco} 
\!
\left[ \sum_{T_3} p_{T_3} (x_\seco) 
\sum_{T_4} p_{T_4} (z_\mathrm{0}) \times
\right.
\\
&\qquad\qquad\qquad\qquad\qquad
\left.
\bigg[ 
\!
\frac{1 \! - \! \exp\left( \! - (\rho z_\mathrm{0}^{ \alpha_{T_4}})/(p_\seco C_{T_4} \mathcal{A}_i \mathcal{C}_k)\right)}{1 \! + \! (1/\tau) ({C_T}/{C_{T_3}}) ({x_\seco^{\alpha_{T_3}} }/{r_{\seco}^{ \alpha_T}}) ({(\ast \asr)}/{(\mathcal{B}_i \mathcal{D}_k)})} 
\!
\bigg]  
\! 
\right] 
\!
x_\seco \dd x_\seco \dd  \theta_\seco, \!\!
\end{align*}
where $\delta_\seco (x_\seco,\theta_\seco) = \theta_\seco - \beta_0 (x_\seco,\theta_\seco)$. Here, $\mathcal{A}_i$ and $\mathcal{B}_i$ represent transmit gains from interfering secondary transmitter to the primary receiver and the interfering secondary transmitter to considered secondary receiver with probability $q_i (\delta_\seco (x_\seco,\theta_\seco))$. Similarly, $\mathcal{C}_k$ and $\mathcal{D}_k$ represent gains from primary receiver to interfering secondary transmitter and considered secondary receiver to interfering secondary transmitter corresponding to events $\mathcal{F}_k$. These terms are given as \\ \\
{\hspace*{2cm}\noindent
\resizebox{0.75\textwidth}{!}{
\begin{minipage}{.6\linewidth}
\centering
\begin{tabular}{|c|c|c|}
\hline
Probability		&  $\mathcal{A}_i$ &  $\mathcal{B}_i$ \\ \hline
$q_\mathrm{1} (\delta_\seco(x_\seco,\theta_\seco))$	&  $\mathcal{A}_\mathrm{1} = \ast$ &  $\mathcal{B}_\mathrm{1} = \ast$ \\ \hline
$q_\mathrm{2} (\delta_\seco(x_\seco,\theta_\seco))$	&  $\mathcal{A}_\mathrm{2} = \ast$ &  $\mathcal{B}_\mathrm{2} = \bst$ \\ \hline
$q_\mathrm{3} (\delta_\seco(x_\seco,\theta_\seco))$	&  $\mathcal{A}_\mathrm{3} = \bst$ &  $\mathcal{B}_\mathrm{3} = \ast$ \\ \hline
$q_\mathrm{4} (\delta_\seco(x_\seco,\theta_\seco))$ &  $\mathcal{A}_\mathrm{4} = \bst$ &  $\mathcal{B}_\mathrm{4} = \bst$ \\ \hline
\end{tabular}
\end{minipage} 
\hspace*{.5cm}
\begin{minipage}{.4\linewidth}
\centering
\begin{tabular}{|c|c|c|}
\hline
$\mathcal{C}_k$	&  $\mathcal{D}_k$	&  $\mathcal{F}_k$	 										\\ \hline
$\apr$			&  $\asr$ 			&  $E_\mathrm{1} \cap E_\mathrm{2}$							\\ \hline
$\apr$			&  $\bsr$			&  $E_\mathrm{1} \cap E_\mathrm{2}^\mathrm{c}$ 				\\ \hline
$\bpr$			&  $\asr$			&  $E_\mathrm{1}^\mathrm{c} \cap E_\mathrm{2}$				\\ \hline
$\bpr$			&  $\bsr$			&  $E_\mathrm{1}^\mathrm{c} \cap E_\mathrm{2}^\mathrm{c}$	\\ \hline
\end{tabular}
\end{minipage}}}\\ \\
\noindent with $E_\mathrm{1} = \{ \theta_\seco :  \theta_\seco - \omega_{\seco0} \in ( -{\phi_{\mathrm{sr}}}/{2}, {\phi_{\mathrm{sr}}}/{2} ) \}$ and $E_\mathrm{2} = \{ \theta_\seco :  \beta_0 (x_\seco,\theta_\seco) - \omega_{\prim\mathrm{0}} \in ( - {\phi_{\mathrm{pr}}}/{2}, {\phi_{\mathrm{pr}}}/{2} ) \}$.
\end{corr}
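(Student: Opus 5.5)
Starting from the integral $\mathsf{Term-4}$ of Theorem \ref{Theorem:Blockage:SinglePrimary:SecondaryCov}, the plan is to replace every antenna gain by its sectorized value and then sort the resulting piecewise-constant integrand into cases. Four gains appear in the integrand. Two belong to the denominator $1+C^{T_3}_\seco x_\seco^{\alpha_{T_3}}$: the interferer's transmit gain toward $\Y_{\seco0}$, $g_\mathrm{st}(\theta_\seco-\pi-\omega_\seco)$, and the typical receiver's gain $g_\mathrm{sr}(\theta_\seco)$. Two belong to the numerator through $A^{T_4}_\mathrm{s0}$: the interferer's transmit gain toward $\Y_{\prim0}$, $g_\mathrm{st}(\beta_0-\omega_\seco)$, and the primary receiver's gain $g_\mathrm{pr}(\omega_{\prim0}-\beta_0)$. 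By the sectorized pattern \eqref{eq:GainApproximation}, each of these takes only two values, $a$ or $b$.

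\emph{Step 1 (receiver-side gains).} Neither $g_\mathrm{sr}(\theta_\seco)$ nor $g_\mathrm{pr}(\omega_{\prim0}-\beta_0(x_\seco,\theta_\seco))$ depends on $\omega_\seco$; they depend only on the location $(x_\seco,\theta_\seco)$. Main-lobe membership is exactly the events $E_1$ and $E_2$, so I will partition the integration domain into the four regions $\mathcal{F}_k$. On each region the pair $(g_\mathrm{pr},g_\mathrm{sr})$ is the constant $(\mathcal{C}_k,\mathcal{D}_k)$. This turns the $\theta_\seco$-integral into $\sum_{k}\int_{\mathcal{F}_k}$, and these constants can be pulled outside the $\mathbb{E}_{\omega_\seco}$.

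\emph{Step 2 (interferer transmit gains).} Both remaining gains depend on the single uniform random variable $\omega_\seco$, evaluated at two directions: $\theta_\seco-\pi$ (toward $\Y_{\seco0}$) and $\beta_0$ (toward $\Y_{\prim0}$). These directions differ by the angle $\delta_\seco=\theta_\seco-\beta_0$ up to the $\pi$ offset. The joint law of the pair $(g_\mathrm{st}(\beta_0-\omega_\seco),\,g_\mathrm{st}(\theta_\seco-\pi-\omega_\seco))$ is therefore supported on the four combinations $(\mathcal{A}_i,\mathcal{B}_i)\in\{\ast,\bst\}^2$. Each combination has probability $q_i(\delta_\seco)$, which equals the Lebesgue measure, divided by $2\pi$, of the set of $\omega_\seco$ for which each of the two directions falls inside or outside the main lobe of width $\phi_\mathrm{st}$. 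I would compute these as overlap lengths of two arcs of width $\phi_\mathrm{st}$ whose centres are separated by the relevant angle:
\begin{itemize}
\item $q_1$ is the overlap length of the two arcs, divided by $2\pi$;
\item $q_2$ and $q_3$ are each $\qst-q_1$;
\item $q_4$ is $1-2\qst+q_1$.
\end{itemize}
The expectation over $\omega_\seco$ then becomes $\sum_i q_i(\cdot)[\cdots]$.

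\emph{Step 3 (assembly).} Finally, I substitute $A^{T_4}_\mathrm{s0}=\rho/(p_\seco C_{T_4}\mathcal{A}_i\mathcal{C}_k)$. For the denominator I write $C^{T_3}_\seco x_\seco^{\alpha_{T_3}}=(1/\tau)(C_T/C_{T_3})(x_\seco^{\alpha_{T_3}}/r_\seco^{\alpha_T})(\ast\asr)/(\mathcal{B}_i\mathcal{D}_k)$, using $A^T_0$ with $g_\mathrm{st}(0)g_\mathrm{sr}(0)=\ast\asr$. The numerator $1-\sum_{T_4}p_{T_4}e^{-\cdots}$ is rewritten as $\sum_{T_4}p_{T_4}(z_0)(1-e^{-\cdots})$, which is legitimate because $\sum_{T_4}p_{T_4}=1$. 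Collecting the blockage sums with the $q_i$ sum and the $\mathcal{F}_k$ partition gives the stated form.

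I expect Step 2 to be the main obstacle. The regions $\mathcal{F}_k$ and the probabilities $q_i$ depend on $(x_\seco,\theta_\seco)$ through the nonlinear map $\beta_0$. The care needed is in getting the correct angular separation, which carries the $\pi$ offset, and in handling wrap-around modulo $2\pi$ when the arcs overlap, so that the $q_i$ are correct for every $\delta_\seco$. The explicit piecewise-linear formula for the overlap would be worked out for the two cases $\phi_\mathrm{st}\le\pi$ and the generic relative separation.
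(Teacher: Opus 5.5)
Your route is the one the corollary itself encodes; the paper defers its details to the supplement. You move $\int\dd x_\seco$ outermost (Tonelli applies since the integrand is nonnegative, and this is needed because $E_2$ depends on $x_\seco$ through $\beta_0$), split the $\theta_\seco$-range by the receiver main-lobe events, replace $\mathbb{E}_{\omega_\seco}$ by the joint law of the interferer's two transmit gains, and substitute. Two points, however, need correcting.

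First, in Step~1 you assert that $(g_\mathrm{pr},g_\mathrm{sr})=(\mathcal{C}_k,\mathcal{D}_k)$ on $\mathcal{F}_k$. With the table as given, this fails for $k=2,3$. $E_1$ is the $\phi_\mathrm{sr}$ (secondary-receiver) event and $E_2$ the $\phi_\mathrm{pr}$ event. So on $E_1\cap E_2^\mathrm{c}$ the receiver gains are $(\bpr,\asr)=(\mathcal{C}_3,\mathcal{D}_3)$, not $(\apr,\bsr)$. Your derivation actually produces $\mathcal{F}_2=E_1^\mathrm{c}\cap E_2$ and $\mathcal{F}_3=E_1\cap E_2^\mathrm{c}$. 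You should say explicitly that the two middle $\mathcal{F}_k$ entries of the table are interchanged, rather than claim agreement.

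Second, do not leave the ``$\pi$ offset'' open, and do not resolve it toward $\delta_\seco-\pi$. Seen from $\X_\seco$, the receiver $\Y_{\seco0}$ lies in direction $\theta_\seco+\pi$ and $\Y_{\prim0}$ in direction $\beta_0+\pi$. The two directions are therefore separated by exactly $\delta_\seco=\theta_\seco-\beta_0$ (mod $2\pi$), consistent with the statement's parametrization of $q_i$ by $\delta_\seco$. The factor $g_\mathrm{st}(\beta_0-\omega_\seco)$ in Theorem~\ref{Theorem:Blockage:SinglePrimary:SecondaryCov} has to be read as $g_\mathrm{st}(\beta_0-\pi-\omega_\seco)$. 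This matches $g_\mathrm{st}(\theta_{\seco i}-\pi-\omega_{\seco i})$ inside $\kappa_{i\mathrm{0}}$ in the original frame.

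If you take that factor literally, your arc-overlap probabilities are evaluated at $\delta_\seco-\pi$. For a distant interferer ($\delta_\seco\to 0$, both receivers in essentially the same direction) this gives $q_1\to 0$ instead of the correct $q_1\to\qst$. With the separation fixed at $\delta_\seco$, your formulas are right. For $\phi_\mathrm{st}\le\pi$, let $d\in[0,\pi]$ be the wrapped value of $|\delta_\seco|$; then $q_1=\max(0,\phi_\mathrm{st}-d)/(2\pi)$, $q_2=q_3=\qst-q_1$, and $q_4=1-2\qst+q_1$.
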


We can further simplify \eqref{eq:Blockage:SinglePrimary:SecondaryCov} under omni-SLS as all terms are no longer the functions of $\omega_\seco$ or $\theta_\seco$) to get the following result.

\begin{remark}
The secondary coverage $p_\mathrm{cs, omni}(\SThres, \rho)$ for a network under omni-SLS is given as 
\begin{align*}
&\!\! p_\mathrm{cs, omni}(\SThres, \mu, \rho) \! = \!\! \sum_{T} p_T (r_\seco) \, e^{\!\! - \frac{\tau \sigma^2 r_{\seco}^{ \alpha_T}}{ C_T p_\seco}} 
\!
\left[\!  1 \!\! -
\!\!
\sum_{T_1} p_{T_1} \! (z_\mathrm{00}) \, e^{\!\! - \frac{\rho z_\mathrm{00}^{\alpha_{T_1}}}{C_{T_1} p_\seco}} 
\!
\right]
\bigg(
\!
\sum_{T_2} 
\!
\frac{p_{T_2} (x_{\prim\mathrm{0}})}{1 \! + \! \tau ({C_{T_2}}/{C_T}) ({r_{\seco}^{ \alpha_T}}/{x_{\prim\mathrm{0}}^{ \alpha_{T_2}}}) ({p_\prim}/{p_\seco})}
\!
\bigg) \\ 
&\qquad\qquad\qquad\qquad
\times \ \exp \left( 
\!\!
- 2\pi \lambda_\seco 
\!\!
\int_0^\infty 
\! 
\sum\limits_{T_3} p_{T_3} (x_\seco) \,
\!
\bigg[
\! 
\frac{1 \! - \! \sum_{T_4} p_{T_4} (z_\mathrm{0}) e^{\!\! - (\rho/p_\seco)/C_{T_4} z_\mathrm{0}^{-\alpha_{T_4}}} }{1 + (1/\tau) ({C_T}/{C_{T_3}}) ({x_\seco^{\alpha_{T_3}} }/{r_{\seco}^{ \alpha_T}})} 
\bigg]  
\!
x_\seco \dd x_\seco 
\!
\right).
\end{align*}
\end{remark}

\subsubsection{Coverage of the typical secondary link} 
The typical secondary link refers to the secondary link picked randomly out of all secondary links in the region of interest $\mathcal{R} = \Phi_\seco \cap \Ball(\mathbf{o}, R)$. The reason behind considering a finite value of $R$ is to observe the effect of distance from primary on the performance of the secondary user. When we consider the average user in the complete unbounded 2D space, the effect of primary vanishes. Note that $p_\mathrm{cs}(\SThres, \mu, \rho)$ in \eqref{eq:Blockage:SinglePrimary:SecondaryCov} denotes the coverage of the secondary link on the primary link location in terms of its relative location/orientation ($x_{\prim 0}$, $\delta_{\prim 0}$, $\omega_{\prim 0}$). Hence, the typical secondary coverage is given as
\begin{align}
p_\mathrm{cs}^\mathrm{avg}&(\SThres) = 
\frac{1}{\pi R^2} \int_0^R
\!\!
\int_0^{2\pi} \mathbb{E}_{\omega_\prim} \big[p_\mathrm{cs}(\SThres, \mu, \rho)\big] x_\prim \dd x_\prim \dd \delta_\prim.
\label{eq:Blockage:SinglePrimary:SecondaryCov:Average}
\end{align}

\section{Numerical Results and Insights} \label{Section:Blockage:SinglePrimary:NumericalResults}
This section presents numerical investigations to validate derived results and derive insights. The default values for various parameters are given in Table-\ref{table:ParameterTabel}. We consider a simulation radius $R_\mathrm{sim} = 4000$ m. We consider the secondary's density of $\lambda_\seco = 8 \times 10^{-5}$ SBSs/$\text{m}^2$ corresponding to a mutual distance of 112 m between neighbouring points. We define a parameter $L_\mu = 1/\mu$ representing the average LOS distance where $L_\mu \to \infty$ and $L_\mu \to 0$ denote the ZBL (LOS-only) and HBL (NLOS-only) scenarios. Also note that we have considered $M = (M_\prim,\,M_\seco) = (4,\,4)$ with $M_\mathrm{pt} = M_\mathrm{pr} = M_\prim$ and $M_\mathrm{st} = M_\mathrm{sr} = M_\seco$ until mentioned otherwise.
\begin{table}[ht!]
\centering
\caption{Parameters for numerical evaluations}
\label{table:ParameterTabel}
\resizebox{0.45\textwidth}{!}{
\begin{tabular}{|p{0.5in}|p{0.85in}|p{0.5in}|p{1in}|}
\hline
\textbf{Parameters} & \textbf{Numerical value} &\textbf{Parameters} & \textbf{Numerical value} \\
\hline
$f$,\, $\mathrm{BW}$ & $60$ GHz, $200$ MHz  &$p_\prim $, $p_\seco$ & 27 dBm, 17 dBm \\
\hline
$r_\prim$,$r_\seco$ & $50$ m, $20$ m  &	$\alpha_\los$, $\alpha_\nlos$ & $2.4$, $4.2$ \\
\hline
$C_\los$ , $C_\nlos$  & $-60$ dB, $-70$ dB & $\sigma^2$ & $7.9621 \times 10^{-13}$ Watts \\
\hline
$\kappa$ & $121^{\circ}$ &	$\phi$ & ${\kappa}/{M}$ when $M > 1$ \\
\hline
\end{tabular}}
\end{table}
To demonstrate the impact of relative distance and orientation of the primary on the coverage of concerned secondary link, we consider three specific configurations of $\textbf{TN} = [\angle{\delta_\prim} \ x_\prim \ \angle{\omega_\prim}]$, termed \textbf{Type 1 (T1)}, \textbf{Type 2 (T2)}, and \textbf{Type 3 (T3)} (see Fig. \ref{fig:Blockage:SinglePrimary:SystemModel-Secondary}(b)) representing a minimal, a moderate and a high mutual effect of interferences between primary to secondary cross-links. The typical secondary link is denoted by \textbf{Type 4 (T4)}. Its distance $x_\prim$ from the primary receiver (taken at the origin) is generated as $x_\prim = \sqrt{u_\prim}$ where $u_\prim = \mathcal{U}(0, R^2)$. Further, $\{\delta_\prim, \omega_\prim\} = \mathcal{U} (0,2\pi)$. To avoid the simulation edge effect for secondary interference due to finite $R_\mathrm{sim}$, we consider $R = R_\mathrm{sim}/2$ as the radius of the region of interest. 

\subsection{MAP of secondary links} \label{Section:Blockage:SinglePrimary:NumericalResults:MAP}
Fig. \ref{numres:Blockage:SinglePrimary:MAP} shows the impact of the location on the MAP of a secondary transmitter. As shown in \eqref{eq:Blockage:SinglePrimary:MAP}, we can observe that the MAP depends on their locations, LOS states and directionality. Comparing  Fig. \ref{numres:Blockage:SinglePrimary:MAP}(a) and (b), we see that reducing $\rho$ decreases secondary transmission opportunities. Further, we can observe that the number of active secondary transmitters increases as $L_\mu$ changes from $\infty$ (LOS-only) to $0$ (NLOS-only). 
\begin{figure}[!ht]
\centering
{\includegraphics[trim = 57 3 48 5, clip, scale=0.2999]{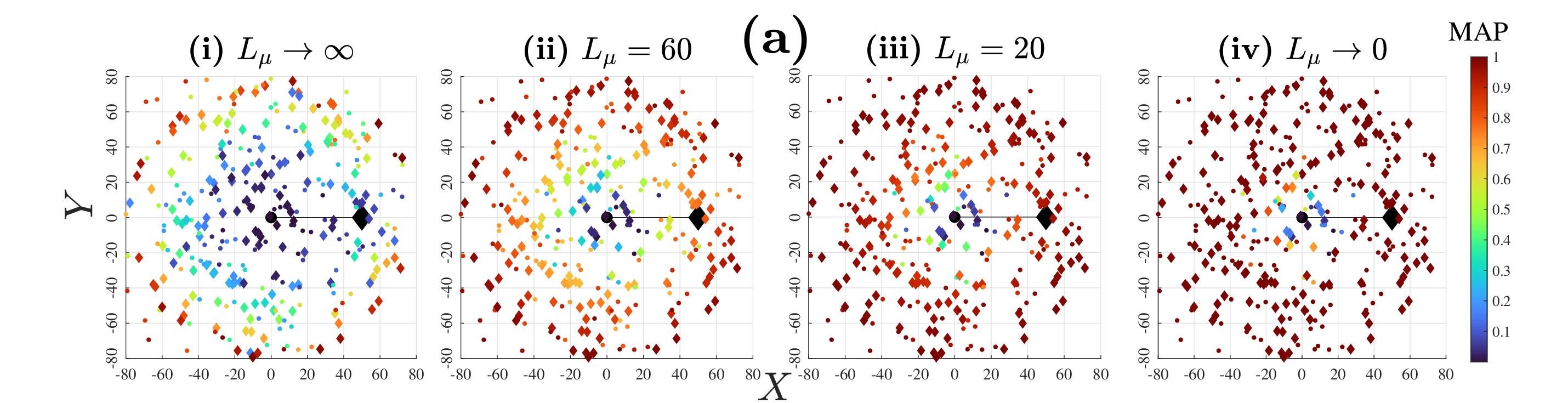}\label{numres:MAP_M_4_Rho_1eMinus14}}
{\includegraphics[trim = 57 3 48 5, clip, scale=0.2999]{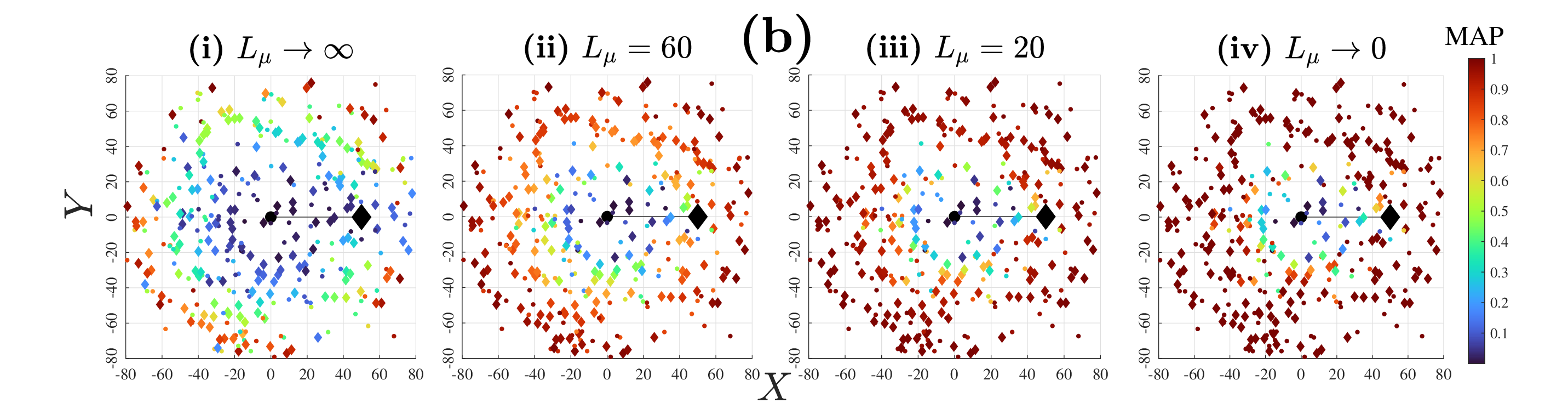}\label{numres:MAP_M_4_Rho_1eMinus15}}
\caption{The spatial variation of $p_{\mathrm{m}i}$ with $\lambda_\seco = 8 \times 10^{-3}$ /$\text{m}^2$ and $M = (4, \, 4)$ for (a) $\rho = 10$ femto-Watts and (b) $\rho = 1$ femto-Watts. Both ends of the primary and secondary links have antennas and each $\bullet\!\!\!-\!\!\!-\!\!\!\blacklozenge$ represents a primary receiver-transmitter pair.}
\label{numres:Blockage:SinglePrimary:MAP}
\vspace{-.15in}
\end{figure}

\subsection{AF of the secondary network} \label{Section:Blockage:SinglePrimary:NumericalResults:AF}
We now investigate the AF of the secondary network and the impact of various system parameters on it.
\subsubsection{Impact of spatially-aware SLS}  
\label{Section:Blockage:SinglePrimary:NumericalResults:AF:DirectionalityBlockage}
Fig. \ref{numres:Blockage:SinglePrimary:AFvs1byMu} shows the variation of secondary AF with average LOS distance $L_\mu$ for omni and directional cases. We can observe that AF improves with larger $M$ (narrower antenna beamwidth), higher $\rho$ (milder restriction on secondary transmissions) and smaller $L_\mu$ (higher blockage losses). Specifically, when $L_\mu \to 0$ (HBL scenario), we can observe a significant improvement in secondary AF with directionality for a given value of $\rho$. Fig. \ref{numres:Blockage:SinglePrimary:AFvsRatio}(a) shows the variation of secondary AF with $\rho/p_\seco$ for different severities of blockages. Here, we can observe that $\eta_\seco \to 1$ requires an approximate shift of $-60$ dB in $\rho/p_\seco$ when $L_\mu$ switches from $\infty$ (LOS only) to $0$ (NLOS only). Thus, the presence of blockages permits the use of stricter $\rho$ without affecting AF. 
\begin{figure}[ht!]
\vspace{-.20in}
\centering
{\includegraphics[trim = 20 12.5 25 7, clip, scale=0.3899]{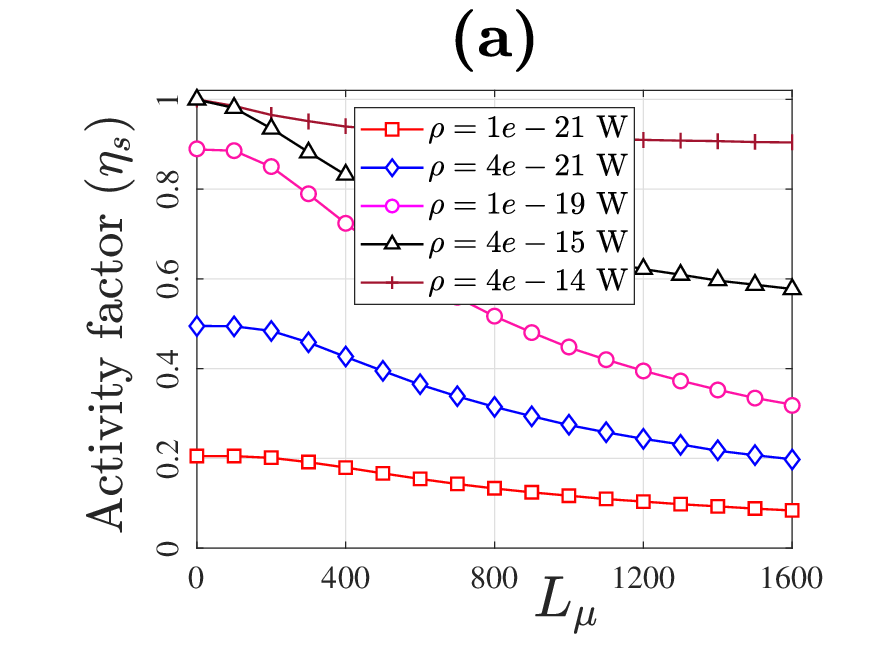}\label{numres:AFvs1byMu_Rho_Alpha_2_4_R400_M_1}}
{\includegraphics[trim = 74 12.5 27 7, clip, scale=0.3899]{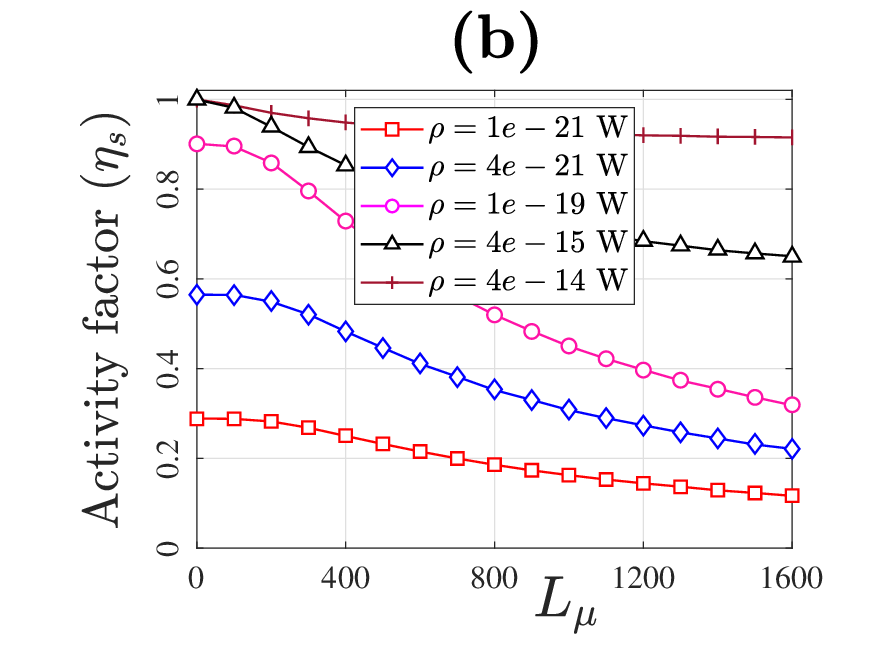}\label{numres:AFvs1byMu_Rho_Alpha_2_4_R400_M_4}}
\caption{The variation of secondary AF $\eta_\seco$ with average LOS distance $L_\mu = 1/\mu$ (in meters) for different values of interference-threshold $\rho$ with (a) $M = (1,1)$ and (b)  $M = (4,4)$. Here, radius of region of interest $R = 1000$ m. }
\label{numres:Blockage:SinglePrimary:AFvs1byMu}
\vspace{-.2in}
\end{figure}
\begin{figure}[ht!]
\vspace{-.2in}
\centering
{\includegraphics[trim = 20 12.5 27 9.5, clip, scale=0.3899]{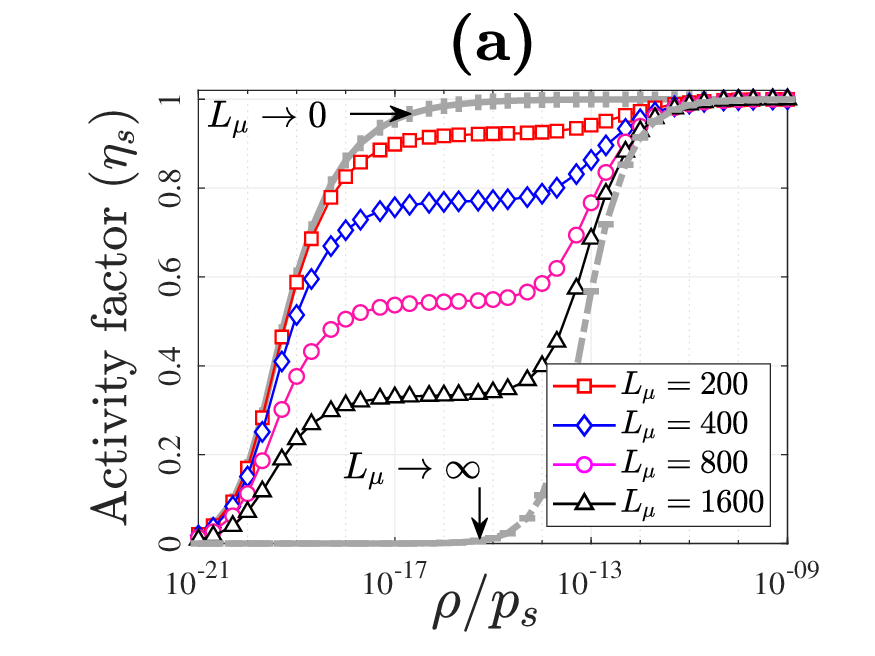} \label{numres:AF_1byMu_R_1000} }
{\includegraphics[trim = 76 12.5 27 9.5, clip, scale=0.3899]{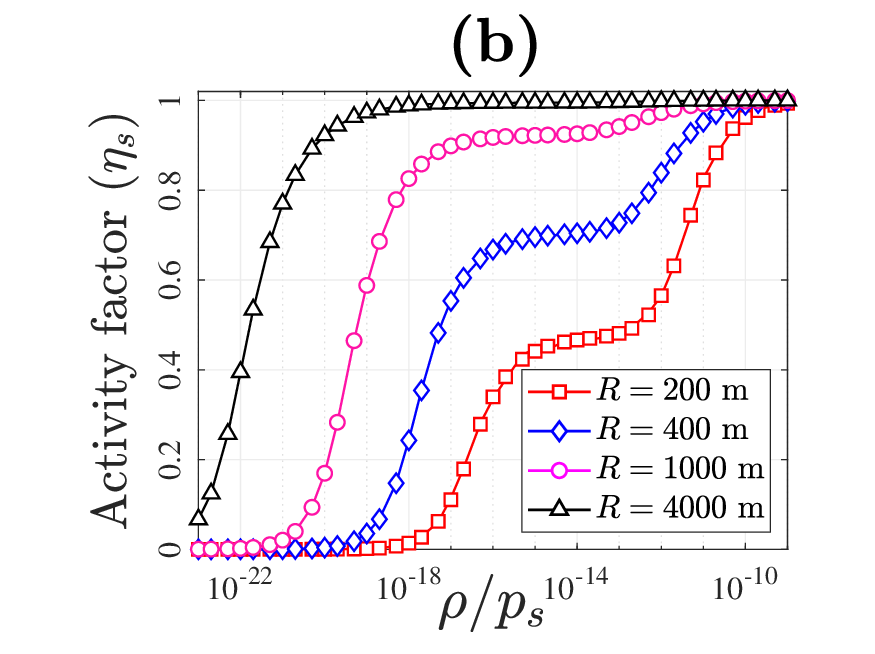} \label{numres:AF_R_1byMu_200} }
\caption{The variation of secondary AF $\eta_\seco$ with the interference-threshold $\rho$ for $M = (4,4)$ by varying (a) the value of $L_\mu$ (in m) 
with $R = 1000$ m and (b) the values of region-of-interest radius $R$ with $L_\mu = 200$ m.}
\label{numres:Blockage:SinglePrimary:AFvsRatio}
\vspace{-.15in}
\end{figure}
\subsubsection{Fundamental impact of blockages}
Fig. \ref{numres:Blockage:SinglePrimary:AFvsRatio}(a) shows that the presence of both LOS/NLOS links changes the fundamental shape of AF, compared to LOS-only or NLOS case, due to dual path-loss. For example, a plateau (flat) section surrounded by two falls is visible in the middle of the curves for cases with non-zero finite $L_\mu$ in Fig. \ref{numres:Blockage:SinglePrimary:AFvsRatio}(a). This section signifies a constant secondary activity irrespective of the changing $\rho$. The variation in the left of these curves with $\rho/p_\seco$ ({\em i.e.,} lower values of $\rho$) denotes the improvement in the activity of NLOS secondary links unless almost all NLOS links become active. Since LOS links have significantly higher power, they remain inactive even with an increase in $\rho$, thus resulting in a plateau region. Beyond a certain value of $\rho$, the LOS links also start becoming active with the increase in $\rho$, which represents the right side of the curves. Since the fraction of NLOS links decreases with $L_\mu$, the plateau height decreases with $L_\mu$. Note that this region appears only when there is a mixture of LOS and NLOS links. It highlights the importance of studying the effect of the dual path-loss function resulting from blockage compared to that involving the single path-loss function, representing either the LOS-only or the NLOS-only cases.
\subsubsection{Impact of secondary link proximity from the primary receiver} 
\label{Section:Blockage:SinglePrimary:NumericalResults:AF:Proximity} 
To demonstrate the impact of proximity to primary on the secondary link's performance, Fig. \ref{numres:Blockage:SinglePrimary:AFvsRatio}(b) shows the variation of secondary AF with $\rho/p_\seco$ for different region-of-interest radius $R$. We can observe that $\eta_\seco$ reaches $1$ at a smaller threshold $\rho$ as $R$ increases. It shows that the impact of primary on the secondary links diminishes quickly with its distance from these links, which is also consistent with \eqref{eq:Blockage:SinglePrimary:AF}. 

\subsection{Role of spatially-aware SLS}\label{Section:Blockage:SinglePrimary:NumericalResults:ThresholdImpact}
Figs. \ref{numres:NPvsRhoset4_alpha_2point4_4point2_R_4000_1byMus_Inf_0_tau_Minus10_5_dBs} and \ref{numres:NPvsRhoset4_alpha_2point4_4point2_R_4000_1byMus_200_50_tau_Minus10_5_dBs} show the effect of the various design/system parameters ($\rho$, $M$, $L_\mu$) of spatially-aware SLS on the variation of primary and secondary coverage. For a given $\tau$, we can observe three interesting trends.
\begin{figure}[ht!]
\vspace{-.18in}
\centering 
{\includegraphics[trim = 20 7 35 9.5, clip, scale=0.3899]{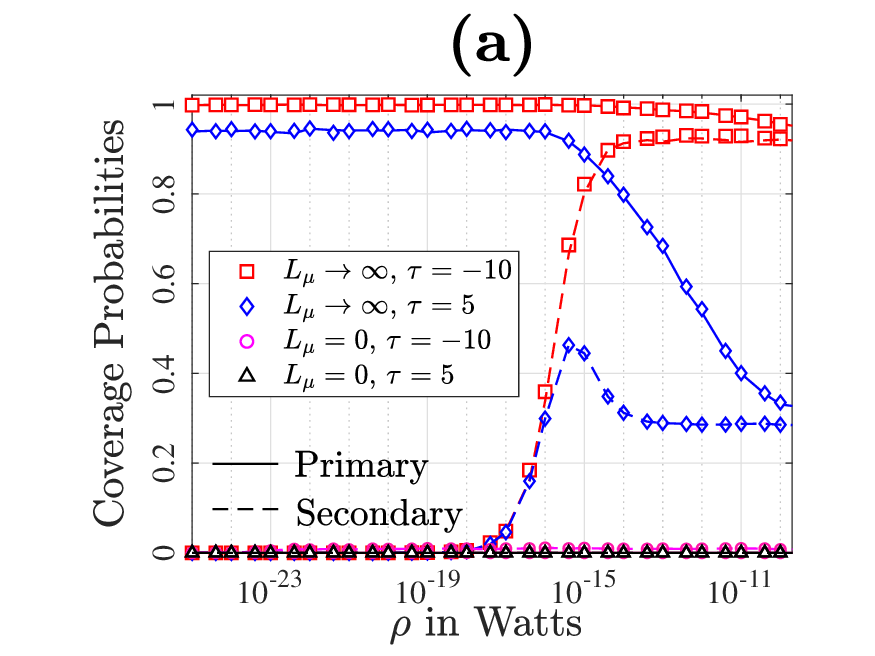}\label{numres:NPvsRhoset4_alpha_2point4_4point2_R_4000_1byMus_Inf_0_M_1_tau_Minus10_5_dBs}}
{\includegraphics[trim = 72 7 35 9.5, clip, scale=0.3899]{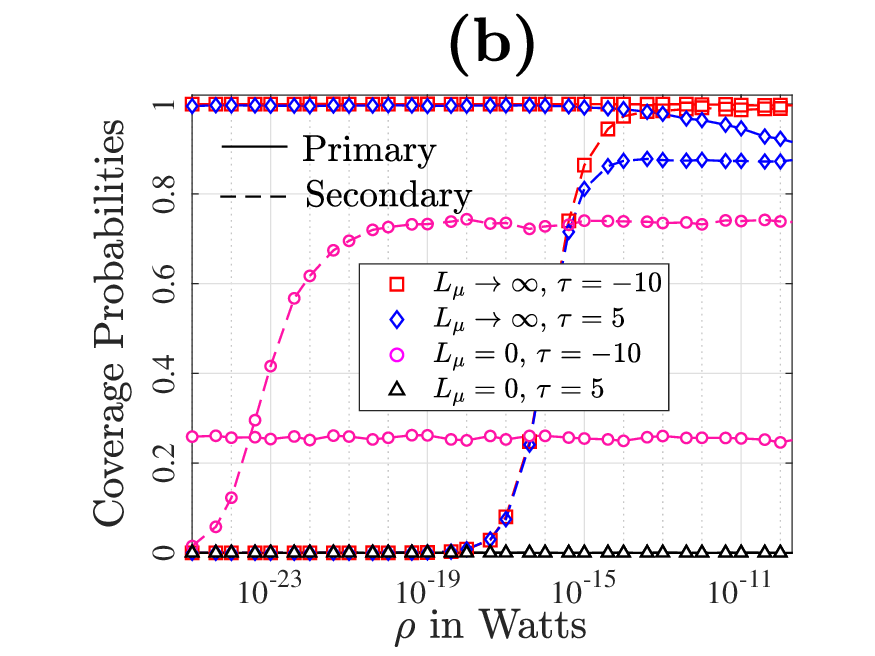} \label{numres:NPvsRhoset4_alpha_2point4_4point2_R_4000_1byMus_Inf_0_M_4_tau_Minus10_5_dBs} }
\caption{The variation of primary and secondary coverage probabilities with interference-threshold $\rho$ for different combinations of $L_\mu$ (in m) and $\tau$ (in dB) for (a) $M = (1,1)$ and (b) $M = (4,4)$ for T4 (the average secondary user).}
\label{numres:NPvsRhoset4_alpha_2point4_4point2_R_4000_1byMus_Inf_0_tau_Minus10_5_dBs}
\vspace{-.2in}
\end{figure}
\begin{figure}[ht!]
\vspace{-.2in}
\centering 
{\includegraphics[trim = 20 7 35 9.5, clip, scale=0.3899]{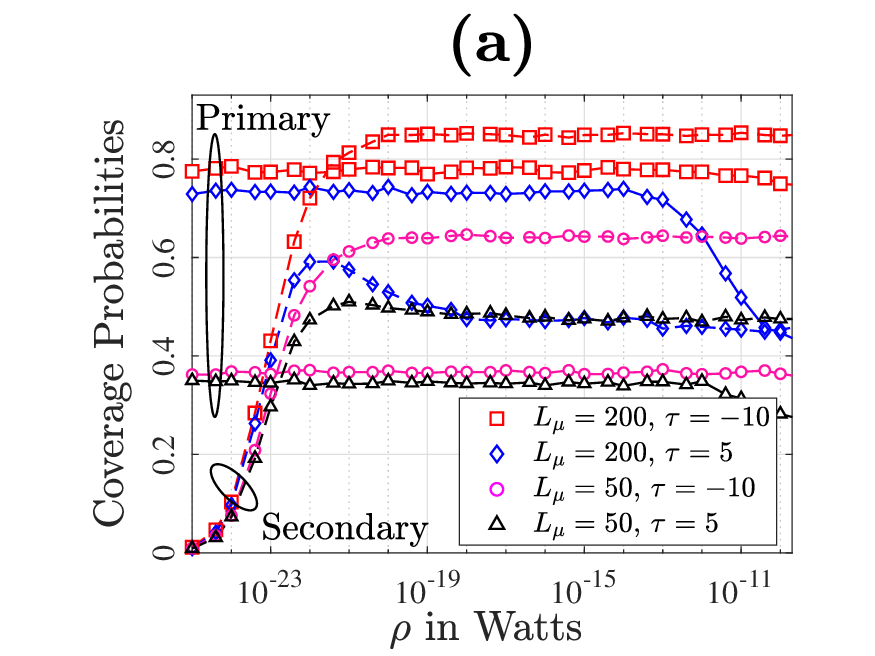}\label{numres:NPvsRhoset4_alpha_2point4_4point2_R_4000_1byMus_200_50_M_1_tau_Minus10_5_dBs}}
{\includegraphics[trim = 72 7 35 9.5, clip, scale=0.3899]{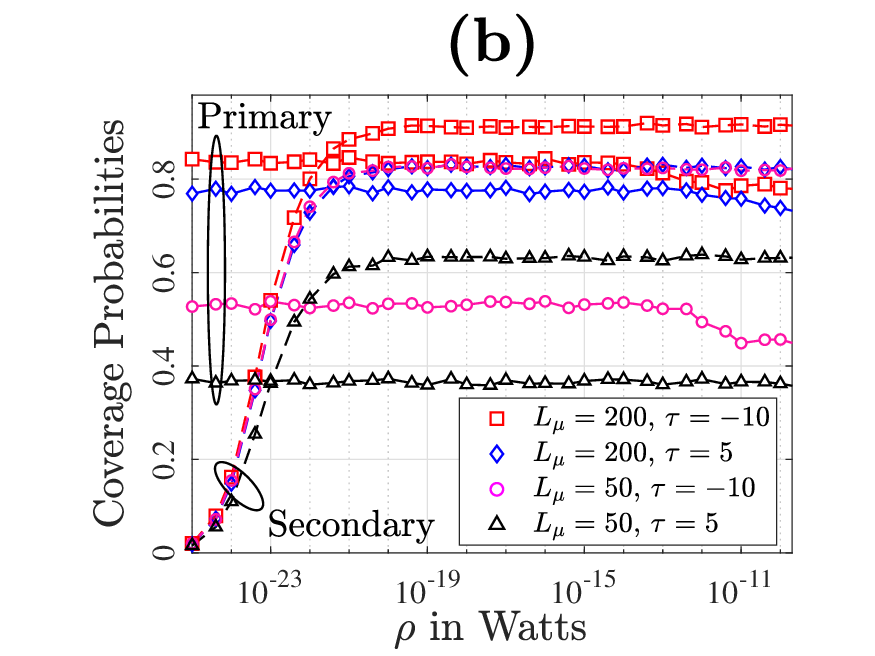} \label{numres:NPvsRhoset4_alpha_2point4_4point2_R_4000_1byMus_200_50_M_4_tau_Minus10_5_dBs}}
\caption{The variation of primary and secondary coverage probabilities with $\rho$ for different combinations of $L_\mu$ (in m) and $\tau$ (in dB) for (a) $M = (1,1)$ and (b) $M = (4,4)$ for T4 (the average secondary user).}
\label{numres:NPvsRhoset4_alpha_2point4_4point2_R_4000_1byMus_200_50_tau_Minus10_5_dBs}
\vspace{-.15in}
\end{figure}
\subsubsection{Role of directionality} 
\label{Section:Blockage:SinglePrimary:NumericalResults:ThresholdSelection:Directionality}
We observe that higher directionality plays a decisive role in improving the primary and secondary coverage performances. Along with the role of $\rho$ in limiting the secondary interference, we observe that if given enough directionality, a suitable value of $\rho$ can be obtained which not only preserve the primary link’s performance but also guarantee a reasonable secondary coverage.
\subsubsection{Role of blockage} 
\label{Section:Blockage:SinglePrimary:NumericalResults:ThresholdSelection:Blockage}
We observe that a larger average LOS distance $L_\mu$, in general, results in better primary as well as secondary coverage (within an appropriate range of interference-threshold $\rho$). Further, an increase in $L_\mu$ ({\em i.e.,} an increase in LOS probability) not only improves the serving signal, but can also increase interference, leading to a trade-off. However, as shown in Example \ref{Example:Blockage:SinglePrimary:MAP}, due to transmit restriction, the secondary interference is restricted from increasing when $L_\mu$ is increased, leading to overall improvement in primary coverage with $L_\mu$. Since such protection is not available for secondary links, it may be possible that an increase in $L_\mu$ can reduce the secondary coverage due to increased interference. This is especially evident at high values of $\rho$ where secondary activity is large. This behaviour can be seen in Figs. \ref{numres:NPvsRhoset4_alpha_2point4_4point2_R_4000_1byMus_Inf_0_tau_Minus10_5_dBs}(a) and \ref{numres:NPvsRhoset4_alpha_2point4_4point2_R_4000_1byMus_200_50_tau_Minus10_5_dBs}(a) while moving from $L_\mu = 200$ m to $L_\mu = \infty$ for the case with $\tau = 5$ dB and $M = 1$. However, directionality can help reduce the interference, resulting in a performance increase with $L_\mu$ for $M = 4$. Further, at the low value of $\rho$, where secondary activity is not large, the secondary interference is not a crucial concern. For example, in Fig. \ref{numres:NPvsRhoset4_alpha_2point4_4point2_R_4000_1byMus_Inf_0_tau_Minus10_5_dBs}, observe the LOS only scenario ($L_\mu \to \infty$) where secondary performance degraded due to low activity. Thus, blockages are usually beneficial for secondary links.
\subsubsection{Role of $\rho$} 
\label{Section:Blockage:SinglePrimary:NumericalResults:ThresholdSelection:Rho}
Usually, an increase in $\rho$ provides more transmission opportunities to secondary links, leading to an increase in secondary coverage. However, for some scenarios (see  Figs. \ref{numres:NPvsRhoset4_alpha_2point4_4point2_R_4000_1byMus_Inf_0_tau_Minus10_5_dBs}(a) and \ref{numres:NPvsRhoset4_alpha_2point4_4point2_R_4000_1byMus_200_50_tau_Minus10_5_dBs}(a)), it can also increase the secondary interference leading to an overall reduction in secondary coverage, especially at the high value of SINR threshold $\tau$. However, higher antenna directionality $M$ can easily mitigate this reduction in coverage (see Figs. \ref{numres:NPvsRhoset4_alpha_2point4_4point2_R_4000_1byMus_Inf_0_tau_Minus10_5_dBs}(b) and \ref{numres:NPvsRhoset4_alpha_2point4_4point2_R_4000_1byMus_200_50_tau_Minus10_5_dBs}(b)). A low $\rho$ ensures better primary coverage but restricts more secondary devices, whereas a high $\rho$ allows higher secondary MAPs at the expense of degrading the primary performance. Also, note that an increased secondary activity beyond a certain value can also reduce the secondary performance due to increased secondary interference.

\subsection{Design of spatially-aware SLS: Selection of $\rho$}\label{Section:Blockage:SinglePrimary:NumericalResults:ThresholdSelection}
As discussed above, an optimal value of $\rho$ (let us denote it by $\rho^\dagger$) is required to balance the trade-off that allows an appropriate number of active secondary devices without affecting the primary performance significantly. In absence of blockages, the way of finding $\rho^\dagger$ is to fix $p_{\mathrm{cp}}$ and $p_{\mathrm{cs}}$ and select the minimum value of $\rho$ as  $\rho^\dagger$ that allows this. Due to monotonic behaviour of $p_{\mathrm{cp}}$ and $p_{\mathrm{cs}}$ with $\rho$, it occurs when $p_{\mathrm{cs}}$) saturates. However, this method is no longer efficient in the presence of blockages because of the presence of a peak in secondary's performance before reaching saturation. For example, Figs. \ref{numres:NPvsRhoset4_alpha_2point4_4point2_R_4000_1byMus_Inf_0_tau_Minus10_5_dBs}(a) and \ref{numres:NPvsRhoset4_alpha_2point4_4point2_R_4000_1byMus_200_50_tau_Minus10_5_dBs}(a)) show a sharp degradation in $p_{\mathrm{cs}}$ {\em e.g.} when $L_\mu \to \infty$ and $L_\mu = 200$ m, with further increase in $\rho$ even when the value of $p_{\mathrm{cp}}$  is fixed. 
\begin{table}[!b]
\vspace*{0.068in}
\centering
\caption{The appropriate value of $\rho^\dagger$ satisfying $\{p_{\mathrm{cp}} > 70 \%, p_{\mathrm{cs}} > 50 \%\}$ criteria  for the different secondary links.
Prefixes p, f, a and y stand for pico ($10^{-12}$), femto ($10^{-15}$), atto ($10^{-18}$) and yocto ($10^{-24}$), respectively.}
\label{Table:Blockage:SinglePrimary:SuitableRho}
\resizebox{0.55\textwidth}{!}{
\vspace*{-1.0in}
\begin{tabular}{|c|c|c|cccc|}
\hline
\multirow{2}{*}{Type}  & \multirow{2}{*}{$L_\mu$} & \multirow{2}{*}{$\tau^\star$ (in dB)} & \multicolumn{4}{c|}{$\rho^\dagger$} \\ \cline{4-7} 
&&& \multicolumn{1}{c|}{$M_{uv} = 1$} & \multicolumn{1}{c|}{$M_{uv} = 2$} & \multicolumn{1}{c|}{$M_{uv} = 4$} & $M_{uv} = 8$ \\ \hline
\multirow{4}{*}{$T1$}  & $\infty$ & $\tau \leq -7$ & \multicolumn{1}{c|}{$0.12$ pW} & \multicolumn{1}{c|}{$45.60$ fW} & \multicolumn{1}{c|}{$34.90$ fW}   & $31$ fW  \\ \cline{2-7}   
& $200$ m & $\tau \leq -3$ & \multicolumn{1}{c|}{$0.19$ pW} & \multicolumn{1}{c|}{$37.10$ fW}   & \multicolumn{1}{c|}{$25.10$ fW}   & $21.70$ fW  \\ \cline{2-7} 
&$50$ m & $\tau \leq -15$ & \multicolumn{1}{c|}{-} & \multicolumn{1}{c|}{-}   & \multicolumn{1}{c|}{$3.88$ aW}   & $3.02$ aW  \\ \cline{2-7} 
& $0$ & $\tau \leq -10$ & \multicolumn{1}{c|}{-} & \multicolumn{1}{c|}{-} & \multicolumn{1}{c|}{-}  & $1.87$ aW \\ \hline
\multirow{4}{*}{$T2$} & $\infty$ & $\tau \leq -3$ & \multicolumn{1}{c|}{$0.72$ pW}   & \multicolumn{1}{c|}{$0.25$ pW}   & \multicolumn{1}{c|}{$0.18$ pW}   & $0.15$ pW  \\ \cline{2-7} 
& $200$ m & $\tau \leq -3$ & \multicolumn{1}{c|}{$0.47$ pW}   & \multicolumn{1}{c|}{$0.21$ pW}   & \multicolumn{1}{c|}{$0.15$ pW}   & $0.12$ pW \\ \cline{2-7} 
&$50$ m & $\tau \leq -15$ & \multicolumn{1}{c|}{-}   & \multicolumn{1}{c|}{-}   & \multicolumn{1}{c|}{$27.40$ fW}   & $9.40$ fW  \\ \cline{2-7} 
& $0$ & $\tau \leq -10$ & \multicolumn{1}{c|}{-}   & \multicolumn{1}{c|}{-}   & \multicolumn{1}{c|}{-}   & \multicolumn{1}{c|}{$36.39$ aW}    \\ \hline
\multirow{4}{*}{$T3$} & $\infty$ & $\tau \leq -10$ & \multicolumn{1}{c|}{$86.20$ fW} & \multicolumn{1}{c|}{$0.67$ pW} & \multicolumn{1}{c|}{$55.60$ fW} & $41.10$ fW \\ \cline{2-7} 
& $200$ m & $\tau \leq -10$ & \multicolumn{1}{c|}{$63.10$ fW}   & \multicolumn{1}{c|}{$2.07$ pW}   & \multicolumn{1}{c|}{$42.70$ fW}   & $28.90$ fW  \\ \cline{2-7} 
& $50$ m & $\tau \leq -15$ & \multicolumn{1}{c|}{-}   & \multicolumn{1}{c|}{-}   & \multicolumn{1}{c|}{$4.14$ fW}   & $16$ aW  \\ \cline{2-7} 
& $0$ & $\tau \leq -10$ & \multicolumn{1}{c|}{-}   & \multicolumn{1}{c|}{-}   & \multicolumn{1}{c|}{-}   & $3.09$ aW \\ \hline
\multirow{4}{*}{$T4$} & $\infty$ & $\tau \leq 0$ & \multicolumn{1}{c|}{$12.60$ aW} & \multicolumn{1}{c|}{$9.20$ aW} & \multicolumn{1}{c|}{$6.80$ aW} & $5.20$ aW \\ \cline{2-7} 
&$200$ m & $\tau \leq 7$ & \multicolumn{1}{c|}{$1.8 - 45.7$ yW}   & \multicolumn{1}{c|}{$0.86$ yW}   & \multicolumn{1}{c|}{$0.59$ yW}   & $0.43$ yW  \\ \cline{2-7} 
& $50$ m & $\tau \leq -15$ & \multicolumn{1}{c|}{-}   & \multicolumn{1}{c|}{-}   & \multicolumn{1}{c|}{$0.41$ yW}   & $0.33$ yW  \\ \cline{2-7} 
& $0$ & $\tau \leq -10$ & \multicolumn{1}{c|}{-}   & \multicolumn{1}{c|}{-}   & \multicolumn{1}{c|}{-}  & $0.39$ yW \\ \hline
\end{tabular}}
\vspace*{-.17in}
\end{table}
Thus, the presence of blockages changed a fundamental behaviour that a milder transmission restriction (higher $\rho$) on secondary network does not always ensure better secondary coverage. As discussed, $L_\mu$ and $M$ are important factors in determining  $\rho^\dagger$. Therefore, guidelines for the selection of suitable $\rho^\dagger$ can be modified as: {\em for some $\prim^\star, \, \seco^\star$ with $0 \leq \{\prim^\star, \seco^\star\} \leq 1$, $\mu^\star$ and $\tau^\star$, find}
\begin{align}
\!\!\!\!
\rho^\dagger =  \min \rho  &\ \ \mathrm{s.t.}~  p_{\mathrm{cp}} (\tau^\star\!, \mu^\star\!, \rho) \geq \prim^\star \, ; \, p_{\mathrm{cs}} (\tau^\star\!, \mu^\star\!, \rho) \geq  \seco^\star. 
\!\!
\label{eq:OptCondition} 
\end{align}

Note that if $\tau^\star$ exists, the condition in \eqref{eq:OptCondition} is also true for all $\tau \leq \tau^\star$. For the exposition of the system's behaviour, Table \ref{Table:Blockage:SinglePrimary:SuitableRho} lists the solution for \eqref{eq:OptCondition} for some sets of parameter values.

\subsection{Impact of antenna directionality $M_{uv}$} \label{Section:Blockage:SinglePrimary:NumericalResults:Directionality}
From Figs. \ref{numres:NPvsRhoset4_alpha_2point4_4point2_R_4000_1byMus_Inf_0_tau_Minus10_5_dBs} and \ref{numres:NPvsRhoset4_alpha_2point4_4point2_R_4000_1byMus_200_50_tau_Minus10_5_dBs}, we can observe that directionality improves the feasibility of implementing SLS even in the presence of high blockage losses. For example, for $L_\mu = 50$ m, $\tau \leq -10$ dB and $\prim^\star = \seco^\star = 0.5$, no feasible value of $\rho^{\dagger}$ exists under omni SLS (see Fig. \ref{numres:NPvsRhoset4_alpha_2point4_4point2_R_4000_1byMus_200_50_tau_Minus10_5_dBs}). However, we can easily satisfy the constraint with $M_{uv} = 4$. From Table \ref{Table:Blockage:SinglePrimary:SuitableRho}, we can observe that on average, higher directionality ($M_{uv}$) and smaller average-LOS distance ($L_\mu$) allow a higher restriction (smaller $\rho^\dagger$) to be put on secondary links while providing the same performance guarantee for primary and secondary links. We fix $\rho = 1$ pW unless stated otherwise.
\subsubsection{On the primary coverage}
\label{Section:Blockage:SinglePrimary:NumericalResults:Directionality:PrimaryCov}
Fig. \ref{numres:Blockage:SinglePrimary:PPwithM} shows the variation of primary coverage with  $\tau$ under omni and spatially-aware SLS with $\rho = 1$ pW for different values of $L_\mu$. 
\begin{figure}[b!]
\vspace{-.15in}
\centering
{\includegraphics[trim = 20 10 35 40, clip, scale=0.3899]{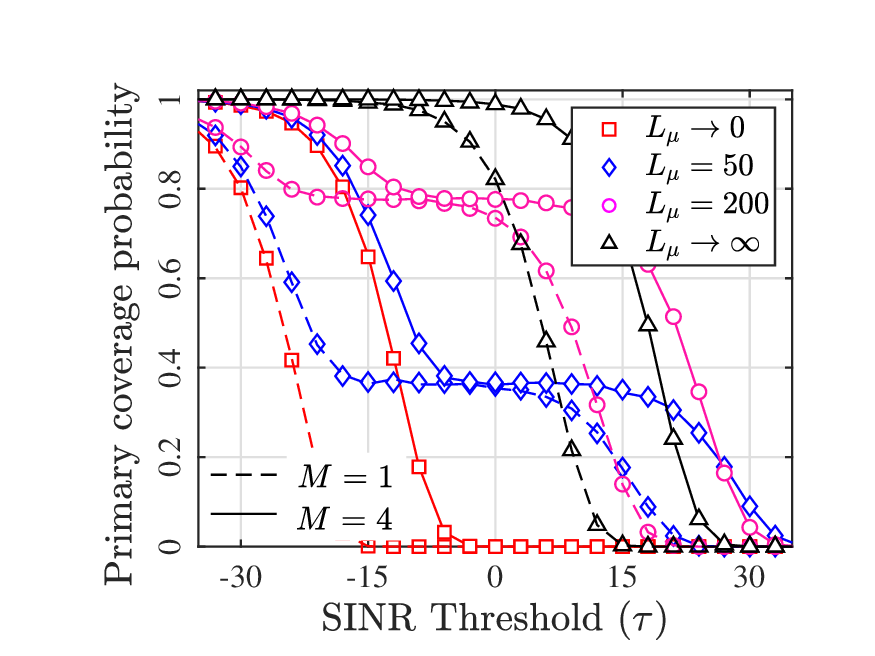} \label{numres:PP_1byMu_alpha_2point4_4point2_Rho_1eMinus12_R_4000} }
\caption{The variation of  primary link performance with SINR threshold $\tau$ (in dB) at different values of $L_\mu$ for $\rho = 1$ pW with $\lambda_\seco = 8 \times 10^{-5}$ /m$^2$.}
\label{numres:Blockage:SinglePrimary:PPwithM}
\end{figure}
First, we note that the blockages can change the fundamental shape of the coverage curve. For the case when both LOS/NLOS links are present ({\em i.e.,} $L_\mu = 50$ m, $200$ m), there exists a plateau region surrounded by two falls on each side of the graph. These left and right falls correspond to coverage of primary link being in NLOS and LOS state, respectively, while the plateau region (of approximate $21$ dB stretch) represents the negligible number of NLOS links that can provide coverage in this range of $\tau$. This plateau is absent in ZBL ($L_\mu \to \infty$) and HBL ($L_\mu \to 0$) scenarios, and there is only one fall. Second, we notice that blockages can improve the primary coverage by reducing the secondary interference; however, only up to a certain severity. Increasing blockage severity beyond this may render the primary link NLOS, degrading its performance. Hence, moderate blockage is favourable. Further, we observe that the antenna directionality improves coverage as seen by a consistent positive shift of $12$ dB, as we go from $M \!=\! 1$ (omni) to $4$ ($\phi = 30.25^\circ$). We note that the shift may not be visible in the plateau region present when $L_\mu = 50$ or $200$ m. While the directionality improves the performance of both LOS and NLOS links, it cannot change the link state. Hence, we see no gain between $M \!= \!1$ and $M \!=\! 4$ curves for an approximate $3$ dB range, which can be justified as follows. At $\tau = -15$ and $0$ dB, the primary coverage with $M = 1$ is mainly contributed by the LOS link, as the NLOS primary link has negligible probability to give SINR above these thresholds. Even with $M$ increased to $4$, an NLOS primary link cannot still be improved beyond $0$ dB. On the other hand, $M = 4$ can help it go above $-15$ dB showing, an increase in $p_\mathrm{cp}$ for $\tau=-15$ dB. 

\subsubsection{On the secondary coverage performance} 
\label{Section:Blockage:SinglePrimary:NumericalResults:Directionality:SecondaryCov}
Figs. \ref{numres:Blockage:SinglePrimary:SPwithM}(a)-(d) shows the variation of secondary coverage with $\tau$ under the omni and directional cases for all four types of secondary users (see Fig. \ref{fig:Blockage:SinglePrimary:SystemModel-Secondary}(b)), where we observe the following interesting trends. 
\begin{figure}[b!]
\vspace*{-0.24in}
\centering
{\includegraphics[trim = 20 8 35 0, clip, scale=0.32889]{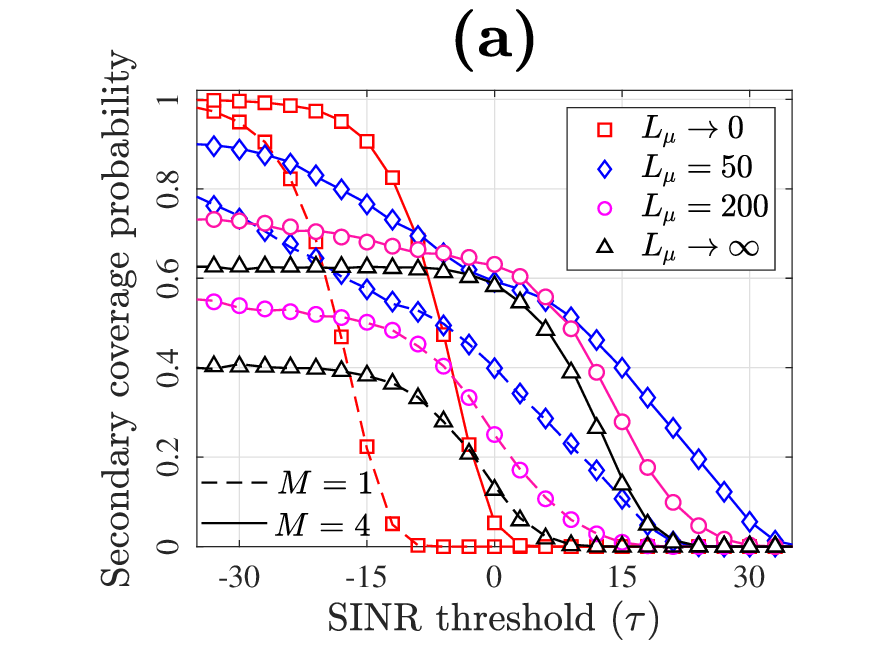} \label{numres:SPset1_1byMu_alpha_2point4_4point2_Rho_1eMinus12_R_4000}}
{\includegraphics[trim = 92 8 35 0, clip, scale=0.32889]{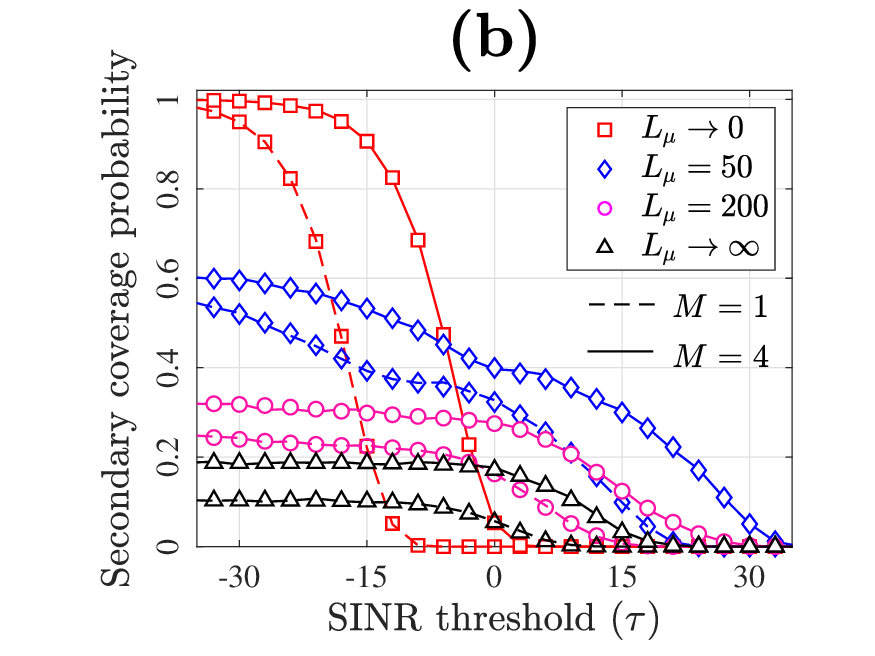} \label{numres:SPset2_1byMu_alpha_2point4_4point2_Rho_1eMinus12_R_4000}}
{\includegraphics[trim = 92 10 35 0, clip, scale=0.32889]{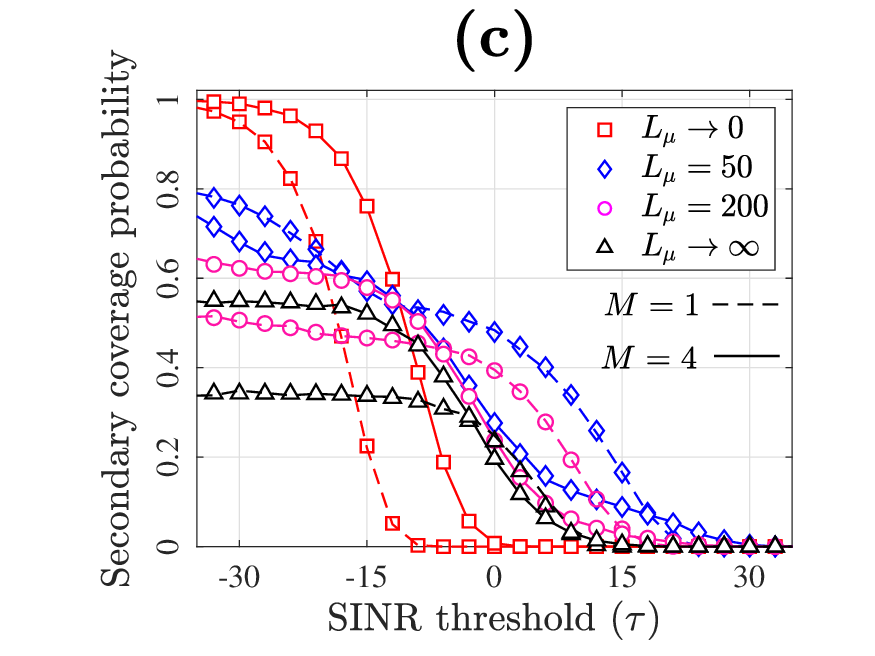} \label{numres:SPset3_1byMu_alpha_2point4_4point2_Rho_1eMinus12_R_4000}}
{\includegraphics[trim = 92 10 35 0, clip, scale=0.32889]{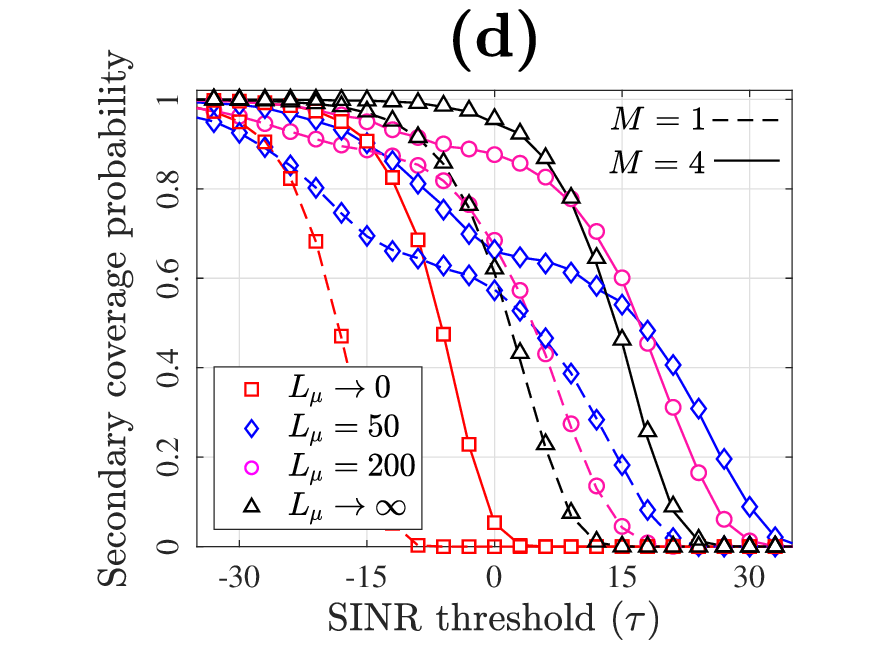} 
\label{numres:SPset4_1byMu_alpha_2point4_4point2_Rho_1eMinus12_R_4000}}
\caption{The variation of the secondary link performance with SINR threshold $\tau$ (in dB) at different values of  $L_\mu$ (in m) for (a) T1, (b) T2, (c) T3, (d) T4.}
\label{numres:Blockage:SinglePrimary:SPwithM}
\end{figure}
First, the impact of directionality and blockages differs significantly for these users, highlighting the role of its location/orientation in its performance. For example, as $M$ is changed from $1$ to $4$, a positive shift in $p_\mathrm{cs}$ is observed for T1 and T2 users  (see Figs. \ref{numres:Blockage:SinglePrimary:SPwithM}(a) and (b)) while, for the T3 user, it does not always improve $p_\mathrm{cs}$. In particular, except for the case with $L_\mu \to 0$, both negative and positive shifts are observed in $p_\mathrm{cs}$ depending on $\tau$. For example, with $L_\mu = 200$m and $L_\mu \to \infty$, a positive shift is observed at low $\tau$ while a negative shift is observed at high $\tau$ for the T3 user (see Fig. \ref{numres:Blockage:SinglePrimary:SPwithM}(c)). Intuitively, this occurs mainly because the directionality increases the activity of other secondary transmitters, leading to an increase in the secondary interference at the unprotected secondary user. Typically, if the secondary user is close to the primary receiver with a similar orientation, it enjoys the protection offered to the primary receiver, as seen in T1 and T2. The specific configuration of T3 causes a mismatch, resulting in a loss of protection region and hence, an elevated level of secondary interference at itself. Fig. \ref{numres:Blockage:SinglePrimary:SPwithM}(d) shows the variation of the typical secondary user's coverage. Here, we observe a positive shift in $p_\mathrm{cs}$ with higher antenna directionality, irrespective of the values of $L_\mu$, which represents an average effect. Thus, we can conclude that higher antenna directionality improves the coverage of both the primary link and the typical secondary link, even in the presence of blockage. It is noteworthy that, similar to primary coverage,  blockages can also improve the secondary coverage; however, only up to a certain severity. Increasing blockage severity beyond this may degrade the coverage. Hence, moderate blockage is favourable for both primary and secondary links. 

\subsection{Impact of secondary density}\label{Section:Blockage:SinglePrimary:NumericalResults:Density}
Fig. \ref{numres:Blockage:SinglePrimary:PPwithMnew} shows the variation of primary and secondary coverage with $\tau$ for a ten-fold higher secondary density of $8 \times 10^{-4}$ /m$^2$. A significant negative shift of $-9$ dB in $p_\mathrm{cp}$ is observed (see Fig. \ref{numres:Blockage:SinglePrimary:PPwithMnew}(a)) in comparison to Fig. \ref{numres:Blockage:SinglePrimary:PPwithM}, which is due to the threshold $\rho = 1$ pW not being able to limit the secondary interference on the primary for this value of $\lambda_\seco$. However, if we decrease $\rho$ to $1$ fW to make the SLS restriction harsher, we can reduce the secondary interference to negate this degradation as shown in Fig. \ref{numres:Blockage:SinglePrimary:PPwithMnew}(b). Similarly, Fig. \ref{numres:Blockage:SinglePrimary:SPwithM_T4new}(c) shows a significant degradation in $p_\mathrm{cs}$ of the typical secondary user in comparison to Fig. \ref{numres:Blockage:SinglePrimary:SPwithM}(d). However, decreasing $\rho$  from $1$ pW to $1$ fW is not able to negate this degradation, as shown in Fig. \ref{numres:Blockage:SinglePrimary:SPwithM_T4new}(d), as a lower $\rho$ also restricts secondary opportunities and hence, its coverage.
\begin{figure}[ht!]
\vspace*{-0.178in}
\centering
{\includegraphics[trim = 20 8 35 0, clip,scale=0.32799]{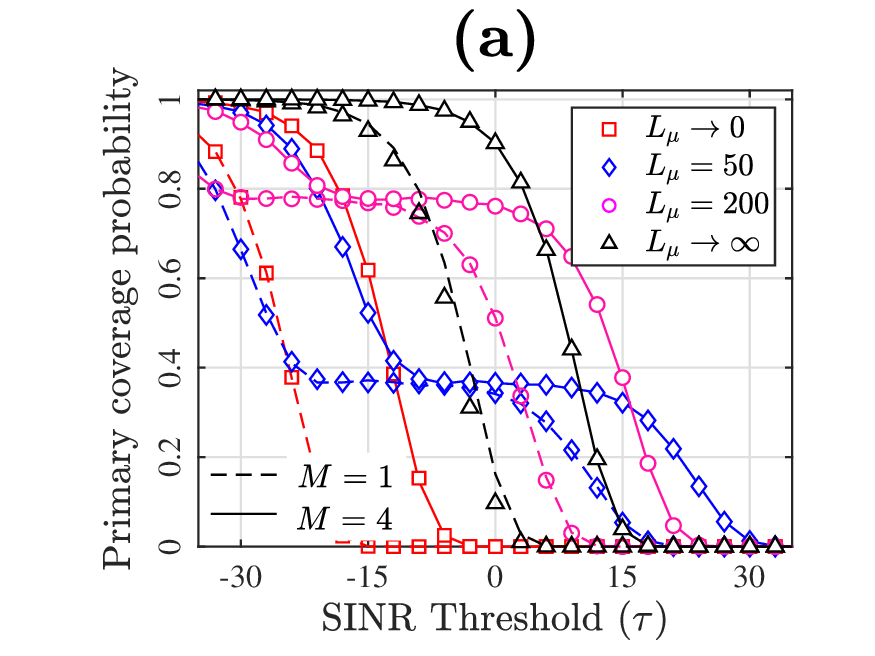} \label{numres:PP_Lambda_Multiplied_By_10_1byMu_alpha_2point4_4point2_Rho_1eMinus12_R_4000}}
{\includegraphics[trim = 72 8 38 0, clip,scale=0.32799]{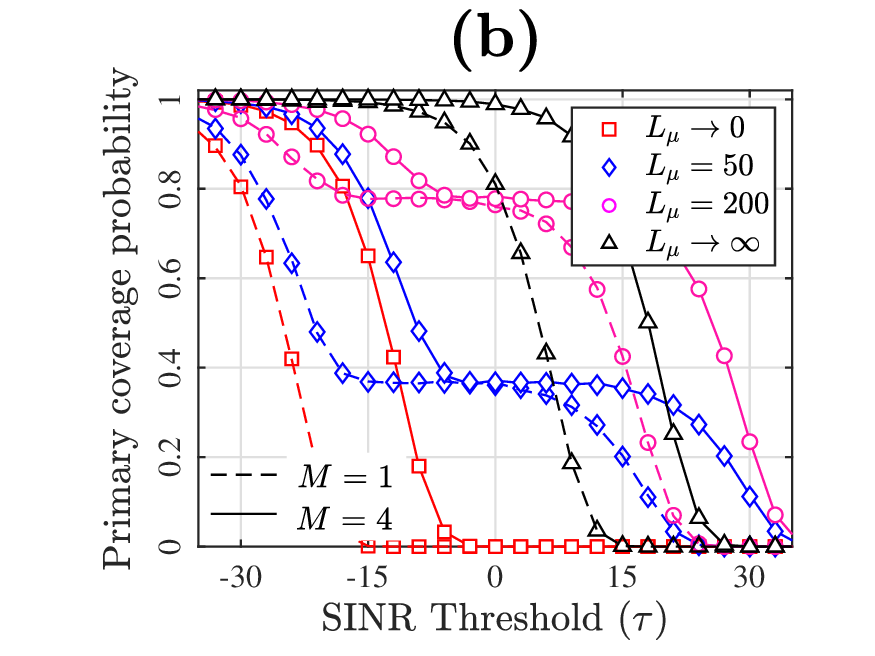} \label{numres:PP_Lambda_By_4_1byMu_alpha_2point4_4point2_Rho_1eMinus12_R_4000}} 
{\includegraphics[trim = 30 10 35 0, clip,scale=0.32799]{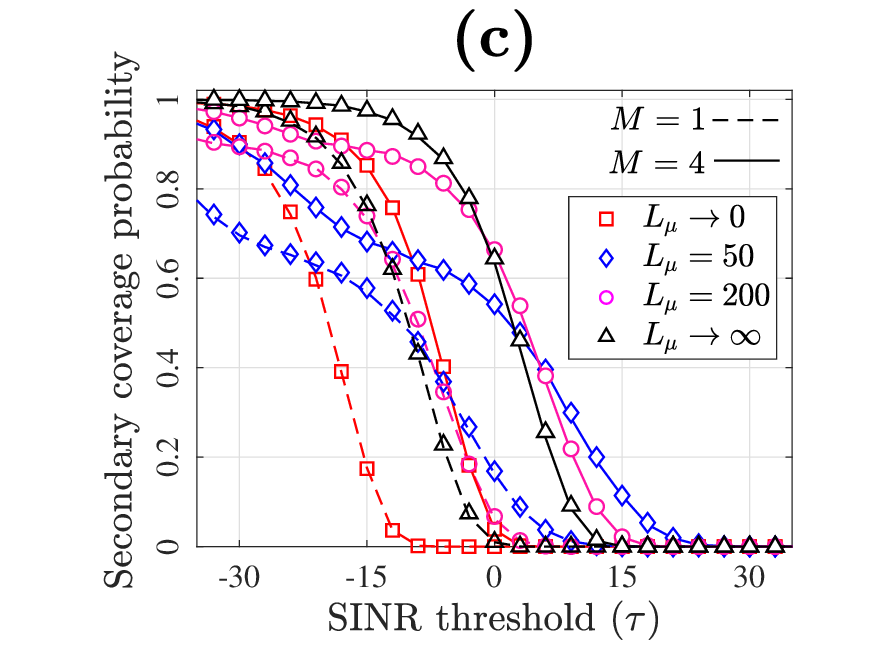} \label{numres:SPset4_Lambda_Multiplied_By_10_1byMu_alpha_2point4_4point2_Rho_1eMinus12_R_4000}}
{\includegraphics[trim = 72 10 35 0, clip,scale=0.32799]{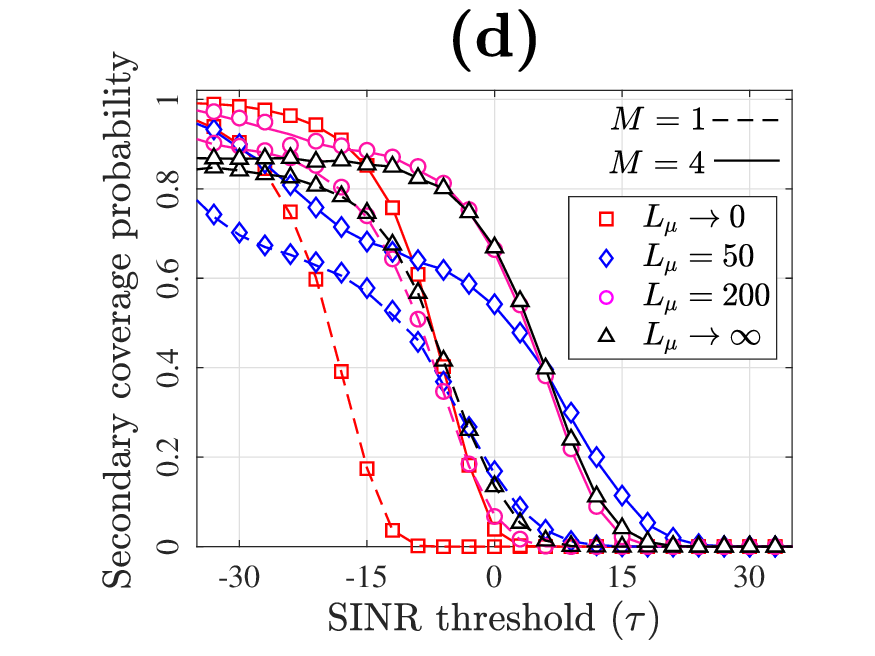} \label{numres:SP_Lambda_By_4_1byMu_alpha_2point4_4point2_Rho_1eMinus12_R_4000}} 
\caption{The variation of primary and secondary link performances with SINR threshold $\tau$ (in dB) at different values of average LOS distance $L_\mu$ (in meters) for $\lambda_\seco = 8 \times 10^{-4}$ /m$^2$ with (a) $\rho = 1$ pW and (b) $\rho = 1$ fW.}
\label{numres:Blockage:SinglePrimary:PPwithMnew}
\label{numres:Blockage:SinglePrimary:SPwithM_T4new}
\vspace{-.35in}
\end{figure}

\subsection{Impact of blockages $L_\mu$} \label{Section:Blockage:SinglePrimary:NumericalResults:BlockageParameter}
Figs. \ref{numres:Blockage:SinglePrimary:PPwithM} - \ref{numres:Blockage:SinglePrimary:SPwithM_T4new} show the impact of $L_\mu$ on the shapes of coverage curves. Specifically, for $L_\mu = 50$ m and $200$ m, primary coverage has two falls separated by a plateau region, which is absent for LOS and NLOS only cases, as explained in Section \ref{Section:Blockage:SinglePrimary:NumericalResults:Directionality}.1. Further, for $L_\mu = 50$ m, the coverage curves of the secondary user T3  for $M = 1$ and $M = 4$ intersect each other at multiple points, as shown in Fig. \ref{numres:Blockage:SinglePrimary:SPwithM}(c). Therefore, we now carefully investigate the role of $L_\mu$.
\subsubsection{On the primary coverage} 
\label{Section:Blockage:SinglePrimary:NumericalResults:BlockageParameter:PrimaryCov}
Fig. \ref{numres:Blockage:SinglePrimary:PPwithMu} shows the variation of $p_\mathrm{cp}$ with $L_\mu$ at different values $\tau$. Note that increasing $L_\mu$ (i) improves $p_\mathrm{cp}$ due to the higher LOS probability of the primary link and (ii)  increases secondary interference due to a larger number of LOS secondary transmitters. A smaller $\rho$ does not allow secondary interference to increase; hence, the first effect dominates, resulting in an improvement in $p_\mathrm{cp}$ with $L_\mu$. When $\rho$ is large, both effects are active, and we see a trade-off. 
\begin{figure}[ht!]
\vspace{-0.15in}
\centering
{\includegraphics[trim = 20 6 29 7, clip,scale=0.3799]{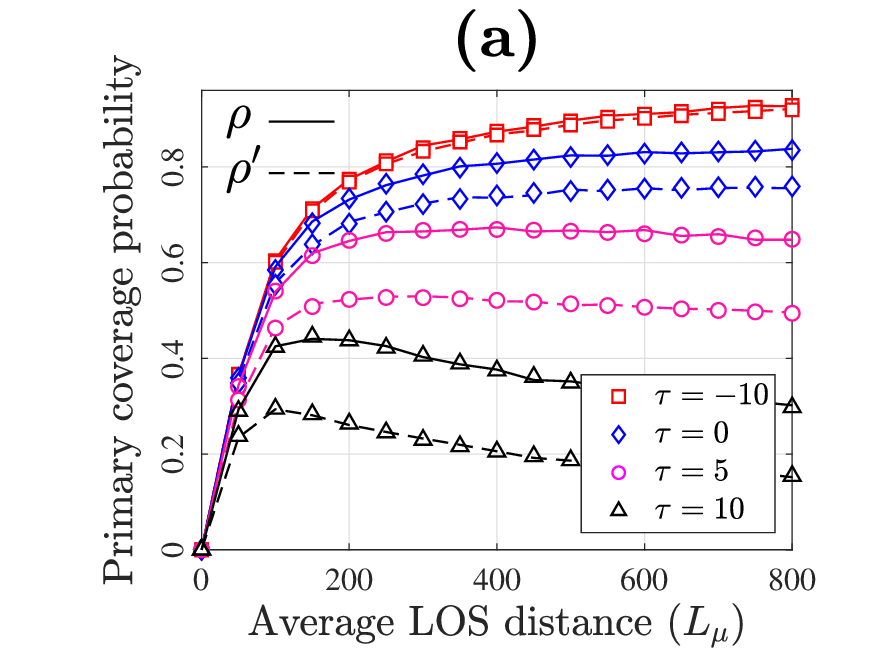} \label{numres:PPvs1byMu_alpha_2point4_4point2_Rho_1eMinus11_1eMinus12_M_1} } 
{\includegraphics[trim = 72 6 29 7, clip,scale=0.3799]{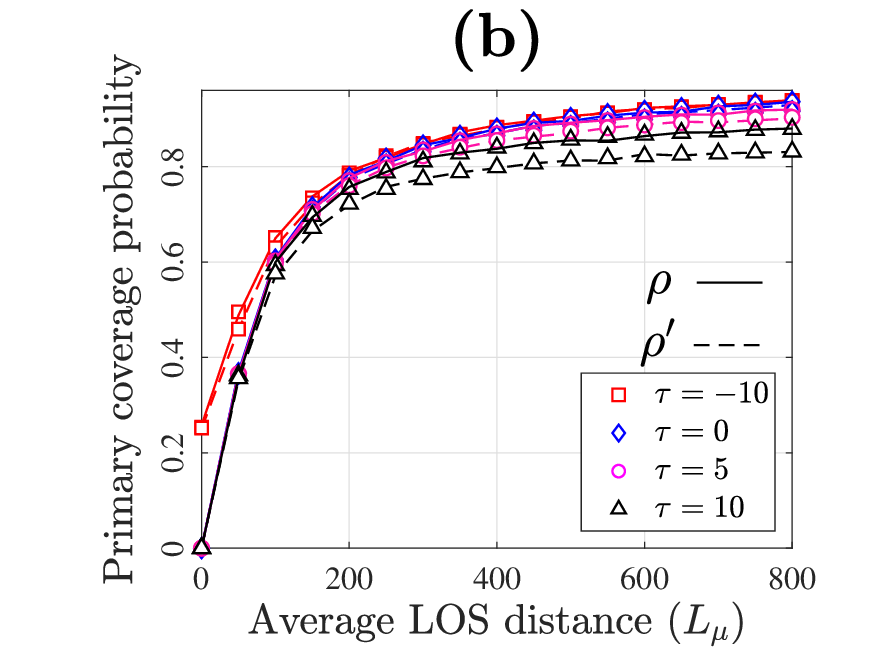} \label{numres:PPvs1byMu_alpha_2point4_4point2_Rho_1eMinus11_1eMinus12_M_4} }
\caption{The variation of primary link performance with average LOS distance $L_\mu = 1/\mu$ (in meters) at different values of SINR threshold $\tau$ (in dB) by varying the primary-transmit-protection threshold $\rho$ with (a) $M = (1,\, 1)$ and (b) $M = (4,\, 4)$. Here, $\rho = 1$ pico-watts and $\rho{'} = 10$ pico-watts.}
\label{numres:Blockage:SinglePrimary:PPwithMu}
\vspace{-.15in}
\end{figure}
Initially, the $p_\mathrm{cp}$ improves due to the first effect. However, beyond a certain value of $L_\mu$, the primary LOS probability does not increase much while the secondary interference becomes significantly high, reducing $p_\mathrm{cp}$ overall. Further, coverage at high $\tau$ is mostly due to the LOS primary link with NLOS secondary interference. Hence, the trade-off is more prominent in cases with high $\tau$ and $\rho$. Also, since directionality improves the primary link's serving power and reduces the effect of secondary interference, a consistent improvement in $p_\mathrm{cp}$ is observed with $L_\mu$ (see Fig. \ref{numres:Blockage:SinglePrimary:PPwithMu}(b)).
\subsubsection{On the secondary coverage performance}
\label{Section:Blockage:SinglePrimary:NumericalResults:BlockageParameter:SecondaryCov} 
Fig. \ref{numres:Blockage:SinglePrimary:SPwithMu} shows the variation of $p_\mathrm{cp}$ with $L_\mu$ at different values of $\tau$ for all four types of secondary users. We observe similar trends as seen in $p_\mathrm{cp}$ in Fig \ref{numres:Blockage:SinglePrimary:PPwithMu}. We also observe a sharp rise in $p_\mathrm{cs}$ until $L_\mu \approx 50$ m, followed by gradual degradation with increasing $L_\mu$. 
\begin{figure}[ht!]
\vspace*{-0.20in}
\centering
{\includegraphics[trim = 20 5 15 0, clip,scale=0.30285]{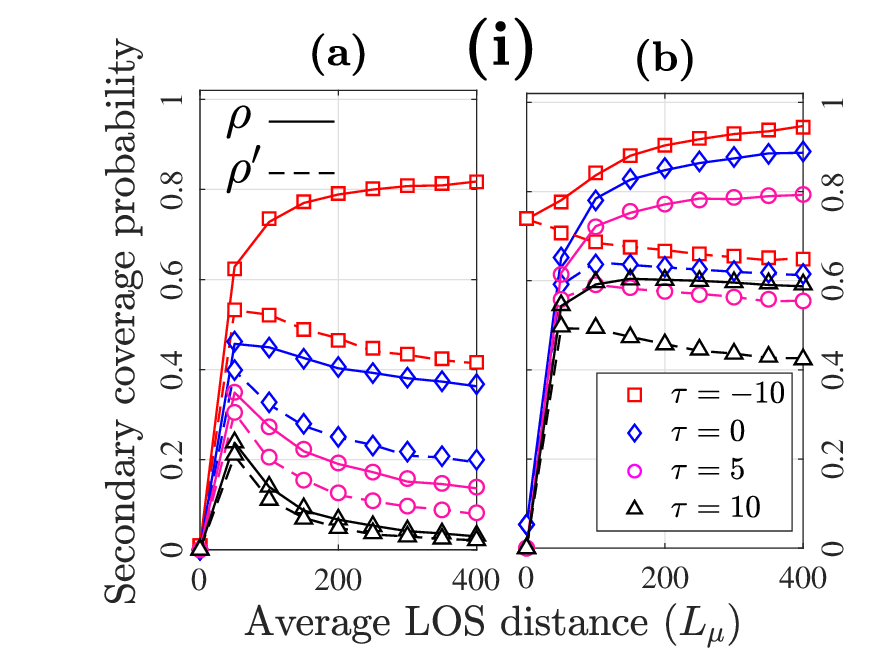} \label{numres:SPset1_1byMu_alpha_2point4_4point2_Rho_1eMinus11_1eMinus12_R_4000_M_1_4} } 
{\includegraphics[trim = 72 5 20 0, clip,scale=0.30285]{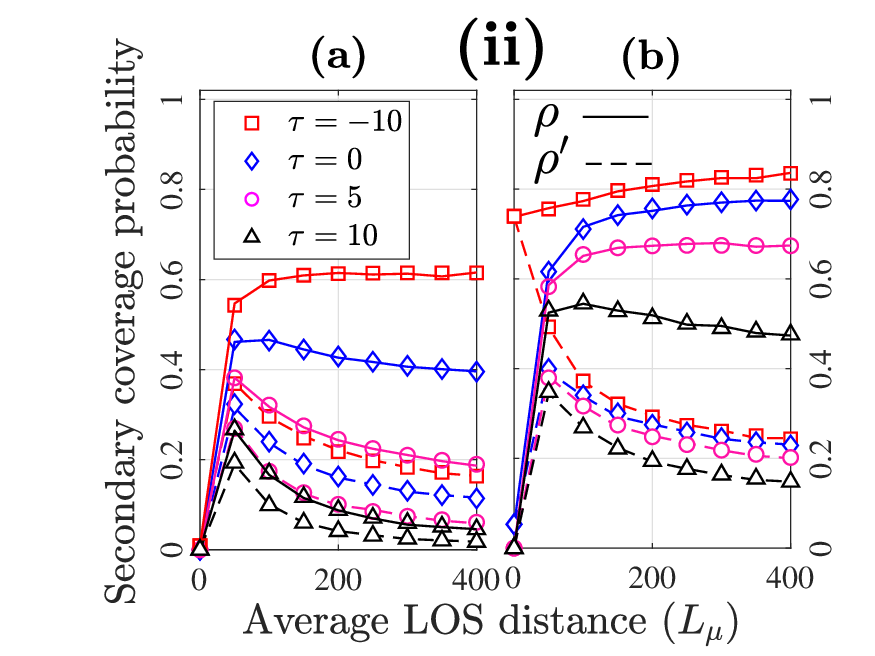} \label{numres:SPset2_1byMu_alpha_2point4_4point2_Rho_1eMinus11_1eMinus12_R_4000_M_1_4} }
{\includegraphics[trim = 30 5 15 0, clip,scale=0.30285]{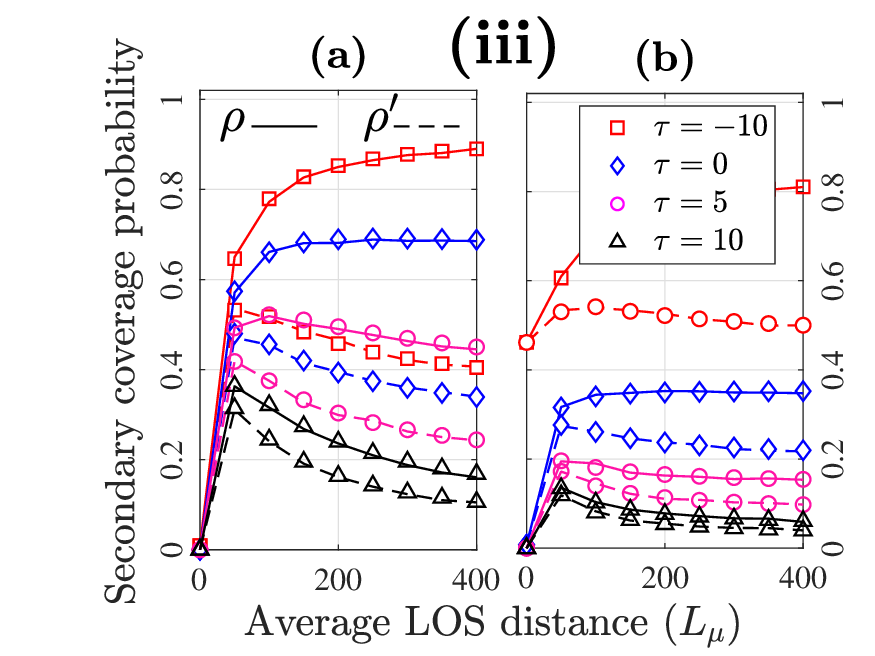} \label{numres:SPset3_1byMu_alpha_2point4_4point2_Rho_1eMinus11_1eMinus12_R_4000_M_1_4} } 
{\includegraphics[trim = 72 5 15 0, clip,scale=0.30285]{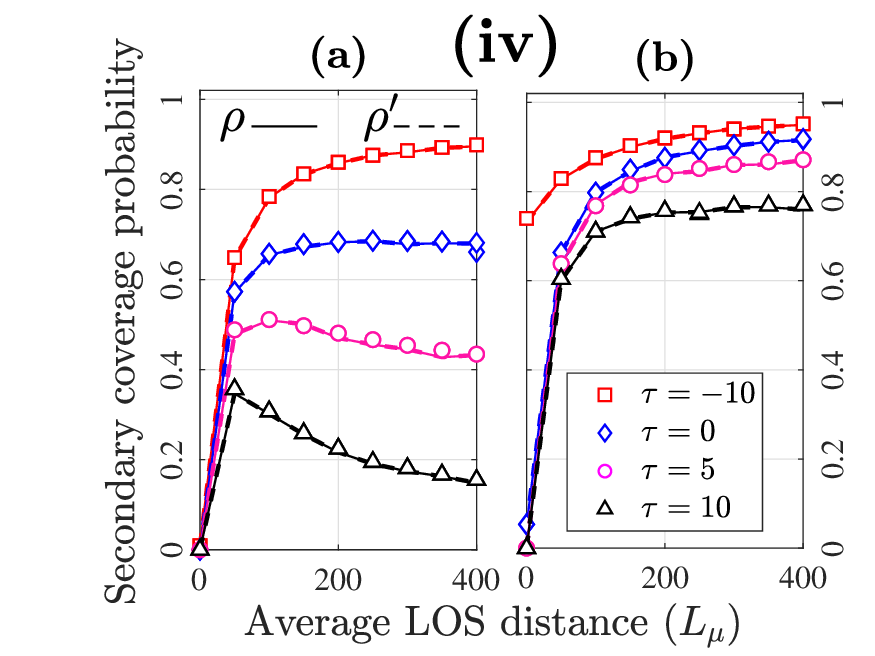} \label{numres:SPset4_1byMu_alpha_2point4_4point2_Rho_1eMinus11_1eMinus12_R_4000_M_1_4} }
\caption{Secondary performance vs  $L_\mu$ (in m) at different values of $\tau$ (in dB) by varying $\rho$ with (a) $M = (1,\, 1)$ and (b) $M = (4,\, 4)$ for (i) T1 (ii) T2 (iii) T3 and (iv) T4  secondary users. Here, $\rho = 10$ pW and $\rho{'} = 1$ pW.}
\label{numres:Blockage:SinglePrimary:SPwithMu}
\vspace{-.15in}
\end{figure}
First note that for the given value of $\lambda_\seco$, the average distance of the closest secondary transmitter from the primary receiver is $R_\seco = 1/2\sqrt{\lambda_\seco} = 55$ m. With $L_\mu$, the secondary link LOS probability improves. As $L_\mu < R_\seco$, most of the secondary transmitters are NLOS, causing a low level of interference. Hence $p_\mathrm{cs}$ improves. However, as $L_\mu$ increases beyond $R_\seco$, LOS secondary transmitters can degrade $p_\mathrm{cs}$.

\section{Conclusions}\label{Section:Blockage:SinglePrimary:Conclusion}
In this work, we proposed an analytical framework to study and design a spatially-aware SLS in a mmWave network consisting of a  primary and multiple secondary links. The work resulted in many valuable insights. We show that the presence of blockages can fundamentally change the shape of AF and coverage curves due to the presence of dual-slop path-loss. We find that blockages improve transmission opportunities significantly, which may increase the secondary interference. On the other hand, blockages can also render links NLOS, reducing the individual interference. Hence, increasing blockages results in a trade-off between the secondary transmitters' activity and their interference. We show that directionality can reduce the secondary interference, offering a better trade-off. Further, we showed that due to transmit restriction offering protection to the primary, the secondary interference does not grow large at the primary user, while this cannot be said about secondary users. We saw that while relaxed restriction degrades primary coverage, it can improve as well as degrade secondary coverage due to the lack of any protection offered to it against mutual secondary interference. We also show that directionality and blockages can improve the feasibility of SLS and establish the benefits of spatially-aware SLS. Overall, moderately severe blockages can provide improvement to both primary and secondary links. 

\appendices

\section{} \label{thrm:proof:Blockage:AF}
Applying Campbell’s theorem on \eqref{eq:Blockage:SinglePrimary:AF:Definition} and using \eqref{eq:Blockage:SinglePrimary:MAP}, we get
\begin{align*}
\eta_{\mathrm{s}} &= \frac{e^{-p}}{\pi R^2} \int_0^{2\pi} 
\!\!
\mathbb{E}_{\omega_\seco} \Big[  \int_0^R e^{ - \mu x_\seco - \frac{\kappa_T  x_\seco^{ \alpha_T}}{\kappa_\mathrm{0}}}  x_\seco \dd x_\seco \Big] \dd \theta_\seco.
\end{align*}

Substituting $\exp(-\mu x_\seco) \!\!\! = \!\!\! \sum_{n = 0}^\infty {(- \mu x_\seco)^n}/{(n!)}$ with $t = \rho x_\seco^{\alpha_T}/(p_\seco C_T \gpr{\theta_\seco} \gst{\theta_\seco - \pi - \omega_\seco})$ and using the definition of $\psi_T(m, u)$ in \eqref{eq:Blockage:SinglePrimary:AF:PsiFunction} will give the desired result.

\section{} \label{thrm:proof:Blockage:AF-sector}
From \eqref{eq:Blockage:SinglePrimary:AF}, let us define $f \! = \!\! \int_0^{2\pi} \! \mathbb{E}_{\omega_\seco} [ \psi_T \left(n + 2, \, {\kappa_T}/{\kappa_\mathrm{0}}\right)] \dd \theta_\seco$. Substituting  $\kappa_\mathrm{0} = g_{\mathrm{pr}} \left( \theta_\seco \right) g_{\mathrm{st}} \left( \theta_\seco - \pi - \omega_\seco \right)$ will give
\begin{align}
&f 
\! = \! 2 \, \mathbb{E}_{\omega_\seco} 
\!\! 
\left[ \int_0^{{\phi_{\mathrm{pr}}}/{2}} 
\!\!\!\!\!\!
\psi_T \Big(n + 2, \, \frac{\kappa_T}{\apr g_{\mathrm{st}} (\theta_\seco - \pi - \omega_\seco)}\Big) 
\dd \theta_\seco 
\! + \!\!
\int_{{\phi_{\mathrm{pr}}}/{2}}^{\pi}  
\!\!\!\!\!\!
\psi_T \Big(n + 2, \, \frac{\kappa_T}{\bpr g_{\mathrm{st}} (\theta_\seco - \pi - \omega_\seco)}\Big) 
\right]
\!\! 
\dd \theta_\seco
\label{eq:Blockage:SinglePrimary:AF:Sector-1}
\end{align}

\noindent Since $ \omega_{\mathrm{s}} \in [-\pi,\ \pi] $,  the secondary transmission gain 
$ g_{\mathrm{st}} ( \theta_\seco - \pi - \omega_\seco) $ takes values $\ast$ and $\bst$ with probabilities $\qst$ and $ (1 - \qst)$, respectively. 
Using values of $\mathcal{Q}_i$ and $\mathcal{G}_i$ in \eqref{eq:Blockage:SinglePrimary:AF:Sector-1} gives $f = 2\pi \sum_{i = 1:4} \mathcal{Q}_i$ $\psi_T (n + 2, {\kappa_T}/{\mathcal{G}_i})$. Substituting $f$ in \eqref{eq:Blockage:SinglePrimary:AF} will give desired result.

\section{} \label{thrm:proof:Blockage:PrimaryCov}
From \eqref{eq:Blockage:SinglePrimary:SINR:PrimaryLink}, coverage $p_\mathrm{cp}(\SThres, \mu, \rho)$ of the primary link is
\begin{align}
p_\mathrm{cp}(\SThres, \mu,\rho) &\overset{(a)}{=} 
\sum\nolimits_{T} 
p_T (r_\prim) 
\, 
\exp \left(- s_T \sigma^2\right) \mathcal{L}_{ I_\seco } \left( s_T \right). 
\label{eq:Blockage:SinglePrimary:PrimaryCov-1}
\end{align}
where $s_T = \tau r_\prim^{\alpha_T} / (C_T p_\prim g_\mathrm{pt}(0) g_\mathrm{pr}(0))$ with $T = \{\los,\nlos\}$ and $\mathcal{L}_{ I_\seco}(s)= \mathbb{E} \left[ e^{ - s I_{ \mathrm{s} }  \left( \Phi_{\seco} \right)} \right]$ is the Laplace transform (LT) of $I_\seco$. Here, $(a)$ is due to $H_{\prim\mathrm{0}} \sim \exp(1)$. Now 
\begin{align}
\mathcal{L}_{ I_\seco } \left( s_T \right) 
&= \mathbb{E} \left[ e^{- s_T 
\sum\nolimits_{\X_{\mathrm{s}i} \in \Phi_{\seco}}
p_\seco \kappa_{i\mathrm{0}} U_{i} G_{i\mathrm{0}} \ell_{T_{\seco i \to \prim\mathrm{0}}} (x_{\seco i})} \right] \nonumber \\
&\ 
\overset{(b)}{=} e^{ - \lambda_\seco \int_0^{2\pi} \int_0^\infty \left( 1 -  \mathbb{E} \left[  e^{- s_T p_\seco \kappa_\mathrm{0} U G_\mathrm{0} \ell_{T_{\seco \to \prim\mathrm{0}}} (x_\seco)} \right] \right) x_\seco \dd x_\seco \dd \theta_\seco},
\label{eq:Blockage:SinglePrimary:PrimaryCov-6}
\end{align}
where $\kappa_\mathrm{0} \! = \! g_{\mathrm{pr}} \left( \theta_\seco \right) g_{\mathrm{st}} \left( \theta_\seco - \pi - \omega_\seco \right)$. Here, $(b)$ is due to  PGFL of PPP $\Phi_{\seco}$ \cite{andrews2023introduction}. Note that the expectation $\mathbb{E} \left[\cdot\right]$ is with respect to four RVs $\ell_{T_{\seco \to \prim\mathrm{0}}}$, $U$, $G_\mathrm{0}$ and $\omega_\seco$. Since ${T_{\seco \to \prim\mathrm{0}}}$ and $U$ are Bernoulli RVs, we can write 
\begin{align}
\mathcal{L}_{ I_\seco } \left( s_T \right) &= e^{\! - \lambda_\seco 
\! 
\int\limits_0^{2\pi} 
\!
\int\limits_0^\infty 
\! 
\big(\! 1 -  \mathbb{E}_{G_\mathrm{0}, \omega_\seco} \big[
\!
\sum\limits_{T_1} p_T (x_\seco) \, e^{\!\!- s_T \chi_{T_1} G_\mathrm{0} \mathbb{I} ( G_\mathrm{0} < {\rho}/{\chi_{T_1}})} \big] \big) x_\seco \dd x_\seco \dd \theta_\seco} \nonumber \\
&\overset{(c)}{=} e^{\!\! - \lambda_\seco 
\!\!
\int\limits_0^{2\pi} 
\!
\int\limits_0^\infty 
\!
\left(\! 1 \! - \!
\sum\limits_{T_1} p_{T_1} \! (x_\seco) \, 
\mathbb{E}_{\omega_\seco} 
\! 
\left[ e^{ \!\!\! - \frac{\rho}{\chi_{T_1}}} + \frac{ 1 - e^{\!\! -( \! s_T + {1}/{\chi_{T_1}} \!) \rho} }{1 + s_T \chi_{T_1}} 
\!
\right] 
\!
\right)
\!
x_\seco \dd x_\seco \dd \theta_\seco}, 
\!\!\!\!\!\!\!
\label{eq:Blockage:SinglePrimary:PrimaryCov-8}
\end{align}
where $\chi_{T_1} = p_\seco \kappa_\mathrm{0} C_{T_1} x_\seco^{-\alpha_{T_1}}$ for ${T_1} = \{\los, \nlos \}$ and $(c)$ is due to $G_\mathrm{0} \sim \exp(1)$. Substituting \eqref{eq:Blockage:SinglePrimary:PrimaryCov-8} in \eqref{eq:Blockage:SinglePrimary:PrimaryCov-1} along with the values of $s_T$, $\chi_{T_1}$ and $\kappa_\mathrm{0}$ gives 
\eqref{eq:Blockage:SinglePrimary:PrimaryCov}. Now, from \eqref{eq:Blockage:SinglePrimary:PrimaryCov}, we can write
\begin{align}
&p_\mathrm{cp}(\tau,\mu,\rho) \! = \!
\sum_{T} p_T (r_\prim) \, e^{\!\!- s_T \sigma^2 - \lambda_\seco \int_0^{2\pi} \mathbb{E}_{\omega_\seco}\left[ \mathrm{I_1} -  \mathrm{I_2} + \mathrm{I_3} \right] \dd \theta_\seco},
\label{eq:Blockage:SinglePrimary:PrimaryCov:Simplified-1}
\end{align}	
\vspace*{-0.25in}
\begin{align*}
\text{with } \qquad\qquad\qquad\qquad\qquad &\mathrm{I_1} = \!\!\int_0^\infty\!\! \bigg( 1 - \frac{ 1 - e^{- \big(s_T + \frac{x_\seco^{\alpha_\nlos}}{p_\seco\kappa_\mathrm{0} C_\nlos}\big) \rho} }{1 + s_T \frac{p_\seco\kappa_\mathrm{0} C_\nlos}{x_\seco^{\alpha_\nlos}}} - e^{- \frac{\rho x_\seco^{\alpha_\nlos}}{p_\seco\kappa_\mathrm{0} C_\nlos}} \bigg) x_\seco \mathrm{d}x_\seco, \qquad\qquad\qquad\qquad\nonumber\\
&\mathrm{I_2} 
= \int_0^\infty \!\!\! e^{-(\mu x_\seco + p)} \! \bigg[ \frac{ 1 - e^{- \big(s_T + \frac{x_\seco^{\alpha_\los}}{p_\seco\kappa_\mathrm{0} C_\los}\big)\rho} }{1 + s_T \frac{p_\seco\kappa_\mathrm{0} C_\los}{x_\seco^{\alpha_\los}}} + e^{- \frac{ \rho x_\seco^{\alpha_\los}}{p_\seco\kappa_\mathrm{0} C_\los}} \bigg] x_\seco \mathrm{d}x_\seco, \nonumber\\
&\mathrm{I_3} 
= \int_0^\infty \!\!\! e^{-(\mu x_\seco + p)} \! \bigg[ \frac{ 1 - e^{- \big(s_T + \frac{x_\seco^{\alpha_\nlos}}{p_\seco\kappa_\mathrm{0} C_\nlos}\big) \rho} }{1 + s_T \frac{p_\seco\kappa_\mathrm{0} C_\nlos}{x_\seco^{\alpha_\nlos}}} + e^{- \frac{\rho x_\seco^{\alpha_\nlos}}{p_\seco\kappa_\mathrm{0} C_\nlos}} \bigg] x_\seco \mathrm{d}x_\seco, \nonumber
\end{align*}
where first integral is simplified by using change of variables $A = s_T \rho$ and $C = \kappa_\nlos/\kappa_\mathrm{o}$ with $A + C x_\mathrm{s}^{\alpha_{T_1}} = A y$ to give $\mathrm{I_1} = \mathcal{N} ( {2}/{\alpha_\nlos}, s_T \rho) \mathsf{n}_\mathrm{3} ({2}/{\alpha_\nlos})$. Similarly, the second and the third integrals are simplified by using Taylor's series expansion for $\exp(-\mu x_\seco)$ to give $\mathrm{I_2} = - e^{-p} \sum_{n = 0}^\infty [(- \mu)^{n}/n!] \mathcal{N} ((n + 2)/\alpha_\los, s_T \rho ) \mathsf{n}_\mathrm{3} ((n + 2)/\alpha_\los)$ and $\mathrm{I_3} = - e^{-p} \sum_{n = 0}^\infty [(- \mu)^{n}/n!] \mathcal{N} ((n + 2)/\alpha_\nlos, s_T \rho ) \mathsf{n}_\mathrm{3} ((n + 2)/\alpha_\nlos)$. Substituting these values of $\mathrm{I_1}$, $\mathrm{I_2}$ and $\mathrm{I_3}$ in \eqref{eq:Blockage:SinglePrimary:PrimaryCov:Simplified-1} gives \eqref{eq:Blockage:SinglePrimary:PrimaryCov:Simplified} (see supplementary \cite{TripGupTheoremFile2025} for detailed proof).

\section{} \label{thrm:proof:Blockage:PrimaryCov-sector}
Using \eqref{eq:GainApproximation}, we get
\begin{align}
\mathsf{n}_\mathrm{3} ({m}/{\alpha_{T_1}}) \! 
&\overset{(a)}{=} 2 \apr^{m/\alpha_{T_1}} 
\!\!
\textstyle\int_0^{\phi_{\mathrm{pr}}/2} 
\!
\big[\qst \ast^{m/\alpha_{T_1}} \! + (1 - \qst) \bst^{m/\alpha_{T_1}} \big] 
\mathrm{d} \theta_\seco  
\nonumber \\
&\qquad
\textstyle+ 2 \bpr^{m/\alpha_{T_1}} 
\!\!
\int_{\phi_{\mathrm{pr}}/2}^{\pi}  
\!
\big[\qst \ast^{m/\alpha_{T_1}} + (1 - \qst) \bst^{m/\alpha_{T_1}} \big]  
\mathrm{d}\theta_\seco  
\nonumber \\
&= 2 \pi \sum\nolimits_{i = 1:4} \mathcal{Q}_i \mathcal{G}_i^{m/\alpha_{T_1}},
\label{eq:Blockage:SinglePrimary:PrimaryCov-sector-1}
\end{align} 
where $(a)$ is due to $ g_{\mathrm{pr}} \left( \theta_\seco \right)$ being $\apr$ and $ \bpr $ when $ \lvert \theta_\seco \rvert \leq \phi_{\mathrm{pr}}/2 $ and $ \lvert \theta_\seco \rvert > \phi_{\mathrm{pr}}/2 $, respectively with 
$ \omega_{\mathrm{s}} \in [0,\ 2\pi] $ and $ \gst{ \theta_\seco - \pi - \omega_\seco} $ being $\ast$ and $\bst$ with probabilities $\qst$ and $ \left( 1 - \qst \right)$, respectively.

\section{} \label{thrm:proof:Blockage:SecondaryCov}
From \eqref{eq:Blockage:SinglePrimary:SINR:SecondaryLink}, SINR coverage of the typical secondary link is 
\begin{align}
p_\mathrm{cs}(\SThres, \mu, \rho) &\overset{(a)}{=} \mathbb{P} \left[ F_{\seco0} >  \frac{ \tau \left(\sigma^2 + I_\prim +  I_\seco \left(\Phi_\seco \right)\right)}{ p_\seco \gst{0}\gsr{0} \ell_{T_{\seco\mathrm{0}}} (r_\seco) } 
\right]  p^{'}_{\mathrm{m0}} \nonumber \\
&\ 
\overset{(b)}{=} p^{'}_{\mathrm{m0}} \, \mathbb{E} \left[\exp \left( - \frac{\tau \left(\sigma^2 + I_\prim +  I_\seco \left(\Phi_\seco \right)\right)}{ p_\seco \gst{0}\gsr{0} \ell_{T_{\seco\mathrm{0}}} (r_\seco) } \right) \right] \nonumber \\ 
&\ 
\overset{(c)}{=} p^{'}_{\mathrm{m0}} 
\!
\sum\nolimits_{T = \{\los,\nlos\}} p_T (r_\seco) \,
e^{- s^{'}_T \sigma^2 } \mathcal{L}_{I_\prim} \left(s_T\right) 
\mathcal{L}_{I_\seco} \left( s_T \right), 
\label{eq:Blockage:SinglePrimary:SecondaryCov-1}
\end{align}
where $s^{'}_T = \tau r_\seco^{ \alpha_T}/C_T p_\seco \gst{0} \gsr{0}$. Here $\mathcal{L}_{ I_\prim}$ and $\mathcal{L}_{ I_\seco}$ are the LTs of $I_\prim$ and $I_\seco$, respectively. Here, $(a)$ is due to $U^{'}_0 \sim \mathrm{Bernoulli}(p^{'}_{\mathrm{m}0})$, $(b)$ is due to $F_{\seco 0} \sim \exp(1)$ and $(c)$ is due to $\ell_{T_{\seco\mathrm{0}}} \sim \mathrm{Bernoulli}(p_\los (r_\seco))$. For $T_1 = \{\los,\nlos\}$ and $\bar{\kappa}_\mathrm{00} = \gsr{\delta_{\prim\mathrm{0}}}  \gpt{\delta_{\prim\mathrm{0}} - \pi - \omega_{\prim\mathrm{0}}}$, LT of ${ I_\prim}$ is
\begin{align}
&\mathcal{L}_{ I_\prim} (s^{'}_T) \overset{(d)}{=} \mathbb{E} \left[\fracS{ 1 }{ (1 + s^{'}_T p_\prim \bar{\kappa}_\mathrm{00} \ell_{T_{\prim\mathrm{0} \to \seco\mathrm{0}}} (x_{\prim\mathrm{0}}) )}\right] \nonumber \\
&\overset{(e)}{=} \sum_{T_1} \frac{p_{T_1} (x_{\prim\mathrm{0}})}{1 + \tau \frac{C_{T_1}}{C_T} \frac{r_{\seco}^{ \alpha_T}}{x_{\prim\mathrm{0}}^{ \alpha_{T_1}}} \frac{p_\prim}{p_\seco} \frac{\gsr{\delta_{\prim\mathrm{0}}}  \gpt{\delta_{\prim\mathrm{0}} - \pi - \omega_{\prim\mathrm{0}}}}{\gst{0} \gsr{0}}}.
\label{eq:Blockage:SinglePrimary:SecondaryCov-3}
\end{align} 
where $(d)$ is due to $G^{'}_{\mathrm{00}} \sim \exp(1)$ and $(e)$ is due to $\ell_{T_{\prim\mathrm{0} \to \seco\mathrm{0}}} \sim \mathrm{Bernoulli}(p_\los (x_{\prim\mathrm{0}}))$. Now LT of ${ I_\seco}$ is
\begin{align}
\mathcal{L}_{ I_\seco} (s^{'}_T) &\overset{(f)}{=} \mathbb{E}_{\Phi_\seco} \left[ \prod\nolimits_{ \X_{\seco i} \in \Phi_{\seco} } 
\!\!\!
\expU {- s^{'}_T p_\seco \kappa_{i\mathrm{0}} U^{'}_{i} F^{'}_{i\mathrm{0}} \ell_{T_{\seco i \to \seco\mathrm{0}}} (x_{\seco i}) } \right] \nonumber \\
&\overset{(g)}{=} e^{- \lambda_\seco \int_0^{2\pi} \!\! \int_0^\infty \left(1 - \mathbb{E} \left[ e^{ - s^{'}_T p_\seco \kappa_\mathrm{0} U^{'} \!\! F^{'}_\mathrm{0} \ell_{T_{\seco  \to \seco\mathrm{0}}} (x_\seco) } \right] \right)  x_\seco \dd x_\seco \dd \theta_\seco},
\end{align}
where $\kappa_\mathrm{0} = \gsr{ \theta_\seco} \gst{ \theta_\seco - \pi - \omega_\seco}$. Here, $(f)$ is due to Silvnyak theorem \cite{andrews2023introduction} and $(g)$ is due  PGFL of $\Phi_\seco$. If $\chi_{T_2} = p_\seco \kappa_\mathrm{0} C_{T_2} x_\seco^{-\alpha_{T_2}}$ for $T_2 = \{\los,\nlos\}$, we get
\begin{align}
\mathcal{L}_{ I_\seco} (s^{'}_T) &\overset{(h)}{=} e^{\!\! - \lambda_\seco 
\!
\int_0^{2\pi} 
\!\!
\int_0^\infty 
\!
\sum\limits_{T_2} p_{T_2} (x_\seco) \left( 1 - \mathbb{E} \left[ e^{\!\! - s^{'}_T \chi_{T_2} U^{'} \!\! F^{'}_\mathrm{0}} \right] \right) x_\seco \dd x_\seco \dd \theta_\seco}
\nonumber \\
&\overset{(i)}{=} e^{\!\! - \lambda_\seco 
\! 
\int_0^{2\pi} 
\!\!
\int_0^\infty 
\!
\sum_{T_2} p_{T_2} (x_\seco) 
\,
\mathbb{E}_{\omega_\seco} \! \left[\fracS{p^{'}_{\mathrm{m}} }{(1 + {1}/{s^{'}_T \chi_{T_2}})}\right] x_\seco \dd x_\seco \dd \theta_\seco},
\label{eq:Blockage:SinglePrimary:SecondaryCov-7}
\end{align}
where $(h)$ is due to $\ell_{T_{\seco \to \seco\mathrm{0}}} \sim \mathrm{Bernoulli}(p_\los (x_\seco))$ and $(i)$ is due to $U^{'} \sim \mathrm{Bernoulli}(p^{'}_\mathrm{m})$ and $F^{'}_{\mathrm{0}} \sim \exp(1)$. Substituting \eqref{eq:Blockage:SinglePrimary:SecondaryCov-3} and \eqref{eq:Blockage:SinglePrimary:SecondaryCov-7} in \eqref{eq:Blockage:SinglePrimary:SecondaryCov-1} along with values of $s^{'}_T$, $\chi_{T_2}$, $p^{'}_{\mathrm{m}}$ and $p^{'}_{\mathrm{m0}}$  and using change of variables $A^T_\mathrm{0}$, $A^{T_1}_\mathrm{10}$, $A^{T_2}_\mathrm{20}$, $A^{T_4}_\mathrm{s0}$ and $C^{T_3}_\mathrm{s0}$ will give the desired result (see supplementary \cite{TripGupTheoremFile2025} for detailed proof).

\bibliographystyle{IEEEtran}
\bibliography{IEEEabrv,PapersNameBlockage}
\end{document}